\author{
    \IEEEauthorblockN{
Alon~Kipnis and   
    John~C.~Duchi}

\thanks{
Alon Kipnis is with the School of Computer Science at Reichman University, Herzliya, Israel (alon.kipnis@idc.ac.il).} 
\thanks{J. Duchi is with the Department of Statistics and the Department of Electrical Engineering at Stanford University, Stanford, CA, 94035 (jduchi@stanford.edu).}
\thanks{Copyright (c) 2017 IEEE. Personal use of this material is permitted.  However, permission to use this material for any other purposes must be obtained from the IEEE by sending a request to pubs-permissions@ieee.org.}
\thanks{This paper was presented in part at the 55th Annual Allerton Conference on Communication, Control, and Computing (Allerton) \cite{KipnisAllerton2017}. }
\thanks{
The work of A. Kipnis was supported in part by funding from the NSF under Grant No.~DMS-1418362 and DMS-1407813, and by a fellowship from the Koret Foundation.}
}
\title{\LARGE \bf Mean Estimation from One-Bit Measurements}
\begin{document}
\graphicspath{{./Figs/}}
\maketitle

\begin{abstract}
  We consider the problem of estimating the mean of a symmetric log-concave
  distribution under the constraint that only a single bit per sample
  from this distribution is available to the estimator. We study the mean
  squared error as a function of the
  sample size (and hence the number of bits).
  We consider three settings: first, a centralized setting, where
  an encoder may release $n$ bits given a sample of size $n$, and for
  which there is no asymptotic penalty for quantization; second, an adaptive
  setting  in which each bit is a function of the current
  observation and previously recorded bits, where we show that
  the optimal relative
  efficiency compared to the sample mean is precisely
  the efficiency of the median; lastly, we show that in
  a distributed setting where each bit is only a function
  of a local sample, no estimator can achieve optimal efficiency
  uniformly over the parameter space.
  We additionally complement our results in the adaptive setting
  by showing that \emph{one} round of adaptivity is sufficient
  to achieve optimal mean-square error.
\end{abstract}



\section{Introduction}
\label{sec:Intro}

We consider estimation of parameters from data collected by multiple units
under communication constraints between the units.  Such scenarios arise in
sensor arrays, where sensor motes collect information, which they transmit
to a central estimation unit~\cite{LesserOrTa03,LiWoHuSa02}. More generally,
communication is substantially more expensive than computation in modern
computing infrastructure~\cite{FullerMi11}.  It is thus of interest to
understand the extent to which communication constraints induce fundamental
accuracy and efficiency limits in parametric estimation problems.
\begin{figure*}
  \begin{center}
\begin{tikzpicture}[node distance=2cm,auto]

  \node at (0,0) (source) {$X_1$} ;
 \node[below of = source, node distance = 1.5cm] (source2) {$X_2$};
\node[below of = source2, node distance = 2.1cm] (source3) {$X_n$};

\node[int1, right of = source2, node distance = 1.2cm] (enc2) {$\enc$};  

\draw[->,line width = 2pt] (source2) -- (enc2); 
\draw[->,line width = 2pt] (source) -| (enc2); 
\draw[->,line width = 2pt] (source3) -| (enc2); 

\node[below of = source2, node distance = 0.5cm] {$\vdots$};

\draw[->,line width = 2pt] (source2) -- (enc2); 
\node[int1, right of = enc2, node distance = 2.7cm ] (est) {$\est$};

\draw[->,line width = 2pt] (enc2) -- node[above, xshift = 0cm] (mes2) {\small $B_1,\ldots,B_n$} (est);   

\node[right of = est, node distance = 1.2cm] (dest) {\small ${\theta}_n$};
\draw[->, line width=1pt] (est) -- (dest);
\node at (2,-4.5) {(i) Centralized};
\end{tikzpicture}
\begin{tikzpicture}[node distance=2cm,auto]
  \node at (0,0) (source) {\small $X_1$} ;
  \node[int1, right of = source, node distance = 1.2cm] (enc1) {$\enc$};  
\draw[->,line width = 2pt] (source) -- (enc1); 

 \node[below of = source, node distance = 1.5cm] (source2) {\small $X_2$};
\node[int1, right of = source2, node distance = 1.2cm] (enc2) {$\enc$};  
\draw[->,line width = 2pt] (source2) -- (enc2); 

\node[below of = source2, node distance = 2.1cm] (source3) {\small $X_n$};
\node[int1, right of = source3, node distance = 1.2cm] (enc3) {$\enc$};  
\draw[->,line width = 2pt] (source3) -- (enc3); 

\node[below of = source2, node distance = 0.5cm] {$\vdots$};

\node[int1, right of = enc2, node distance = 3cm ] (est) {$\est$};
\draw[->,line width = 2pt] (enc1) -| node[above, xshift = -1cm] (mes1) {$B_1$} (est);   
\draw[->,line width = 2pt] (enc2) -- node[above, xshift = 0cm] (mes2) {$B_2$} (est);   

\draw[->] (enc1)+(0.7,0) -- +(0.7,-0.7) -| (enc2);

\draw[->,line width = 2pt] (enc3) -| (est);   


\draw[->,line width = 2pt] (enc3) -| node[above, xshift = -1cm]  {$B_n$} (est);   


\node[below right = 2.3cm and 0.7cm of 
enc1.center, yshift = 0.cm] (end) {$B_1,\ldots,B_{n-1}$};

\draw[->] (enc1)+(0.7,0) -- (end.west) -| (enc3);

\node[right of = est, node distance = 1.2cm] (dest) {${\theta}_n$};
\draw[->, line width=1pt] (est) -- (dest);
\node at (2,-4.5) {(ii) Adaptive};
\end{tikzpicture}
\begin{tikzpicture}[node distance=2cm,auto]
  \node at (0,0) (source) {$X_1$} ;
  \node[int1, right of = source, node distance = 1.2cm] (enc1) {$\enc$};  
\draw[->,line width = 2pt] (source) -- (enc1); 

 \node[below of = source, node distance = 1.5cm] (source2) {$X_2$};
\node[int1, right of = source2, node distance = 1.2cm] (enc2) {$\enc$};  
\draw[->,line width = 2pt] (source2) -- (enc2); 

\node[below of = source2, node distance = 2.1cm] (source3) {$X_n$};
\node[int1, right of = source3, node distance = 1.2cm] (enc3) {$\enc$};  
\draw[->,line width = 2pt] (source3) -- (enc3); 

\node[below of = source2, node distance = 0.5cm] {$\vdots$};

\node[int1, right of = enc2, node distance = 2.1cm ] (est) {$\est$};
\draw[->,line width = 2pt] (enc1) -| node[above, xshift = -1cm] (mes1) {$B_1$} (est);   

\draw[->,line width = 2pt] (enc2) -- node[above, xshift = 0cm] (mes2) {$B_2$} (est);   

\draw[->,line width = 2pt] (enc3) -| (est);   

\draw[->,line width = 2pt] (enc3) -| node[above, xshift = -1cm]  {$B_n$} (est);   

\node[right of = est, node distance = 1.2cm] (dest) {${\theta}_n$};
\draw[->, line width=1pt] (est) -- (dest);
\node at (2,-4.5) {(iii) Distributed};
\end{tikzpicture}
\end{center}
  \caption{\label{fig:setup} Three encoding settings: (i) Centralized -- an
    encoder sends $n$ bits after observing $n$ samples. (ii) Adaptive
    (sequential) -- the $i$th encoder sends the bit $B_i$ depending on its
    private sample $X_i$ and previous bits $B_1,\ldots,B_{i-1}$. (iii)
    Distributed -- each encoder send the bit $B_i$ based on its private
    sample $X_i$ only.}
\end{figure*}

We answer this question in a sylized version of this problem: the estimation
of the mean $\theta$ of a symmetric log-concave distribution under the
constraint that only a single bit can be communicated about each observation
from this distribution.
Different information sharing schemes strongly affect the
performance of estimators for $\theta$; we illustrate
the three main settings we consider in Figure~\ref{fig:setup}.
\begin{enumerate}[(i)]
\item \emph{Centralized} encoding: all $n$ encoders confer and produce a
  single message consists of $n$ bits.
 \item \emph{Adaptive} or \emph{sequential} encoding: The $n$th encoder
   observes the $n$th sample and the $n-1$ previous bits.
 \item \emph{Distributed} encoding: The $n$th message is only a function of
   the $n$th sample.
\end{enumerate}
The distributed setting~(iii) is the most restrictive; as it turns out,
(ii) is slightly more restrictive than the fully centralized setting~(i),
and in our setting, a variant of the adaptive setting~(ii)
in which there is only \emph{one} round of adaptivity---as we make formal
later---is enough to achieve the same efficiency as the fully sequential
setting~(ii).
Each setting has natural applications:
\begin{itemize}
\item {\bf Signal acquisition (i):} A quantity is measured $n$ times at
  different instances. The results are averaged in order to reduce
  measurement noise and the averaged result is then stored or communicated
  using $n$ bits.
\item {\bf Analog-to-digital conversion (ii):} A sigma-delta modulator (SDM)
  converts an analog signal into a sequence of bits by sampling the signal
  at a very high rate and then using one-bit threshold detector combined
  with a feedback loop to update an accumulated error state
  \cite{1092194}. Therefore, the expected error in tracking an analog signal
  using an SDM falls under our setting (ii) when we assume that the signal
  at the input to the modulator is a constant (direct current) corrupted by,
  say, thermal noise \cite{53738}. Since the sampling rates in SDM are
  usually many times more than the bandwidth of its input, analyzing SDM
  under a constant input provides meaningful lower bound even for
  non-constant signals.
\item {\bf Privacy (ii)--(iii):} A business entity is interested in
  estimating the average income of its clients. In order to keep this
  information as confidential as possible, each client independently
  provides an answer to a yes/no question related to its
  income~\cite{DuchiJoWa18}.
\end{itemize}

Let us provide an informal description of our results and setting. For an
estimator $\theta_n$ with finite quadratic risk (mean squared error (MSE)) $R_n =
\E_\theta[(\theta_n - \theta)^2]$, we study the limit
\begin{equation}
  \label{eq:ARE_def}
  \limsup_{n\to \infty} n R_n.
\end{equation}
By comparing this quantity to achievable rates of convergence without
communication constraints, we can evaluate the efficiency
losses---asymptotic relative efficiency---of the estimator to appropriately
optimal (unconstrained) estimators. (We shall be more formal in the sequel.)
By lower bounding the quantity~\eqref{eq:ARE_def}, we also provide limits on
estimation of single-bit-per-measurement constrained signals in more general
settings~\cite{baraniuk2017exponential, jacques2013robust, plan2013one,
  li2017channel, choi2016near}.


In setting (i), the estimator can evaluate any optimal estimator of location
(e.g., the sample mean if the data is Gaussian), then quantize it using
$n$ bits. As the accuracy in describing the empirical mean decreases
exponentially in the number of bits, the quantization error is negligible compared
to the statistical error in mean estimation~\cite{720540, cai2020distributed}. That is,
centralized encoding induces no asymptotic efficiency loss.
The story is different in settings (ii) and (iii). Precisely, we show that
in the adaptive setting~(ii), the optimal efficiency of a one-bit scheme is
(asypmtotically) precisely that of the sample median, and that this
efficiency is achievable. As a concrete example, when $X_i$ are
i.i.d.\ Gaussian, we necessarily lose a factor of $\pi/2 \approx 1.57$ in
the asymptotic risk; the one-bit constraint decreases the effective sample
size by a factor of $\pi/2$ compared to estimating it without the bit
constraint. It turns out that, in the settings we consider,
only a \emph{single round} of adaptivity (see Fig.~\ref{fig:one_round} for
an illustration) is sufficient to achieve optimal convergence rates.
In distinction from setting (ii), in setting (iii) when the messages must be
independent, there is no distributed estimation scheme that achieves the
efficiency of the sample median uniformly over $\theta$.  We establish this
result via Le Cam's local asyptotic normality theory, allowing us
to provide exact characterizations of the asymptotic efficiency
of suitably regular encoding schemes.


Our asymptotic setting is important in that it allows us to elide
difficulties present in finite sample settings. For example, in setting~(i),
developing an optimal quantizer at finite $n$ requires choosing a $2^n$
level scalar quantizer, which is non-trivial~\cite{gray1998quantization}.
In interactive and sequential settings (e.g.~(ii)), the situation is more
challenging, as it is unclear whether any type of compositionality applies,
in that an $n-1$-step optimal estimator may be only vaguely related to the
$n$-step optimal estimator. Thus, to provide our lower bounds, we rely on
stronger information-based inequalities, including the Van Trees
inequality~\cite{Tsybakov09} and Le Cam's local asymptotic normality
theory~\cite{LeCam86,LeCamYa00,VanDerVaart98}.

\subsection*{Related Work}

The many challenges of estimation under communication constraints have given
rise to a large literature investigating different aspects of constrained
estimation. While our setting---in which we observe a single bit per signal
$X_i$---is restrictive, it inspires substantial work.  Perhaps the most
related is that of Wong and Gray~\cite{53738}, who study one-bit
analog-to-digital conversion of a constant input corrupted by Gaussian noise
using a Sigma-Delta Modulator (SDM). They show almost sure convergence, but
provide no rate (and no rates follow from their analysis); in contrast, we
provide an optimal procedure and matching lower bound achieving risk
$\frac{\pi}{2} \sigma^2$ in the limit~\eqref{eq:ARE_def} when $X_i \simiid
\normal(\theta, \sigma^2)$. A growing literature on one-bit measurements in
high-dimensional problems \cite{baraniuk2017exponential, DavenportPlVaWo15,
  PlanVe13} shows how to reconstruct sparse signals, where Baraniuk et
al.~\cite{baraniuk2017exponential} show that in noiseless settings,
exponential decay in MSE is possible; our results make precise the penalty
for noise under one-bit sensing, showing that the error can decay (under
Gaussian noise) at best as $\frac{\pi}{2} \frac{\sigma^2}{n}$.

In fully distributed settings (iii), the challenges are different, and there
is also a substantial literature with one-bit (quantized)
measurements~\cite{904560,4244748, 6882252, chen2010performance, 5184907}.
We complement these results by providing precise lower bounds and optimality
results; previous performance bounds are suboptimal.  Work on the remote
multiterminal source coding problem, or CEO problem~\cite{berger1996ceo,
  viswanathan1997quadratic, oohama1998rate, prabhakaran2004rate}, provides
lower bounds on the MSE in setting~(iii); because of the somewhat distinct
setting, these bounds are looser than ours (which have optimal constants).
In settings more similar to our statistical estimation scenario---such as
estimation of parameters in a multi-dimensional linear model---a line of
work provides lower bounds on statistical
estimation~\cite{zhang2013information, duchi2014optimality, GargMaNg14,
  BravermanGaMaNgWo16, DBLP:journals/corr/abs-1802-08417,
  zhang1988estimation, han2018distributed, xu2017information}. These results
are finite sample and apply more broadly than ours, but as a consequence,
they have unusable constants, while our stylized model allows precise
identification of exact constants.  Work subsequent to the initial draft of
this paper~\cite{Barnes2018} uses an approach similar to ours---bounding
quantized Fisher information---to derive lower bounds on the error in
parametric estimation problems from quantized measurements in non-adaptive
settings.

Testing (and discrete estimation) problems also enjoy a robust literature,
though as a consequence of our results to come, the results for testing,
i.e., when the parameter space $\Theta$ is finite, are quite different from
those for estimation, as it is possible to construct optimal decision
(testing) rules in a completely distributed fashion. In this context, Longo
et al.~\cite{52470} propose procedures for distributed testing based on
optimizing a Bhattacharyya distance.
Tsitsiklis~\cite{tsitsiklis1988decentralized} shows that when the
cardinality of $\Theta$ is at most $M$ and the probability of error
criterion is used, then no more than $M(M-1)/2$ different detection rules
are necessary in order to attain probability of error with optimal exponent.
Moreover, in a distributed setting, feedback is unnecessary for optimal
testing/detection~\cite{5751320}, in strong distinction to the estimation
case we consider.

The remainder of this paper is organized as follows. In
Section~\ref{sec:problem} we describe the problem, notation, and
our basic assumptions. In
Section~\ref{sec:preliminary} we provide two simple bounds on the efficiency
and MSE. Our main results for the adaptive and distributed cases are given
in Sections~\ref{sec:sequential} and \ref{sec:distributed}, respectively. In
Section~\ref{sec:conclusions} we provide concluding remarks.


\section{Problem Formulation and Notation}
\label{sec:problem}

Let $f : \R \to \R_+$ be a symmetric and log-concave probability density,
which necessarily has finite second moment $\sigma^2$, and let $\Theta
\subset \R$ be closed and convex.  For $\theta \in \R$, let $P_\theta$ be
the probability distribution with density $f(x-\theta)$, so that $\theta$
indexes the location family $\{P_\theta\}_{\theta \in \Theta}$.  The
log-concavity and symmetry $f(x)$ imply that $P_\theta$ has a unique mean
and median at $\theta$~\cite{ibragimov1956composition}.
We observe a sample $X_1, \ldots, X_n \simiid P_\theta$, where $\theta$
is unknown, and wish to estimate $\theta$ given only binary
messages $B_1, \ldots, B_n \in \{0, 1\}$ about each $X_i$.
We study this under three distinct computational scenarios, which
we illustrate in Figure~\ref{fig:setup}:
\begin{enumerate}[(i)]
\item \label{item:centralized} Centralized, where $B_i =
  B_i(X_1,\ldots,X_n)$, $i=1,\ldots,n$.
\item \label{item:adaptive} Adaptive, where $B_i =
  B_i(X_i,B_1,\ldots,B_{i-1})$, $i=2,\ldots,n$.
\item \label{item:distributed}
  Distributed, where $B_i = B_i(X_i)$, $i=1,\ldots,n$.
\end{enumerate}
\noindent
We also consider a hybrid of the fully distributed setting (where the bits
$B_i$ are independent) and the adaptive setting (where each bit $B_i$ may
depend on the previous bits) to a \emph{one-step} adaptive setting, where
the quantization scheme may be modified to depend on one fixed function of
the previous information.
\begin{enumerate}[(i')]
\setcounter{enumi}{1}
\item \label{item:one-step-adaptive}
  One-step adaptive, where for some function $g$ and
  a (fixed) $t$, if $i \le t$ then
  $B_i = B_i(X_i)$ while if $i > t$, then
  $B_i = B_i(X_i, g(B_1, \ldots, B_t))$.
\end{enumerate}

We measure the performance of an estimator $\theta_n \defeq
\theta_n(B_1,\ldots,B_n)$ by one of a few notions. In the simplest case,
we assume a prior $\pi$ on $\theta$ (which may be a point
mass) and consider the quadratic risk
\begin{equation}
  \label{eq:error_def}
  R_n = R_n(\pi) \defeq \int \E_\theta\left({\theta}_n - \theta \right)^2
  d\pi(\theta),
\end{equation}
where the expectation is taken with respect to the distribution of
$X_1,\ldots,X_n \simiid P_\theta$.  The main problems we consider in this
paper are the minimal value of the risk~\eqref{eq:error_def} as a function
of the sample size $n$ and the density $f$, under different choices of the
encoding functions in
cases~\eqref{item:centralized}--\eqref{item:distributed}.
The quadratic risk~\eqref{eq:error_def} may be infinite in some cases;
we defer discussion of this case to later sections, as it is technically
demanding and detracts from the presentation here.

Now, let $\sigma_f^2 \defeq \E[\frac{f'(X)^2}{f(X)^2}]$ be the Fisher
information for the location in the family $\{P_\theta\}$, which is finite
when $f$ is log-concave and symmetric. We give particular attention to the
asymptotic relative efficiency (ARE) of estimators with respect to
asymptotically normal efficient estimators achieving the information
bound~\cite{VanDerVaart98}. In this case,
if $\{m(n), n \in \N\}$ is a sequence such that
\begin{equation*}
  \sqrt{m(n)} (\theta_n - \theta) \cd \normal(0, \sigma_f^2),
\end{equation*}
then the ARE of the estimator is~\cite[Def.~6.6.6]{LehmannCa98}
\begin{equation}
  \label{eqn:are-def}
  \ARE({\theta}_n) \defeq
  \liminf_{n\rightarrow \infty} \frac{m(n)}{n}.
\end{equation}
In the special case where there exists $V \in \mathbb R$ such that
\begin{equation*}
  m(n) R_n =
  m(n) \mathbb E_\theta\left({\theta}_n - \theta \right)^2 = V + o(1),
\end{equation*}
the ARE of ${\theta}_n$ is $\sigma_f^2/V$, so that $\theta_n$ requires a
sample $V / \sigma_f^2$-times larger than that of an efficient estimator for
comparable accuracy to the (information) efficient estimator.

\subsection*{Notation and basic assumptions}

To describe our results and make them formal, we require some additional
notation and one main assumption, which restricts the class of distributions
we consider.  We use the typical notation that
$F(x) = \int_{-\infty}^x f(t) dt$ is the cumulative distribution function
of the $X_i$, and we let
\begin{equation*}
  h(x) \triangleq \frac{f(x)}{1-F(x)} = \frac{f(x)}{F(-x)}
\end{equation*}
be the \emph{hazard} function (or the \emph{failure rate} or \emph{force of
  mortality}), which is monotone increasing as $f$ is
log-concave~\cite{bagnoli2005log}. Given the centrality of the median
to our efficiency bounds, it is unsurprising that the quantity
\begin{equation}
  \label{eq:eta_def}
  \eta(x) \triangleq \frac{f^2(x)}{F(x)(1-F(x))}
  \stackrel{(\star)}{=} \frac{f(x)f(-x)}{F(x)F(-x)}
\end{equation}
appears throughout our development (equality~$(\star)$ is immediate by
the symmetry of $f$). For $p \in (0, 1)$ and $x = F^{-1}(p)$,
\begin{equation}
  \label{eqn:variance-quantiles}
  \frac{1}{\eta(x)} =
  \frac{1}{\eta(F^{-1}(p))}
  = \frac{p (1 - p)}{f(F^{-1}(p))^2}
\end{equation}
is of course the familiar asymptotic variance of the $p$th quantile of the sample $X_1,\ldots,X_n$ (cf.~\cite{VanDerVaart98}, Ch.~21).

For $f$ the normal density, classical results~\cite{Samford1953,
  hammersley1950estimating} show that $\eta(x)$ is a strictly
decreasing function of $|x|$, as we illustrate in Fig.~\ref{fig:eta}.
We consider log-concave symmetric distributions sharing this
property.  Specifically, we require the following.
\begin{assumption}
  \label{assump:failure_rate}
  The density $f$ is log-concave and symmetric.  Additionally, the origin $x
  = 0$ uniquely maximizes $\eta(x)$, and $\eta(x)$ is non-increasing in
  $|x|$.
\end{assumption}
Under this assumption,
\begin{equation*}
  4 f^2(x) \leq \eta(x) \leq \eta(0),
\end{equation*} 
where $\eta(0) = 4 f^2(0)$ is the asymptotic variance of the sample median
(Eq.~\eqref{eqn:variance-quantiles} at $p = 1/2$).  Combined with
log-concavity of $f(x)$, Assumption~\ref{assump:failure_rate} implies that
$\eta(x)$ vanishes as $|x|\rightarrow \infty$.  Several distributions
satisfy Assumption~\ref{assump:failure_rate}, including the generalized
normal distributions with a shape parameter between $1$ and $2$ (including
the normal and Laplace distributions). Symmetric log-concave distributions
failing Assumption~\ref{assump:failure_rate} include the uniform
distribution and the generalized normal distribution with shape parameter
greater than $2$. Some restriction on the class of distributions is
necessary to develop our results; indeed, in
Appendix~\ref{sec:uniform-weirdos} we provide a brief discussion on the
uniform distribution, where a one-step adaptive estimator with single bit
observations can achieve convergence rates faster than the familiar
$\sqrt{n}$ paramateric rate.

\begin{figure}
\begin{center}
\begin{tikzpicture}[scale = 0.6]
\begin{axis}[
width=8cm, height=6cm,
xmin = -3, xmax=3, 
restrict y to domain = 0:100,
ymin = 0,
ymax = 0.9,
samples=10, 
xlabel= $x$,
xtick={-2,-1,0,1,2},
xticklabels={-2,-1,0,1,2},
ytick={0,0.3989423,0.6366198},
yticklabels={0,$\frac{1}{\sqrt{2 \pi}}$,$2/\pi$},
line width=1.0pt,
mark size=1.5pt,
ymajorgrids,
xmajorgrids,
legend style= {at={(1,1)},anchor=north east,draw=black,fill=white,align=left}
]
\addplot[color = blue, solid, smooth] plot table [x = x, y = y, col sep=comma] {./Figs/eta.csv};
\addlegendentry{$\eta(x)$};
\addplot[domain = -5:5, samples = 50, color = red, solid, smooth]  {exp(-x^2/2) / sqrt(2*3.14159)};
\addlegendentry{$\phi(x)$};
\addplot[domain = -5:5, samples = 50, color = black, solid, dashed]  {4*exp(-x^2) / (2*3.14159)};
\addlegendentry{$4\phi^2(x)$};

\end{axis}
\end{tikzpicture}
\begin{tikzpicture}[scale = 0.6]
\begin{axis}[
width=8cm, height=6cm,
xmin = -4, xmax=4, 
restrict y to domain = -10:0,
ymin = -10,
samples=10, 
xlabel= $x$,
xtick={-3,-2,-1,0,1,2,3},
xticklabels={-3,-2,-1,0,1,2,3},
ytick={0,-0.45158,-0.919},
yticklabels={0,,},
line width=1.0pt,
mark size=1.5pt,
ymajorgrids,
xmajorgrids,
legend style= {at={(1,1)},anchor=north east,draw=black,fill=white,align=left}
]
\addplot[color = blue, solid, smooth] plot table [x = x, y = logy, col sep=comma] {./Figs/eta.csv};
\addlegendentry{$\log \eta(x)$};
\addplot[domain = -5:5, samples = 30, color = red, solid, smooth]  {-(x)^2/2 -0.9189};
\addlegendentry{$\log \phi(x)$};
\end{axis}
\end{tikzpicture}
\caption{
The function $\eta(x) = f^2(x) / F(x)F(-x)$ for $f(x) = \phi(x)$ the standard normal density.
\label{fig:eta}
}
\end{center}
\end{figure}

\section{Consistent Estimation and Off-the-shelf Bounds \label{sec:preliminary}}

We begin our technical treatment by deriving a few
bounds on the efficiency of estimators in
setting~\eqref{item:distributed}. These bounds establish the
following facts:
\begin{enumerate}[1.]
\item A consistent estimator with an asymptotically normal distribution
  always exists in setting~\eqref{item:distributed}, and hence in the
  adaptive settings~\eqref{item:adaptive} and
  (\ref{item:one-step-adaptive}').
\item For the normal distribution, the asymptotic relative
  efficiency~\eqref{eqn:are-def} in the distributed
  setting~\eqref{item:distributed} is at most $3/4$. No estimator can be as
  efficient as the sample mean.
\end{enumerate}

\subsection{Consistent Estimation}
The simplest estimator is simply to invert a quantile. Indeed,
fix $\theta_0 \in \mathbb R$ and define the $i$th message by 
\[
B_i = \indic{X_i<\theta_0}, 
\]
where $\indic{A}$ is the indicator of the event $A$. We have
\[
\bar{B}_n \defeq\frac{1}{n} \sum_{i=1}^n B_i \overset{a.s.}{\rightarrow} F(\theta_0 - \theta),  
\]
so that 
\begin{equation}
\label{eq:estimator_naive}
{\theta}_n = \theta_0 - F^{-1}\left( \bar{B}_n \right)
\end{equation}
is a consistent estimator for $\theta$ in the distributed setting of
Figure~\ref{fig:setup}-(iii), where we note that $F$ is invertible over the
support of $f$. As the variance of $\bar{B}_n$ is
$F(\theta_0-\theta)\left(1-F(\theta_0-\theta)\right)$, a delta method
calculation~\cite[Ch.~23]{VanDerVaart98} implies that ${\theta}_n$ is
asymptotically normal with variance
\begin{equation*}
  \frac{F(\theta_0-\theta)\left(1-F(\theta_0-\theta)\right)}{f^2(\theta_0-\theta)} = \frac{1}{
\eta(\theta_0-\theta)}.
\end{equation*}
In the Gaussian case where the $X_i \simiid \normal(\theta, \sigma^2)$, the
ARE of ${\theta}_n$ is $\eta(\theta_0 - \theta)\sigma^2$.

Assumption~\ref{assump:failure_rate} implies that the optimal asymptotic
variance for an estimator of the form~\eqref{eq:estimator_naive} is $1 /
\eta(0)$, the asymptotic of the sample median. Unfortunately, as $\theta$ is
(by definition) \emph{a priori} unknown and $\eta(x)$ monotonically
decreases in $|x|$, this naive estimator $\theta_n$ may be very inefficient
when $\theta$ is far from the initial guess $\theta_0$. As an example, when
$f$ is a the normal density, the ARE of ${\theta}_n$ is less than $0.15$
when $|\theta_0 - \theta| \ge 2\sigma$, and more broadly, $\ARE(\theta_n)$
asymptotes to $|\theta_0| \exp(-\theta_0^2 / 2) / \sqrt{2\pi}$ as $|\theta_0
- \theta|$ gets large.  Yet that $\theta_0 = \theta$ minimizes this
asymptotic variance, and $\eta$ is continuous, is suggestive: if we can use
a suitably good initial estimate $\theta_n\init$ for $\theta$, it is
possible that a one-step adaptive estimator
(recall~(\ref{item:one-step-adaptive}')) may be asymptotically strong, as we
see in Section~\ref{sec:sequential}.

\subsection{Multiterminal Source Coding}
\label{sec:ceo}

A related problem is the CEO problem, which considers the estimation of a
sequence $\theta_1,\theta_2\ldots$, where a noisy version of each $\theta_j$
is available at $n$ terminals. At each terminal $i$, an encoder observes the
$k$ noisy samples
\[
X_{i,j} = \theta_j + Z_{i,j},\qquad j=1,\ldots,k, \qquad i = 1,\ldots,n,
\]
and transmits $r_i k$ bits to a central estimator~\cite{berger1996ceo}. The
central estimator produces estimates ${\hat{\theta}}_1,\ldots,{\hat{\theta}}_k$ with
the goal of minimizing the quadratic risk:
\[
R_{\CEO} = \frac{1}{k} \sum_{j=1}^k \mathbb E \left[\left(\hat{\theta}_j - {\theta_j} \right)^2 \right]. 
\]
Note that any distributed encoding scheme using one-bit per sample can be replicated $k$ times and thus leads to a legitimate encoding and estimation scheme for the CEO problem with $r_1=\ldots=r_n = 1$. It follows that, assuming that $\theta$ is drawn once from the prior $\pi$, our mean estimation problem from one-bit samples under distributed encoding corresponds to the CEO setting with $k=1$ realization of $\theta$ observed under noise at $n$ different locations, and communicated at each location using an encoder sending a single bit. 
Consequently, a lower bound on the MSE in estimating $\theta$ in the
distributed encoding setting is given by the minimal MSE in the CEO setting
as $k \to \infty$. Note that the difference between the CEO setting and ours
lays in the privilege of each of the encoders to describe $k$ realizations
of $\theta$ using $k$ bits with MSE averaged over these realizations, rather
than a single realization using a single bit in ours.

When the prior on $\theta$ and the noise corrupting it at each location are
Gaussian, Prabhakaran et al.~\cite{prabhakaran2004rate} characterize the
optimal encoding and its asymptotic risk as $k \to \infty$.  Chen et
al.~\cite{chen2004upper} also provide an expression for the quadratic risk
in the CEO setting under Gaussian priors. Adapting to our setting, this
expression provides the following proposition:
\begin{prop} \label{prop:CEO}
  Assume that $\Theta = \mathbb R$ and $\pi(\theta) =
  \Ncal(0,\sigma_\theta^2)$ where $\sigma_\theta^2 \in \mathbb R$ is
  arbitrary. Then any estimator ${\theta}_n$ of $\theta$ in the distributed
  setting satisfies
  \begin{equation} \label{eq:ceo_bound}
    n \cdot \ex{\left( \theta - \theta_n \right)^2} \geq \frac{4}{3} \sigma^2 + O(n^{-1}),
  \end{equation}
  where the expectation is with respect to $\theta$ and $X_1,\ldots,X_n$.
\end{prop}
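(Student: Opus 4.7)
The plan is to cash in the reduction sketched at the start of Section~\ref{sec:ceo}: any per-sample one-bit distributed scheme, when replicated across $k$ independent copies of $\theta$, is a valid block code for the quadratic Gaussian CEO problem at per-terminal rate $R = 1$, and the known CEO converse then supplies the constant $\frac{4}{3}\sigma^2$.

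Concretely, given any distributed encoders $\{B_i(\cdot)\}_{i=1}^n$ and estimator $\theta_n$ with risk $R_n$ under $\pi = \mathcal{N}(0,\sigma_\theta^2)$, I would draw $k$ i.i.d.\ copies $\theta^{(1)},\ldots,\theta^{(k)}\sim\pi$ with independent observations $X_i^{(j)}\sim P_{\theta^{(j)}}$, have each encoder $i$ transmit the $k$-bit string $\bigl(B_i(X_i^{(1)}),\ldots,B_i(X_i^{(k)})\bigr)$, and let the decoder apply $\theta_n$ coordinatewise. By independence across coordinates this scheme has per-letter MSE exactly $R_n$ for every $k$, so $R_n \ge \liminf_{k\to\infty} D^\star(n,k)$, where $D^\star(n,k)$ is the minimum per-letter MSE in the Gaussian CEO problem with $n$ terminals, $k$ symbols, and symmetric per-terminal rate $R=1$.

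Invoking the quadratic Gaussian CEO converse (Prabhakaran et al., or Chen et al.'s explicit expression) at $R=1$ and $k\to\infty$ then yields
\[
\liminf_{k\to\infty} D^\star(n,k) \;\ge\; \left(\frac{1}{\sigma_\theta^2} + \frac{n(1-2^{-2R})}{\sigma^2}\right)^{-1} = \frac{4\sigma_\theta^2\sigma^2}{4\sigma^2 + 3n\sigma_\theta^2},
\]
since $1-2^{-2} = 3/4$. Multiplying through by $n$ and Taylor expanding in $1/n$ gives $nR_n \ge \frac{4}{3}\sigma^2 - \frac{16\sigma^4}{9\,n\,\sigma_\theta^2} + O(n^{-2})$, which is exactly the claimed bound.

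The only delicate point is the appeal to the CEO converse: it requires a jointly Gaussian source and observation model — i.e.\ the implicit assumption that the noise density $f$ is Gaussian, so that $X_i = \theta + Z_i$ with $Z_i\sim\mathcal{N}(0,\sigma^2)$ — and it requires our encoders to sit inside the admissible class of block codes for which the converse holds. Coordinatewise replication is precisely what manufactures such a block code at per-terminal rate $1$ without introducing any interaction between coordinates, so the Oohama/Chen converse applies verbatim; everything after that is algebra.
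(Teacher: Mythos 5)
Your proof follows the same route as the paper's: reduce the distributed one-bit problem to the quadratic Gaussian CEO problem by coordinatewise replication across $k$ independent copies of $\theta$, then invoke the Chen et al.\ CEO converse at per-terminal rate $R=1$ and let $n\to\infty$. The closed-form bound $\left(\sigma_\theta^{-2} + n(1-2^{-2R})/\sigma^2\right)^{-1}$ you write is a valid but slightly weakened consequence of the implicit rate--distortion relation the paper plugs into directly (it is obtained by dropping the nonnegative $\frac{1}{2}\log_2(\sigma_\theta^2/D)$ term), and your explicit replication argument and the remark that Gaussian noise is required make precise steps the paper leaves to the surrounding text.
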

\noindent
See Appendix~\ref{app:proof:CEO} for a proof.

As we shall see, this bound is loose: the difference between the MSE lower
bound~\eqref{eq:ceo_bound} and the actual MSE in the distributed setting
(case~\eqref{item:distributed}) occurs because in the CEO setting, each
encoder may encode an arbitrary number of $k$ independent realizations of
$\theta$ using $k$ bits; in our situation, $k = 1$. That blocking allows
more efficient encoding and exploiting the high-dimensional geometry of the
product probability space in the CEO problem is perhaps unsurprising, and
our goal in the sequel will be to characterize the performance degradation
one bit encoding engenders.

\section{Adaptive Estimation \label{sec:sequential}}

The first main result of this paper (Theorem~\ref{thm:adpative_lower_bound})
gives that the asymptotic variance of any adaptive estimator must be at
least $\eta(0)\sigma^2$, which is precisely the efficiency of the median of
the sample $X_1,\ldots,X_n$. Conveniently, the stochastic (sub)gradient
estimator for the median---which minimizes $\E[|X - \theta|]$---is a
sequence of signs (single bits), so that we can exhibit an asymptotically
optimal adaptive estimation scheme.


We begin with our first theorem, whose proof we provide in
Appendix~\ref{proof:thm:adpative_lower_bound}.
\begin{thm}[Fundamental limits]\label{thm:adpative_lower_bound}
  Let Assumption~\ref{assump:failure_rate} hold.
  Let ${\theta}_n$ be any estimator of $\theta$ in the adaptive setting of
  Figure~\ref{fig:setup}(ii). Assume that the prior
  density $\pi(\cdot)$ on $\theta$ converges to zero
  at the endpoints of the interval $\Theta$ and
  define the prior Fisher information
  $I_0 \defeq \E_\pi[(\pi'(\theta) / \pi(\theta))^2]$.
  Then
  \begin{equation*}
    \E\left[ (\theta-{\theta}_n)^2 \right] \geq   \frac{1}{ 4f^2(0) n + I_0}.
  \end{equation*}
\end{thm}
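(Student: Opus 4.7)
The plan is a Bayesian Cram\'er--Rao argument: apply the Van Trees inequality to the data $(B_1,\ldots,B_n)$ and then control the Fisher information of the bits by combining the adaptive chain rule with a sharp single-bit Fisher information bound. Since $\pi$ vanishes at the endpoints of $\Theta$, the integration by parts underlying Van Trees is legitimate and yields
\[
\int \E_\theta[(\theta_n - \theta)^2]\,\pi(\theta)\,d\theta \;\geq\; \frac{1}{\E_\pi[I_{B_1^n}(\theta)] + I_0},
\]
where $I_{B_1^n}(\theta)$ is the Fisher information about $\theta$ in the joint distribution of $B_1^n$ under $P_\theta$. The sequential factorization $p_\theta(b_1^n) = \prod_{i=1}^n p_\theta(b_i \mid b_1^{i-1})$ additively decomposes the score, and since each conditional score has mean zero the cross terms vanish, giving
\[
I_{B_1^n}(\theta) \;=\; \sum_{i=1}^n \E_\theta\bigl[I_{B_i \mid B_1^{i-1}}(\theta)\bigr].
\]
Conditioned on $B_1^{i-1} = b$, the $i$th encoder is a fixed deterministic one-bit function $B_i = \mathbf{1}_{X_i \in A_b}$ of $X_i \sim P_\theta$, so each conditional term is the Fisher information of a Bernoulli experiment on $X_i$.

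The main step is the pointwise single-bit bound
\[
I_{\mathbf{1}_A(X)}(\theta) \;=\; \frac{(\partial_\theta q(\theta))^2}{q(\theta)(1-q(\theta))} \;\leq\; 4f^2(0), \qquad q(\theta) := \int_A f(x-\theta)\,dx,
\]
uniformly over measurable sets $A$. I would fix the mass $q(\theta) = p$ and maximize $|\partial_\theta q(\theta)| = \bigl|\int_A f'(x-\theta)\,dx\bigr|$ over sets of mass $p$ via a Lagrange multiplier: the maximizer is a superlevel set of $-(\log f)'(\,\cdot\,-\theta)$, which is monotone by log-concavity, hence a half-line $[c,\infty)$. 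For such a threshold the Fisher information equals $\eta(c-\theta)$, and Assumption~\ref{assump:failure_rate} then gives $\eta(c-\theta) \leq \eta(0) = 4f^2(0)$.

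Combining the three ingredients, $\E_\pi[I_{B_1^n}(\theta)] \leq 4nf^2(0)$, and substituting into Van Trees yields the claimed lower bound. The main obstacle is the sharpness of the single-bit Fisher information inequality: data processing alone would only give $I_{B_i\mid B_1^{i-1}}(\theta) \leq \sigma_f^2$, so the variational optimization over $A$ is essential for the sharp constant $4f^2(0)$, and Assumption~\ref{assump:failure_rate} is used precisely at the last step to promote the optimal half-line threshold to a half-line through the origin.
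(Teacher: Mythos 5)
Your overall skeleton---Van Trees with the prior-vanishing hypothesis justifying integration by parts, the additive chain-rule decomposition of the Fisher information of $B_1^n$ with orthogonal conditional scores, and a uniform single-bit bound $I_{\indic{X\in A}}(\theta)\leq 4f^2(0)$---is exactly the paper's skeleton, so the reduction to the key one-bit inequality matches. Where you genuinely diverge is in the proof of that one-bit inequality (the paper's Lemma~\ref{lem:fisher_bound}). The paper first reduces from a general Borel set $A$ to a finite union of intervals by regularity of Lebesgue measure, then invokes the inductive isoperimetric Lemma~\ref{lem:bound_intervals} (and its generalization Lemma~\ref{lem:bound_intervals_delta}, proved by a delicate induction on the number of interval endpoints together with a perturbation $f\mapsto f_\alpha$ to handle non-strictly log-concave densities). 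You instead propose a variational, Neyman--Pearson style argument: fix the mass $p = P_\theta(A)$, observe that $p(1-p)$ is then fixed so it suffices to maximize $\bigl|\int_A f'(x-\theta)\,dx\bigr|$, and identify the extremal $A$ via a Lagrangian as a superlevel set of the score $(\log f)'(\,\cdot-\theta)$, hence a half-line by log-concavity; evaluating on a half-line gives $\eta(F^{-1}(p))$, which is at most $\eta(0)=4f^2(0)$ by Assumption~\ref{assump:failure_rate}. This is a correct and arguably cleaner route: it works directly with Borel sets, needs no interval-approximation step, and does not require the strict-concavity perturbation (non-strict monotonicity of $(\log f)'$ only means the optimal threshold is non-unique, not that the bound fails). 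What the paper's longer route buys is the sharper interval-wise estimate $\leq \max_i \eta_\delta(x_i)$ of Lemma~\ref{lem:bound_intervals_delta}, which is reused in the proof of Theorem~\ref{thm:non_existence} and is not recovered by the single-number variational bound. One small slip in your sketch: maximizing $\int_A f'$ gives a superlevel set of $(\log f)'$, i.e.\ a left half-line $(-\infty,c]$; the right half-line $[c,\infty)$ is the minimizer, but by symmetry of $f$ both give the same $\eta$ value, so the conclusion is unaffected.
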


We now turn to asymptotically optimal estimators, first
showing how a simple stochastic gradient scheme is asymptotically
optimal (in the fully adaptive setting), after which we show that
a one-round adaptive scheme can also achieve this optimal efficiency.

\subsection{Asymptotically optimal estimator}

The starting point for our first estimator is to note that the median of a
distribution minimizes $\E[|X - \theta|]$ over $\theta \in \R$, and
moreover, we have the familiar result (cf.~\cite{VanDerVaart98}, Ch.~21)
that given a sample $X_1,\ldots,X_n \simiid P$, if $\theta = \mbox{med}(P)$ and
$P$ has continuous density $f(\cdot - \theta)$ near $\theta$, then
\begin{equation*}
  \sqrt{n}(\mbox{med}(X_1^n) - \theta)
  \cd \normal\left(0, \frac{1}{4 f(0)^2}\right),
\end{equation*}
which is precisely the variance lower bound in
Theorem~\ref{thm:adpative_lower_bound}.  Thus, it is natural to consider a
stochastic gradient procedure for minimizing $\E[|X - \theta|]$. To that end,
let $\left\{ \gamma_n \right\}_{n\in \mathbb N}$ be a strictly positive
sequence of stepsizes,
and define the sequence
\begin{equation}
  \label{eq:sgd_alg}
  \theta_n = \theta_{n-1} + \gamma_n B_n, \quad n = 1,2,\ldots,
\end{equation}
where 
\begin{equation*}
  B_n = \sgn (X_n - \theta_{n-1}).
\end{equation*}
We make one of two assumptions on the stepsizes $\gamma_n$, which
are relatively standard: we always have $\gamma_n$ non-increasing, and
\begin{subequations}
  For some $0 < \lambda \le 1$,
  \begin{align}
    \label{eqn:lazy-gamma}
    \frac{\gamma_n - \gamma_{n+1}}{\gamma_n^2}
    \to 0, & ~~~
    \sum_n \frac{\gamma_n^\frac{1 + \lambda}{2}}{\sqrt{n}} < \infty
    ~~ \mbox{or} \\
    \gamma_n = o(n^{-2/3}),
    & ~~~
    \sum_n \gamma_n = \infty.
    \label{eqn:stringent-gamma}
  \end{align}
\end{subequations}

Then we can adapt the results of Polyak and Juditsky~\cite{polyak1992acceleration}
on the asymptotic normality of averaged stochastic gradient estimators
to establish the following theorem.
\begin{thm}
  \label{thm:sgd}
  Define the average $\bar{\theta}_n \defeq \frac{1}{n}
  \sum_{i = 1}^n \theta_i$. Assume
  that in a neighborhood
  of $\theta = \mbox{med}(P)$,
  the distribution $P$ has a Lipschitz continuous density $f$.
  Then
  \begin{enumerate}[(i)]
  \item \label{item:normal-sgd}
    Assume that $\left\{ \gamma_n \right\}_{n\in \mathbb N}$ satisfies
    condition~\eqref{eqn:lazy-gamma}.
    Then
    \begin{equation*}
      \sqrt{n}\left( \bar{\theta}_n - \theta\right)
      \cd \normal\left(0,\frac{1}{4 f(0)^2}\right).
    \end{equation*}
  \item \label{item:sgd-regular}
    Let $\{P_\theta\}_{\theta \in \R}$ be the family of distributions
    with density $f(\cdot - \theta)$, where $f$ has median 0.
    Let $h_n \to h \in \R$, and define the distributions
    $P_n = P_{\theta + h_n/\sqrt{n}}^n$. Then
    \begin{equation*}
      \sqrt{n}\left(\bar{\theta}_n - \theta - h_n / \sqrt{n}\right)
      \mathop{\cd}_{P_n}
      \normal\left(0, \frac{1}{4 f(0)^2}\right),
    \end{equation*}
    and for any bounded, symmetric, and quasi-convex function $L$,
    \begin{align} 
      & \sup_{c < \infty} \limsup_{n \to \infty}
      \sup_{\tau\,:\,|\theta-\tau| \leq \frac{c}{\sqrt{n} }}
      \E_\tau \left[ L\left( \sqrt{n}(\bar{\theta}_{n} - \tau) \right) \right] \nonumber 
      \\
      & \qquad \qquad \qquad \qquad = \mathbb E \left[L(Z/ 2 f(0)) \right],
        \label{eq:attaining_LAM}
    \end{align}
    where $Z \sim \normal(0,1)$. 
  \item \label{item:sgd-ms-convergence} Assume the stepsizes $\gamma_n$
    satisfy both conditions~\eqref{eqn:lazy-gamma}
    and~\eqref{eqn:stringent-gamma}. Let
    $\pi$ be a distribution on $\R$ \newtext{with a finite second moment}. Then
    \begin{align}
      \int \E\Big[( \bar{\theta}_n - \theta )^2\Big] \pi(d\theta)
      = \frac{1}{4 n f(0)^2} + o(n^{-1}).
      \label{eq:adaptive_3}
    \end{align}
  \end{enumerate}
\end{thm}

\noindent
We provide the proofs of items (i)-(iii) in Appendices~\ref{sec:proof-normal-sgd},
\ref{sec:proof-sgd-regular}, \ref{sec:proof-sgd-ms-convergence},
respectively.

As an immediate corollary to Theorem~\ref{thm:sgd}, we obtain the following
asymptotic optimality results of the averaged stochastic gradient
sequence. Specifically, the average of the stochastic gradient
iterates~\eqref{eq:sgd_alg}
is locally asymptotically minimax, and they achieve the lower
bound of Theorem~\ref{thm:adpative_lower_bound}.

\begin{corollary}
  Let the conditions of Theorem~\ref{thm:adpative_lower_bound} hold
  and $\theta_n$ be defined by the iteration~\eqref{eq:sgd_alg}.
  Let $\{P_\theta\}_{\theta \in \R}$ be the family of distributions
  with densities $f(\cdot - \theta)$.
  \begin{enumerate}[(i)]
  \item Define the shorthand $P_n = P_{\theta + h_n/\sqrt{n}}^n$.
    If the stepsizes satisfy condition~\eqref{eqn:lazy-gamma}, then
    \begin{equation*}
      \sqrt{n}(\bar{\theta}_n - \theta - h_n / \sqrt{n})
      \mathop{\cd}_{P_n} \normal\left(0, \frac{1}{\eta(0)}\right).
    \end{equation*}
  \item If in addition the stepsizes satisfy
    condition~\eqref{eqn:stringent-gamma}, then they
    achieve the lower bound of Theorem~\ref{thm:adpative_lower_bound} for any
    prior $\pi$ on $\R$.
  \end{enumerate}
\end{corollary}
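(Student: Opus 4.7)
The strategy is to deduce both parts of the corollary directly from Theorem~\ref{thm:sgd} and the Van Trees lower bound of Theorem~\ref{thm:adpative_lower_bound}, using only the identity $\eta(0) = 4 f(0)^2$ noted immediately after Assumption~\ref{assump:failure_rate}. No new probabilistic argument is needed: each piece is already in place.

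For part~(i), I would verify the hypotheses of Theorem~\ref{thm:sgd}(\ref{item:sgd-regular}). The family $\{P_\theta\}_{\theta\in\R}$ has density $f(\cdot-\theta)$ with $f$ symmetric and log-concave, so $f$ is continuous and indeed Lipschitz in a neighborhood of the median $\theta$; the stepsizes satisfy~\eqref{eqn:lazy-gamma} by assumption. Invoking Theorem~\ref{thm:sgd}(\ref{item:sgd-regular}) gives, under $P_n = P_{\theta + h_n/\sqrt n}^n$,
\begin{equation*}
  \sqrt n \bigl(\bar\theta_n - \theta - h_n/\sqrt n\bigr)
  \mathop{\cd}_{P_n} \normal\!\left(0, \tfrac{1}{4 f(0)^2}\right),
\end{equation*}
and rewriting $4 f(0)^2 = \eta(0)$ yields the claim verbatim.

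For part~(ii), I would sandwich $n R_n$ between the lower and upper bounds already established. Theorem~\ref{thm:adpative_lower_bound} applied under the stated prior $\pi$ with Fisher information $I_0$ gives
\begin{equation*}
  n \int \E_\theta\!\left[(\bar\theta_n - \theta)^2\right] \pi(d\theta)
  \;\geq\; \frac{n^2}{4 f(0)^2 \, n + I_0}
  \;=\; \frac{1}{4 f(0)^2} - O\!\left(\frac{1}{n}\right),
\end{equation*}
while the two stepsize conditions~\eqref{eqn:lazy-gamma} and~\eqref{eqn:stringent-gamma} together activate Theorem~\ref{thm:sgd}(\ref{item:sgd-ms-convergence}), which supplies the matching upper bound
\begin{equation*}
  n \int \E_\theta\!\left[(\bar\theta_n - \theta)^2\right] \pi(d\theta)
  \;=\; \frac{1}{4 f(0)^2} + o(1).
\end{equation*}
Both sides converge to $1/\eta(0)$, so the averaged iterates attain the lower bound asymptotically.

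The only subtlety I anticipate is reconciling the two regularity scopes for the prior $\pi$: Theorem~\ref{thm:adpative_lower_bound} requires $\pi$ to vanish at the endpoints of $\Theta$ and have finite $I_0$, while Theorem~\ref{thm:sgd}(\ref{item:sgd-ms-convergence}) is stated for arbitrary $\pi$ on $\R$. For priors satisfying the Van Trees hypotheses the sandwich above is immediate; for priors with $I_0 = \infty$ the lower bound degenerates and the statement is trivially true. Hence the corollary holds with the same prior regularity already inherited from Theorem~\ref{thm:adpative_lower_bound}, and no further estimates are required.
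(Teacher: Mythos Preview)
Your proposal is correct and matches the paper's own treatment: the paper states this result as ``an immediate corollary to Theorem~\ref{thm:sgd}'' without supplying a separate proof, and your argument---invoke Theorem~\ref{thm:sgd}\eqref{item:sgd-regular} for part~(i), then sandwich between Theorem~\ref{thm:adpative_lower_bound} and Theorem~\ref{thm:sgd}\eqref{item:sgd-ms-convergence} for part~(ii), using $\eta(0)=4f(0)^2$---is exactly the intended deduction. Your remark on prior regularity is a reasonable clarification the paper leaves implicit.
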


\begin{figure}
\begin{center}
\begin{tikzpicture}[node distance=2cm,auto]
  \node at (0,0) (source) {$X_1$} ;
  \node[int1, right of = source, node distance = 1.2cm] (enc1) {$\enc$};  
\draw[->,line width = 2pt] (source) -- (enc1); 


\node[below of = source, node distance = 1.7cm] (source3) {$X_{n_1}$};
\node[int1, right of = source3, node distance = 1.2cm] (enc3) {$\enc$};  

\draw[->,line width = 2pt] (source3) -- (enc3); 

\node[below of = source, node distance = 0.5cm] {$\vdots$};

\node[int1, right of = enc3, node distance = 2.1cm ] (est) {$\est_1$};

\draw[->,line width = 2pt] (enc1) -| node[above, xshift = -1cm] (mes1) {$B_1$} (est);   


\draw[->,line width = 2pt] (enc3) -- node[above, xshift = 0cm]  {$B_{n_1}$} (est);   

\node[below right = 0.75 and 1.5 of source3] (sourceB) {$X_{n_1 +1}$} ;
\node[int1, right of = sourceB, node distance = 1.7cm] (enc1B) {$\enc$};  
\draw[->,line width = 2pt] (sourceB) -- (enc1B); 

\node[below of = sourceB, node distance = 1.7cm] (source3B) {$X_n$};
\node[int1, right of = source3B, node distance = 1.7cm] (enc3B) {$\enc$};  
\draw[->,line width = 2pt] (source3B) -- (enc3B); 
\node[below of = sourceB, node distance = 0.4cm] {$\vdots$};

\node[int1, right of = enc3B, node distance = 2.1cm ] (estB) {$\est_2$};

\draw[->,line width = 2pt] (enc1B) -| node[above, xshift = -0.5cm] {$B_{n_1+1}$} (estB);
\draw[->,line width = 2pt] (enc3B) -- node[above] {$B_n$} (estB);

\draw[->,line width = 1pt] (est.east) node[above, xshift  =0.5cm] {${\theta}_{n_1}$} -| (enc1B.north);


\draw[->,line width = 0.5pt] (est.east) -| +(1.3,-0.5) -- +(1.3,-2.5) -| (enc3B.north);


\draw[->,line width = 0.5pt] (estB) -- +(0.8,0) node[right] {${\theta}_n$};
\node[below of = enc1B, node distance = 0.5cm] {$\vdots$};

\end{tikzpicture}
\end{center}
\caption{Distributed encoding with one round of threshold adaptation. The estimation obtained from the first $n_1$ bits in a distributed manner is utilized in obtaining another $n-n_1$ bits in a distributed manner. 
\label{fig:one_round}
}
\end{figure}

\subsection{Maximal Efficiency using One Round of Threshold Adaptation}

In the encoding and estimating procedure \eqref{eq:sgd_alg}, each one-bit
message $B_n$ depends on its private sample as well as the current gradient
descent estimate $\theta_{n-1}$. In this sense, each encoder in this
algorithm interacts with previous one by using the current estimate.  This
amount of adaptivity is unnecessary: as we now consider, a similar encoding
yields an asymptotically normal estimator attaining the lower variance bound
$1/\eta(0)$, provided we allow \emph{one} adaptive update to the threshold
value $\theta_0$ based on previously observed bits.
In this procedure we separate the sample into the disjoint sets
$X_1,\ldots,X_{n_1}$ and $X_{n_1+1},\ldots,X_n$ for some $n_1 < n$.  We
first use the estimator \eqref{eq:estimator_naive} to obtain an estimate
${\theta}_{n_1}$ based on $B_1,\ldots,B_{n_1}$, and then use
${\theta}_{n_1}$ as the new threshold value to obtain messages $B_{n_1+1},
\ldots, B_n$. Figure~\ref{fig:one_round} illustrates a diagram of this
procedure.

More formally, we consider the following estimation scheme. Given
$n_1 \in \{1, \ldots, n\}$,
set the individual bits
\begin{equation*}
  B_i =
  \begin{cases}
    \indic{X_i \le \theta_0} & i = 1,\ldots,n_1, \\
    \indic{X_i \le T_n}& i={n_1+1,\ldots,n},
  \end{cases}
\end{equation*}
where
\begin{align*}
  T_n & \defeq \theta_0 - F^{-1}\left(
  \frac{1}{n_1} \sum_{i=1}^{n_1} B_i 
  \right)\\
  \theta_n & \defeq
  T_n - F^{-1} \left(\frac{1}{n - n_1}
  \sum_{i = n_1}^n B_i \right).
\end{align*} 
The intuition here is that the estimator $\theta_n$ is a one-step
correction (cf.~\cite[Thm.~6.4.3]{LehmannCa98}) of the initial estimator
$T_n$, which approximately estimates
$\theta_0 - F^{-1}\left(F(\theta_0 - \theta)\right) = \theta$. We then have the
following convergence result.
\begin{thm}
  Assume that $X_i = Z_i + \theta$, where $Z_i$ are i.i.d.\
  with density $f$ and CDF $F$ and $\mbox{med}(Z_i) = 0$. Assume that
  $f$ is continuous at 0, and that as $n \to \infty$,
  $n_1(n) \rightarrow \infty$ and $n_1 / n \to
  0$. Then
  \begin{align*}
    \sqrt{n} \left( {\theta}_n - \theta  \right)
    \cd  \normal\left( 0, \frac{1}{4 f(0)^2}\right).
  \end{align*}
\end{thm}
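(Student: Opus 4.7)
The plan is to exploit the independence between the two batches and reduce the proof to a conditional CLT followed by a one-term Taylor expansion of $F^{-1}$ around $1/2$. Let $U_n := T_n - \theta$ and $\hat p_n := \frac{1}{n - n_1} \sum_{i = n_1 + 1}^n B_i$. Because $X_i - \theta = Z_i$ and $(Z_{n_1 + 1}, \ldots, Z_n)$ is independent of $T_n$, conditional on $T_n$ the bits $B_{n_1 + 1}, \ldots, B_n$ are i.i.d.\ $\text{Bernoulli}(F(U_n))$. By construction, $\theta_n - \theta = U_n - F^{-1}(\hat p_n)$.

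First I would establish consistency of $T_n$. The first stage is exactly the naive one-quantile inversion scheme from Section~\ref{sec:preliminary}, so the delta method applied to $\sum_{i \le n_1} B_i / n_1$ (whose mean is $F(\theta_0 - \theta) \in (0, 1)$) gives $T_n - \theta = O_P(n_1^{-1/2})$; in particular $U_n \to 0$ in probability as $n_1 \to \infty$.

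Next I would decompose the error using a single mean-value step. Writing $W_n := \hat p_n - F(U_n)$, the mean value theorem yields a (random) $\xi_n$ lying between $\hat p_n$ and $F(U_n)$ such that
\begin{equation*}
  F^{-1}(\hat p_n) - U_n \;=\; F^{-1}(\hat p_n) - F^{-1}(F(U_n)) \;=\; \frac{W_n}{f\bigl(F^{-1}(\xi_n)\bigr)},
\end{equation*}
so that $\theta_n - \theta = - W_n / f(F^{-1}(\xi_n))$. Here the fortunate cancellation of $U_n$ is the reason the (slow) first-stage error does not enter the asymptotic variance; this is essentially the one-step correction described before the theorem statement.

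For the distributional limit of $W_n$, I would argue conditionally on $T_n$. Given $T_n$, the sum $(n - n_1) \hat p_n$ is Binomial with parameter $F(U_n)$, so the conditional characteristic function satisfies
\begin{equation*}
  \E\bigl[e^{i t \sqrt{n - n_1}\, W_n} \mid T_n \bigr] \longrightarrow \exp\bigl( - t^2 F(U_n)(1 - F(U_n)) / 2 \bigr)
\end{equation*}
uniformly in $U_n$ in a neighborhood of $0$, and since $U_n \to 0$ in probability and $F$ is continuous at $0$ with $F(0) = 1/2$, dominated convergence gives unconditional convergence $\sqrt{n - n_1}\, W_n \cd \normal(0, 1/4)$. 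Finally, $\xi_n$ lies between two quantities tending to $1/2$ in probability, so by continuity of $f$ at $0$ (and of $F^{-1}$ at $1/2$) we have $f(F^{-1}(\xi_n)) \to f(0)$ in probability, and $\sqrt{n / (n - n_1)} \to 1$ because $n_1 / n \to 0$. Slutsky's theorem then yields
\begin{equation*}
  \sqrt{n}(\theta_n - \theta) \;=\; - \frac{\sqrt{n / (n - n_1)}}{f(F^{-1}(\xi_n))} \cdot \sqrt{n - n_1}\, W_n \;\cd\; \normal\!\left( 0, \tfrac{1}{4 f(0)^2} \right).
\end{equation*}

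The main obstacle I anticipate is the rigorous transfer from the conditional CLT for $W_n$ (given $T_n$) to an unconditional statement, since $W_n$ is not independent of $T_n$ once we fix the randomness through $U_n$. The cleanest path is the characteristic-function/dominated-convergence argument sketched above, which only needs $U_n \to 0$ in probability and $f$ continuous at $0$; no rate on $T_n$ and no stronger smoothness of $F$ are required. Everything else (Taylor/MVT expansion of $F^{-1}$, Slutsky for $\xi_n$) is essentially routine once one keeps track of what is conditional on what.
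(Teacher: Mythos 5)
Your proposal is correct and follows essentially the same strategy as the paper's proof: condition on $T_n$ so that the second-batch bits are i.i.d.\ Bernoulli, establish a conditional CLT for the second-batch average, transfer it to an unconditional limit, and finish by expanding $F^{-1}$ around $1/2$ and applying Slutsky. The one substantive difference is in how you pass from the conditional to the unconditional CLT: the paper uses a conditional Berry--Esseen bound (quantitative, $O(1/\sqrt{n})$ in Kolmogorov distance) followed by Jensen and dominated convergence on $\sup_t |F_n(t \mid T_n) - \Phi(t)|$, whereas you argue through the conditional characteristic function and dominated convergence via the trivial bound $|\E[e^{itY}\mid T_n]| \le 1$. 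Both routes need only $U_n \to 0$ in probability and $E_n(1-E_n) \to 1/4$, so they are interchangeable; the characteristic-function version is marginally more elementary since it avoids invoking a named quantitative theorem, while Berry--Esseen has the cosmetic advantage of producing a uniform rate that makes the dominated-convergence step a one-liner. Your use of the exact mean-value form $(\theta_n - \theta) = -W_n / f(F^{-1}(\xi_n))$ is the same one-step-correction cancellation the paper expresses slightly more informally as a Taylor expansion of $F^{-1}$ at $T_n - \theta + o_P(1)$. One small overreach worth flagging: you do not need (and the stated hypotheses do not quite give you) the delta method's $O_P(n_1^{-1/2})$ rate for $T_n$, which would require $f(\theta_0 - \theta) > 0$; mere consistency of $T_n$, which follows from the SLLN and the continuous mapping theorem as in the paper, is all that your argument actually uses.
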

\noindent
That is, under Assumption~\ref{assump:failure_rate}, the method is
asymptotically optimal.
\begin{proof}
  We abuse notation and instead of assuming we receive $n$ observations, assume
  we receive the $n + n_1$ observations $X_{-n_1}, \ldots, X_{-1}$ and
  $X_1, \ldots, X_n$, defining $T_n = \theta_0 - F^{-1}(\frac{1}{n_1}
  \sum_{i = -n_1}^{-1} B_i)$ and
  $B_i = \indic{X_i \le T_n}$ for $i \ge 1$.
  Letting $X_i = Z_i + \theta$ for $Z_i$ i.i.d.\ with fixed density
  $f = F'$, we have
  $\ex{B_i} \cas F(\theta_0 - \theta)$, so that
  $\frac{1}{n_1} \sum_{i = -n_1}^{-1} B_i \cas F(\theta - \theta_0)$ and
  by the continuous
  mapping theorem we have $T_n \cas \theta$ as $n_1 \to \infty$.

  Now let
  $E_n = \E[B_i \mid T_n] = P(X_i \le T_n)$,
  so that $\var(B_i \mid T_n) = E_n(1 - E_n)$. Define also the random
  variable
  \begin{equation*}
    Y_n \defeq \sqrt{n}
    \frac{1}{\sqrt{E_n (1 - E_n)}}
    \bigg[\frac{1}{n}\sum_{i = 1}^n B_i - E_n\bigg],
  \end{equation*}
  and let $F_n(\cdot \mid T_n)$ be its cumulative distribution function.
  Then because
  \begin{equation*}
    \E\left[|B_i - E_n|^3 \mid T_n\right] \le E_n(1 - E_n),
    \end{equation*}
      we have 
\begin{equation*}
    \E\left[\frac{|B_i - E_n|^3}{(E_n(1 - E_n))^{3/2}} \mid T_n\right]
    \le \frac{1}{\sqrt{E_n(1 - E_n)}}.
  \end{equation*}
 The Berry-Esseen theorem implies that there exists a constant
  $C \le 1$ such that
  \begin{equation*}
    \sup_t \left|F_n(t \mid T_n) - \Phi(t) \right|
    \le \frac{C}{\sqrt{E_n (1 - E_n)} \sqrt{n}} \wedge 2,
  \end{equation*}
  where $\Phi$ is the standard Gaussian CDF.
  As $E_n (1 - E_n) \cas \frac{1}{4}$ by definition of the median,
  we have
  that (with probability 1)
  \begin{equation*}
    \sup_t \left|F_n(t \mid T_n) - \Phi(t)\right| \le \frac{C}{\sqrt{n}}
    ~~ \mbox{eventually}.
  \end{equation*}
  By dominated convergence and Jensen's inequality we thus obtain
  \begin{equation*}
    \sup_t \left|\mathbb{P}(Y_n \le t) - \Phi(t)\right|
    \le \E\left[
      \sup_t \left|F_n(t \mid T_n) - \Phi(t)\right| \right]
    \to 0,
  \end{equation*}
  which gives that $Y_n \cd \normal(0, 1)$. Now, Slutsky's lemmas imply
  \begin{align}
    &\sqrt{n}
    \cdot \frac{2}{n} \sum_{i = 1}^n (B_i - E_n)
    \\
    & \qquad  =
 \frac{1 + o_P(1)}{\sqrt{n E_n(1 - E_n)}}
    \sum_{i = 1}^n \left(B_i - E_n\right)
    \cd \normal(0, 1).
    \label{eqn:apply-slutsky}
  \end{align}
  where $o_P(1)$ denotes sequence of random variables converging to zero in probability as $n$ goes to infinity. With $\bar{B}_n \defeq \frac{1}{n} \sum_{i = 1}^n B_i$ and using that $E_n = \E[B_i \mid T_n] = F(T_n - \theta)$, we may use the delta method to write
  \begin{align*}
    & \sqrt{n}(\theta_n - \theta)
     = \sqrt{n}\left(T_n - F^{-1}\left( \bar{B}_n \right)
    - \theta \right) \\
    &\quad  = \sqrt{n} \left[T_n - F^{-1}\left(F(T_n - \theta)
    + \bar{B}_n - F(T_n - \theta)\right) - \theta \right] \\
    & \quad = \sqrt{n} \left[ T_n - (T_n - \theta ) \right. \\
    &  \left. +  (F^{-1})' \left(T_n - \theta + o_P(1) \right)  
    \cdot \left(\bar{B}_n - E_n\right) - \theta \right] \\
    & \quad = \sqrt{n} (F^{-1})'(0) ( \bar{B}_n - E_n)
    + o_P(1) \\
    & \quad \cd \normal\left(0, \frac{1}{4 f(0)^2}\right),
  \end{align*}
  where we have used the limiting distribution~\eqref{eqn:apply-slutsky}. 
\end{proof}

Figure \ref{fig:adaptive_error} illustrates the empirical risks 
of the estimator \eqref{eq:sgd_alg} and an estimator obtained using one round of threshold adaptation under a series of Monte Carlo simulations when $f(x)$ is the standard normal desnity.

\begin{figure}
\begin{center}
\begin{tikzpicture}[scale = 0.6]
\begin{axis}[
width=10cm, height=6cm,
xmin = 200, xmax=800, 
restrict y to domain = 0:3,
ymin = 0,
ymax = 3.4,
samples=10, 
xlabel= $n$,
ylabel = {$n\cdot\mathbb E \left[\left(\theta - {\theta}_n \right)^2 \right]$},
ytick={0,1,1.57},
yticklabels={0,1,$\pi/2$},
line width=1.0pt,
mark size=1.5pt,
ymajorgrids,
xmajorgrids,
legend style= {at={(1,1)},anchor=north east,draw=black,fill=white,align=left}
]

\addplot[color = blue, solid, smooth] plot table [x = itr, y = SGD, col sep=comma] {./SimRes/sim_res_nMonte5000.csv};
\addlegendentry{asymptotically optimal};


\addplot[color = red, solid, smooth] plot table [x = itr, y = split, col sep=comma] {./SimRes/sim_res_nMonte5000.csv};
\addlegendentry{one adaptation};

\end{axis}
\end{tikzpicture}
\caption{Normalized empirical risk versus number of samples $n$ for $10,000$ Monte Carlo trials with $f(x)$ the standard normal density. In each trial, $\theta$ is chosen uniformly over the interval $(-1.64,1.64)$. The one round threshold adaptation strategy uses $n_1 = \lfloor \sqrt{n} \rfloor$ samples before adapting the threshold.
\label{fig:adaptive_error}  }
\end{center}
\end{figure}

\section{Distributed Estimation \label{sec:distributed}}

We now consider the distributed encoding setting in
Figure~\ref{fig:setup}-(iii) where each one-bit message $B_i$ is a
function only of its private sample $X_i$. In this case, the $i$th encoder is
of the form $B_i = \indic{X_i \in A_i}$, where the detection region
$A_i$ is a Borel set independent of $X_1, X_2, \ldots$.

\subsection{Optimal Efficiency}
We begin by making a few restrictions on the collections of the sets $A_i$,
which we believe not unreasonable, but which allow us to develop fundamental
limits for estimation. We require a bit of notation to define the
assumptions. As we work with a location family based on a density $f$
with associated probability distribution $P$ on variables $Z$,
we define
\begin{equation*}
  P_\theta(A) \defeq P(Z - \theta \in A)
\end{equation*}
for $Z$ with density $f$. Whenever $A$ is a collection of disjoint intervals $A = \cup_i [t_i^-, t_i^+]$, we may define
\begin{equation*}
  \dPtheta(A) \defeq \frac{\partial}{\partial \theta} P_\theta(A)
  = \sum_i \left(f(t_i^- - \theta) - f(t_i^+ - \theta)\right),
\end{equation*}
and similarly we define the score
function $\score_\theta(A) \defeq \dPtheta(A) / P_\theta(A)$.
For $B = \indic{X \in A}$, we abuse notation and also write
$\score_\theta(B) = \score_\theta(A)$ and similarly for $\dPtheta$.
With this, we may define the variance of the
scores $\score_\theta(B_i)$ under $P_\theta$ via
\begin{equation}
  \label{eq:precision_general}
  L_n(A_1,\ldots,A_n;\theta) \defeq
  \frac{1}{n} \sum_{i=1}^n \frac{ \dPtheta(A_i)^2}{
    P_\theta(A_i) (1 - P_\theta(A_i))}.
\end{equation}
We then make the following assumption.
\begin{assumption}
  \label{assumption:detection-regions}
  The density and detection regions satisfy
  \begin{enumerate}[(i)]
  \item \label{item:lipschitz-density}
    The density function $f$ of $X_n - \theta$ is Lipschitz continuous.
  \item \label{item:finite-intervals}
    Each set $A_i$ is the finite union of $k_i$ disjoint intervals
    (which may include $\pm \infty$), where
    \begin{equation*}
      \frac{1}{n} \cdot \max_{i \le n} \frac{k_i^3}{P_\theta(A_i)^4
        (1 - P_\theta(A_i))^4} \to 0.
    \end{equation*}
  \item \label{item:limit-variance}
    The limit
    \begin{equation}
      \label{eqn:LAN-limit}
      \kappa(\theta) \defeq \lim_{n\to \infty} L_n(A_1,\ldots,A_n; \theta)
    \end{equation}
    exists and is finite.
  \end{enumerate}
\end{assumption}
Roughly speaking, \eqref{item:finite-intervals} above holds whenever the intervals consisting each $A_i$ are appropriately seperated and their number is relatively small. For example, it applies when each set $A_i$ is a half-bounded interval $(t_i,\infty)$ with $\min\{P_\theta((t_i,\infty)), P_\theta((-\infty,t_i])\} = \omega(1/n)$ as we dicscuss in more detail below. More generally, let $\Delta_i$ the minimal distance between any two interval endpoints in $A_i$. Then, 
if $A_i = \cup_{j=1}^{k_i}[t_{i,j}^-, t_{i,j}^+]$, we have that 
$P_\theta(A_i) \ge \Delta_i \sum_{j=1}^{k_i} F(t_{i,j}^-)$ and $1-P_\theta(A_i) \ge \Delta_i \sum_{j=1}^{k_i} F(t_{i,j}^+)$. Therefore, \ref{assumption:detection-regions}\eqref{item:finite-intervals} holds whenever $ \max_{i \le n} k_i^3 \Delta_i^{-4} = o(n)$ as long as $\sum_{j=1}^{k_i} F(t_{i,j}^-)$ and $\sum_{j=1}^{k_i} F(t_{i,j}^+)$ are bounded away of zero. 

Under Assumption~\ref{assumption:detection-regions}, we have the following theorem, which provides
a local asymptotic minimax lower bound on the efficiency of
\emph{any} non-adaptive estimator.
\begin{theorem}
  \label{theorem:non-adaptive-minimax}
  Let Assumption~\ref{assumption:detection-regions} hold, and
  let ${\theta}_n$ be an estimator of $\theta \in \Theta$ from
  observations $B_i = \indic{X_i \in A_i}$.
  Then for $Z \sim \normal(0, 1)$ and any
  symmetric and quasi-convex function $L$,
  \begin{align*}
    & \liminf_{c \to \infty} \liminf_{n \to \infty}
    \sup_{\tau\,:\,|\theta-\tau| \leq \frac{c}{\sqrt{n} }}
    \E \left[ L\left( \sqrt{n}({\theta}_{n} - \tau) \right) \right] \\
    & \qquad \qquad \qquad \qquad \geq
    \E\left[ L (Z/\sqrt{\kappa(\theta)}) \right].
  \end{align*}
\end{theorem}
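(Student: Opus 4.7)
The plan is to establish that the sequence of independent (but not identically distributed) Bernoulli experiments $B_i = \indic{X_i \in A_i}$, $i \le n$, is locally asymptotically normal (LAN) at $\theta$ with Fisher information $\kappa(\theta)$, and then to invoke the H\'ajek--Le Cam local asymptotic minimax theorem (see, e.g., \cite[Thm.~8.11]{VanDerVaart98}) to obtain the stated bound $\E[L(Z/\sqrt{\kappa(\theta)})]$. Writing $p_i(\theta) \defeq P_\theta(A_i)$, the log-likelihood ratio at a local alternative is a sum of independent Bernoulli log-likelihoods:
\begin{equation*}
\Lambda_n(h) \defeq \log \frac{dP_{\theta + h/\sqrt n}^{(n)}}{dP_\theta^{(n)}}(B_1^n) = \sum_{i=1}^n \Big[B_i \log \frac{p_i(\theta + h/\sqrt n)}{p_i(\theta)} + (1-B_i)\log \frac{1-p_i(\theta + h/\sqrt n)}{1 - p_i(\theta)}\Big].
\end{equation*}

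The next step is a second-order Taylor expansion of each summand in $\delta = h/\sqrt n$. Since $A_i$ is a union of at most $k_i$ intervals and $f$ is Lipschitz (Assumption~\ref{assumption:detection-regions}(i--ii)), each $p_i$ is twice continuously differentiable with $|\dot p_i(\theta)|$ and $|\ddot p_i(\theta)|$ bounded in terms of $k_i$, $\|f\|_\infty$, and the Lipschitz constant of $f$. Straightforward algebra on the Bernoulli score and observed information gives
\begin{equation*}
\Lambda_n(h) = h \Delta_n - \frac{h^2}{2}\, L_n(A_1,\ldots,A_n;\theta) + R_n(h),
\end{equation*}
where $\Delta_n \defeq n^{-1/2}\sum_{i=1}^n \dot p_i(\theta)(B_i - p_i(\theta))/[p_i(\theta)(1 - p_i(\theta))]$ is the normalized full Bernoulli score and $R_n(h)$ collects the centered quadratic and the cubic remainders. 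Two facts then finish LAN. First, $R_n(h) \to 0$ in $P_\theta^{(n)}$-probability: bounding third derivatives of $\log p_i(\cdot)$ and $\log(1-p_i(\cdot))$ yields $|R_n(h)| \lesssim |h|^3 n^{-1/2}\cdot n^{-1}\sum_{i=1}^n k_i^3/[p_i(\theta)(1-p_i(\theta))]^4$, which vanishes under Assumption~\ref{assumption:detection-regions}(ii), and the centered quadratic piece is controlled by a second-moment estimate of the same order. Second, $\Delta_n$ is a zero-mean sum of independent bounded variables whose total variance equals $L_n(A_1,\ldots,A_n;\theta) \to \kappa(\theta)$ by Assumption~\ref{assumption:detection-regions}(iii); the same $k_i^3$ bound supplies the Lyapunov condition, so $\Delta_n \cd \normal(0,\kappa(\theta))$.

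With LAN established, the H\'ajek--Le Cam asymptotic minimax theorem applies verbatim: for every estimator sequence $\theta_n$ and every bounded symmetric quasi-convex loss $L$, the local asymptotic minimax risk centered at $\theta$ is at least $\E[L(Z/\sqrt{\kappa(\theta)})]$ with $Z \sim \normal(0,1)$, which is the conclusion of the theorem. The main technical obstacle is the uniform control of derivatives of $p_i(\theta)$ along a triangular array in which both the complexity $k_i$ of the detection region and the Bernoulli parameter $p_i(\theta)$ vary with $i$; Assumption~\ref{assumption:detection-regions}(ii) is tailored precisely so that both the cubic remainder bound and the Lyapunov condition go through, so once the derivative bookkeeping is set up carefully the rest is standard LAN theory.
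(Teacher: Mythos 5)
Your proposal follows the same high-level strategy as the paper---establish local asymptotic normality of the triangular array of independent Bernoulli observations $B_i$ with information $\kappa(\theta)$, then invoke the H\'ajek--Le Cam local asymptotic minimax theorem---but it takes a genuinely different technical route to LAN. The paper works through the \emph{quadratic mean differentiability} machinery of van der Vaart's Theorem 7.2: it introduces the square-root normalizations $W_{n,i} = 2(\sqrt{p_n/p}(B_i) - 1)$, proves a uniform QMD-style inequality (Lemma~\ref{lemma:derivative-bounds} controls $P_{\theta+h}(A)$ to second order with only a Lipschitz constant), and then expands $\sum_i \log(1 + W_{n,i}/2)$ while controlling means and variances of the $W_{n,i}$. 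You instead propose a direct second-order Taylor expansion of the Bernoulli log-likelihoods in $\theta$. Both routes are viable; the QMD route avoids ever having to differentiate $p_i$ more than once, whereas the direct expansion is more classical and more transparent about where Assumption~\ref{assumption:detection-regions}(\ref{item:finite-intervals}) enters the remainder and Lyapunov estimates.

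One imprecision you should fix: you assert that ``each $p_i$ is twice continuously differentiable'' and that the remainder is bounded via ``third derivatives of $\log p_i(\cdot)$,'' but Assumption~\ref{assumption:detection-regions}(\ref{item:lipschitz-density}) only gives $f$ Lipschitz, not $C^1$. Hence $p_i(\theta) = P_\theta(A_i)$ is $C^1$ with Lipschitz derivative $\dot p_i(\theta) = \sum_j (f(a_{ij} - \theta) - f(b_{ij} - \theta))$, but $\ddot p_i$ exists only a.e.\ and $\dddot p_i$ need not exist at all. The remedy (and what the paper does via Lemma~\ref{lemma:derivative-bounds}) is to use a second-order expansion with remainder controlled by the Lipschitz constant, i.e.\ $p_i(\theta + \delta) = p_i(\theta) + \dot p_i(\theta)\delta \pm k_i \lip(f)\delta^2$, then propagate this through $\log p_i$ rather than appealing to $C^3$ smoothness. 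With that substitution, your estimates on the quadratic and cubic pieces of $\Lambda_n(h)$ and the Lyapunov condition for $\Delta_n$ are of the same order as the paper's, and the argument closes. The final step (Hájek--Le Cam) is identical to the paper's.
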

\noindent
See Appendix~\ref{sec:proof-non-adaptive-minimax} for a proof.

Theorem~\ref{theorem:non-adaptive-minimax} shows that the limiting variance
term $\kappa(\theta)$ provides a strong lower bound on the efficiency of any
non-adaptive estimator, and moreover, that this bound necessarily depends
on $\theta$. As a
particular consequence, for the squared error $L(x) = x^2$, for any $\delta
> 0$ and $\theta$, there exists a $c < \infty$ such that $\sup_{|\tau -
  \theta| \le c / \sqrt{n}} \E_\tau[(\theta_n - \tau)^2] \ge \frac{(1 -
  \delta)}{n \kappa(\theta)} + o(1/n)$. Consequently, attaining any type of good (uniform)
efficiency with non-adaptive estimators will be challenging. 

Yet, Theorem~\ref{theorem:non-adaptive-minimax}
limits non-adaptive strategies in stronger ways.
Under the density models we have considered, with the additional
Assumption~\ref{assump:failure_rate}, we can show stronger
optimality results that adaptivity is essential for achieving
optimal convergence guarantees.
Recall the transformation~\eqref{eq:eta_def} of the
hazard rate function, $\eta(x) = \frac{f^2(x)}{F(x)(1 - F(x))}$, which
has unique maximum at $x = 0$ under Assumption~\ref{assump:failure_rate}.
When each detection region $A_n$ consists of a bounded number
of intervals, the next theorem shows that
the minimal risk $1/\eta(0)$ can
only be attained at finitely many points within $\Theta$.
In particular, distinct from the adaptive setting, no distributed
estimation scheme can achieve asymptotic variance $\eta(0)$
uniformly in $\theta \in \Theta$.

\begin{thm} \label{thm:non_existence}
  Let Assumptions~\ref{assump:failure_rate}
  and~\ref{assumption:detection-regions} hold.
  Additionally, assume that $A_i$ is the union of at most $K$
  intervals. The number of points $\theta \in \Theta$ satisfying
  $\kappa(\theta) = \eta(0)$ is at most $2K$.
\end{thm}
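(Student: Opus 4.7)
The plan is to prove Theorem~\ref{thm:non_existence} in three stages: (i) a sharp pointwise upper bound $\phi_A(\theta) \le \eta(0)$ on the per-sample contribution $\phi_A(\theta) \defeq \dPtheta(A)^2/[P_\theta(A)(1-P_\theta(A))]$; (ii) a rigidity characterization of equality for $K$-interval sets; and (iii) a compactness/continuity argument that transfers the per-$A$ equality count to the Cesàro limit $\kappa$.

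For step (i), I would use a Neyman-Pearson-type rearrangement. Fixing $p = P_\theta(A)$ and maximizing $|\int_A f'(x-\theta)\,dx|$ over Borel sets of $P_\theta$-mass $p$ via a Lagrange multiplier, the maximizer is $\{x : (\log f)'(x-\theta) \ge \lambda\}$, which by log-concavity of $f$ is a half-line $(-\infty, \theta + F^{-1}(p)]$. The extremal value is $f(F^{-1}(p))$, so $\phi_A(\theta) \le f(F^{-1}(p))^2/[p(1-p)] = \eta(F^{-1}(p)) \le \eta(0)$ under Assumption~\ref{assump:failure_rate}. Averaging and passing to the limit yields $\kappa(\theta) \le \eta(0)$ uniformly in $\theta$.

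For step (ii), $\phi_A(\theta) = \eta(0)$ forces tightness both in the Neyman-Pearson inequality---so $A$ coincides modulo null sets with $(-\infty, \theta]$ or $[\theta, \infty)$---and in $\eta(F^{-1}(p)) \le \eta(0)$, which by the uniqueness clause of Assumption~\ref{assump:failure_rate} requires $p = 1/2$. For $A = \cup_{j=1}^k [s_j^-, s_j^+]$ with $k \le K$, this rigidity forces $\theta$ to coincide with one of the at most $2K$ finite endpoints of $A$. Step (iii) then bootstraps: if $\kappa(\theta_m) = \eta(0)$ at $M$ distinct points, then $\phi_{A_i}(\theta_m) \to \eta(0)$ on a Cesàro density-one set $I_m \subset \mathbb{N}$, so on the density-one intersection $I^* = \bigcap_m I_m$ every $\theta_m$ is simultaneously an equality-limit. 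Parameterizing each $A_i$ by its ordered $2K$ endpoints in the compact space $(\R \cup \{\pm\infty\})^{2K}$ and extracting a convergent subsequence $A_{i_k} \to A^*$, continuity of $A \mapsto \phi_A(\theta_m)$ in the endpoint parameters gives $\phi_{A^*}(\theta_m) = \eta(0)$ for every $m$. Since $A^*$ is still a union of at most $K$ intervals, step (ii) applied to $A^*$ yields $M \le 2K$.

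The main obstacle is handling degenerate limits in step (iii): when two endpoints of $A_i$ collide, the ``effective'' number of intervals in $A^*$ can drop, and the map $A \mapsto \phi_A(\theta)$ may pass through a $0/0$ indeterminate form. I would check either that $\phi_{A_i}(\theta_m)$ cannot approach $\eta(0)$ along such collapsing sequences (since the local contribution vanishes) or that the degenerate limits still fall under the step (ii) count. A secondary concern is verifying uniqueness of the Neyman-Pearson maximizer under log-concavity rather than strict log-concavity; Assumption~\ref{assump:failure_rate}, which forces $\eta$ to have a unique maximum at $0$, should supply the needed regularity to exclude the pathological flat-score cases.
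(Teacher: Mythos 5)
Your approach is genuinely different from the paper's. The paper does not invoke equality rigidity or take limits of the sets $A_i$; instead it deploys Lemma~\ref{lem:bound_intervals_delta}, which bounds the per-sample Fisher contribution by $\max_j \eta(x_j)$ over the \emph{shifted endpoints} $x_j$ of $A$. Under Assumption~\ref{assump:failure_rate}, this means that whenever every endpoint of $A_i$ lies at distance at least $\epsilon$ from $\theta$, its contribution is at most $\eta(\epsilon)$, strictly below $\eta(0)$. Consequently $\kappa(\theta)=\eta(0)$ forces, for every $\epsilon>0$, a Ces\`aro-density-one set of indices $i$ with some endpoint of $A_i$ in $(\theta-\epsilon,\theta+\epsilon)$; a direct double count of at most $2K$ endpoints per $A_i$ against disjoint $\epsilon$-balls around $2K+1$ candidate points then yields the contradiction. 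The paper's route is quantitative throughout and needs no compactness or continuity argument, at the cost of relying on the inductive isoperimetric lemma.

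By contrast, your step (i) gives only the coarse bound $\phi_A(\theta)\le\eta(0)$ with no endpoint information, and you then reconstruct that information from equality rigidity plus compactness. The two obstacles you flag are real but I believe fillable. For the degeneracy concern: the Neyman--Pearson step actually yields the sharper $\phi_A(\theta)\le\eta\bigl(F^{-1}(P_\theta(A))\bigr)$, so $\phi_{A_{i_k}}(\theta_m)\to\eta(0)$ forces $P_{\theta_m}(A_{i_k})\to 1/2$ by uniqueness of the maximizer of $\eta$; since $P_{\theta_m}(\cdot)$ and $\dPtheta(\cdot)$ are continuous in the compactified endpoint coordinates, $P_{\theta_m}(A^*)=1/2$ and the $0/0$ degeneracy cannot arise. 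For the NP-uniqueness concern: Assumption~\ref{assump:failure_rate} forces $f$ to be non-constant on any interval containing the origin (otherwise $\eta$ would have a local minimum there rather than a unique maximum), which together with $P_\theta(A)=1/2$ and concavity of $\log f$ pins the maximizer down to a half-line a.e. Do note, however, that once these gaps are filled, your step (ii) rigidity asserts that $A^*$ is a half-line through \emph{every} $\theta_m$ simultaneously, which would give $M\le 1$, strictly stronger than the stated $2K$. That your argument seems to overshoot the theorem should prompt a careful re-check of the rigidity claim, though I do not see an error in it; the paper's $2K$ arises because its lemma only extracts the weaker conclusion that $\theta$ lies near \emph{some} endpoint of $A_i$.
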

\noindent
See Appendix~\ref{proof:thm:non_existence} for a proof.


\subsection{Threshold Detection}
\label{subsec:threshold}

We now consider a restricted case where each detection region is a
half-open interval, i.e., the $i$th message is obtained by comparing $X_i$
against a single threshold. Under the adaptive signal acquisition setting,
this is sufficient for asymptotic optimality;
in non-adaptive settings, it is not sufficient, though we may characterize
a few additional optimality results.
Assume now that each $B_i$ is of the form
\begin{equation}
  \label{eq:threshold_message}
  B_i = \sgn(t_i - X_i) = \begin{cases} 1 & X_i<  t_i, \\
    -1 & X_i \geq t_i,
  \end{cases}  
\end{equation}
where $t_i\in\mathbb R$ is the \emph{threshold} of the $i$th encoder. In
other words, the detection region of $B_i$ is $A_i = (t_i,\infty)$ and
$\mathbb P(X_i \in A_i) = F \left( B_i(t_i-\theta) \right)$. It follows that
\begin{align}
  L_n(A_1,\ldots,A_n;\theta)
  & = \frac{1}{n} \sum_{i=1}^n \frac{ \left(f(t_i-\theta) \right)^2 }{F\left(t_i-\theta \right) F\left(\theta - t_i \right) } \\
  & = \frac{1}{n} \sum_{i=1}^n \eta(t_i - \theta).
  \label{eq:Ln_threshold}
\end{align}
A natural condition for the existence of the limit \eqref{eq:Ln_threshold}
as $n\to \infty$ is that the empirical distribution of the threshold values
converges to a probability measure. Specifically, for an interval $I \subset
\mathbb R$, define
\begin{equation*}
  \lambda_n(I) = \frac{ \card \left( I \cap \{t_1,t_2,\ldots \} \right)}{n}. 
\end{equation*}
Then an investigation of the proof of
Theorem~\ref{theorem:non-adaptive-minimax} in
Section~\ref{sec:proof-non-adaptive-minimax}, specifically
Sec.~\ref{sec:proof-lan-bits} and the bounds~\eqref{eqn:h-fourth}, show that
as $\eta(t) \le \eta(0)$ for all $t \in \R$ under
Assumption~\ref{assump:failure_rate}, the following corollary follows. (The
corollary relies on local asymptotic normality~\cite[Ch.~7]{VanDerVaart98};
see Appendix~\ref{sec:proof-sgd-regular} for some brief discussion of such
conditions.)
\begin{cor} \label{cor:LAN_thresh}
  Let $\{t_n\}_{n=1}^\infty$ be a sequence of threshold values such that
  $\lambda_n$ converges (weakly) to a probability measure $\lambda$ on
  $\mathbb R$. Then the conclusions of
  Theorem~\ref{theorem:non-adaptive-minimax} apply with
  \begin{equation*}
    \kappa(\theta) = \int_{\mathbb R} \eta(t-\theta) \lambda(dt). 
  \end{equation*}
  Moreover, the family of laws of $\{B_i = \sgn(X_i - t_i)\}_{i = 1}^n$
  under $\{P_\theta\}_{\theta \in \Theta}$ is locally asymptotically normal
  with information $\kappa(\theta)$.
\end{cor}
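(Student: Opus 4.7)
The plan is to verify that threshold detection fits the framework of Theorem~\ref{theorem:non-adaptive-minimax} and then extract the explicit form of $\kappa(\theta)$ by the weak convergence hypothesis.

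First, I would specialize the general precision formula~\eqref{eq:precision_general} to the threshold case. With $A_i = (t_i, \infty)$, we have $P_\theta(A_i) = 1 - F(t_i - \theta)$ and $\dPtheta(A_i) = f(t_i - \theta)$, so each summand in $L_n$ equals
$$\frac{f(t_i - \theta)^2}{F(t_i - \theta)\,(1 - F(t_i - \theta))} = \eta(t_i - \theta),$$
yielding the compact representation $L_n(A_1,\ldots,A_n;\theta) = \int \eta(t - \theta)\, d\lambda_n(t)$ announced in~\eqref{eq:Ln_threshold}.

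Next I would verify the three parts of Assumption~\ref{assumption:detection-regions}. Part (i), Lipschitz continuity of $f$, is a standing hypothesis. Part (ii) trivializes at the level of $k_i = 1$, and the uniform bound $\eta(t) \leq \eta(0) = 4 f(0)^2$ granted by Assumption~\ref{assump:failure_rate} replaces the bare growth condition on $1/(P_\theta(A_i)(1 - P_\theta(A_i)))$; this is precisely the role played by the bounds~\eqref{eqn:h-fourth} invoked inside Sec.~\ref{sec:proof-lan-bits} to control the remainder in the LAN expansion. Part (iii), existence of the limit, is the main calculation: since $\lambda_n \Rightarrow \lambda$ weakly and $t \mapsto \eta(t - \theta)$ is continuous (because $f$ is continuous and $F(x)(1-F(x)) > 0$ on the support of $f$) and bounded above by $\eta(0)$, the Portmanteau theorem yields
$$\kappa(\theta) \;=\; \lim_{n \to \infty} \int \eta(t - \theta)\, d\lambda_n(t) \;=\; \int \eta(t - \theta)\, d\lambda(t).$$

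With the three conditions in place, Theorem~\ref{theorem:non-adaptive-minimax} applies directly and delivers the local asymptotic minimax lower bound at rate $1/\sqrt{n}$. The LAN claim is obtained \emph{en route} to that theorem: the proof in Sec.~\ref{sec:proof-lan-bits} writes the log-likelihood ratio between $P_\theta^n$ and $P_{\theta + h/\sqrt{n}}^n$ for the Bernoulli product laws of $\{B_i\}$, Taylor-expands each term around $t_i - \theta$, and identifies the quadratic form with $h^2 L_n(A_1,\ldots,A_n;\theta)$; the hypothesis $\lambda_n \Rightarrow \lambda$ together with the convergence of $L_n$ to $\kappa(\theta)$ then identifies the Fisher information of the LAN expansion as $\kappa(\theta)$.

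The one delicate point is the interface between weak convergence and the uniform moment control required by the LAN step: weak convergence of $\lambda_n$ does not by itself rule out a vanishing fraction of thresholds drifting to $\pm\infty$, which could in principle make individual scores explode. The saving grace is that each score summand equals $\eta(t_i - \theta) \le \eta(0)$ \emph{deterministically} under Assumption~\ref{assump:failure_rate}, so the per-coordinate third-moment and remainder bounds in~\eqref{eqn:h-fourth} are uniform in $i$ regardless of where $t_i$ sits; once this observation is made, the corollary is a direct application of Theorem~\ref{theorem:non-adaptive-minimax} with the explicit integral formula for $\kappa(\theta)$.
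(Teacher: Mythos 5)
Your overall approach mirrors the paper's: specialize the precision formula~\eqref{eq:precision_general} to get $L_n = \int \eta(t-\theta)\,\lambda_n(dt)$, pass to the limit via weak convergence of $\lambda_n$ (here your invocation of Portmanteau with $\eta(\cdot - \theta)$ bounded continuous is exactly right), and then argue that the LAN derivation of Lemma~\ref{lemma:lan-bits} survives by invoking the uniform bound $\eta(t)\le\eta(0)$. The paper likewise points back to the proof of Theorem~\ref{theorem:non-adaptive-minimax} and cites the bound $\eta(t)\le\eta(0)$ as what makes the corollary go through. You also correctly flag that Assumption~\ref{assumption:detection-regions}\eqref{item:finite-intervals} is \emph{not} automatically implied by weak convergence, which is the right thing to worry about.

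However, the specific resolution you offer is incorrect as stated. You write that because $\eta(t_i - \theta) \le \eta(0)$ ``the per-coordinate third-moment and remainder bounds in~\eqref{eqn:h-fourth} are uniform in $i$.'' That confuses the second moment of the score with what actually appears in those bounds. The inequality $\eta(t_i-\theta)\le\eta(0)$ only controls $\E[\score_\theta(B_i)^2] = \dot P_\theta(A_i)^2/\bigl(P_\theta(A_i)(1-P_\theta(A_i))\bigr)$. The right-hand sides of~\eqref{eqn:h-fourth} (and hence of~\eqref{eqn:var-wni-expansion}) contain factors $1/P_\theta(A_i)$, $1/P_\theta(A_i)^3$, and pointwise $\score_\theta(A_i)^2$ = (hazard rate)$^2$; for thresholds $t_i$ drifting to $\pm\infty$, $P_\theta(A_i)$ goes to $0$ or $1$ and the hazard rate diverges, so these coefficients are \emph{not} uniformly bounded, and they are likewise not bounded by $\eta(0)$. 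A correct completion has to use weak convergence more actively: fix $\epsilon > 0$, pick a compact $K$ with $\lambda(K^c)<\epsilon$; for $t_i - \theta \in K$ the coefficients in~\eqref{eqn:h-fourth} and the Lyapunov/Lindeberg moments are genuinely uniformly bounded by compactness, while for the asymptotically negligible fraction of thresholds outside $K$ the contribution to $L_n$ and to the Lindeberg sum is controlled because $\eta$ is bounded (and in fact $\eta(t)\to 0$ as $|t|\to\infty$ under Assumption~\ref{assump:failure_rate}). That compact-plus-tail decomposition, not uniformity of the per-coordinate remainder bounds, is what makes the argument rigorous; without it the phrase ``direct application of Theorem~\ref{theorem:non-adaptive-minimax}'' is not justified, because the theorem's hypotheses are not literally met.
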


The condition that $\lambda_n$ converges to a probability measure is
satisfied, for example, whenever $t_1,\ldots,t_n$ are drawn
independently from a probability distribution $\lambda(dt)$ on $\mathbb
R$.

When the conclusions of Corollary~\ref{cor:LAN_thresh} hold, local asymptotic normality of $\{B_n\}_{n=1}^\infty$ implies that the maximum
likelihood estimator (ML) of $\theta$ from $B_1,\ldots,B_n$, denoted here by
${\theta}^{ML}_n$, is local asymptotic minimax in the sense that
\begin{equation*}
  \sqrt{n} \left( {\theta}^{ML}_n - \theta \right)
  \cd \normal\left(0, 1/\kappa(\theta) \right). 
\end{equation*}
We note that ${\theta}^{ML}_n$ solves
\begin{equation}
  \label{eq:ML}
  0 = \sum_{i=1}^n B_i \frac{f \left( t_i-\theta\right) }{F \left(B_i  (t_i-\theta)\right) }.
\end{equation}
If the collection $\{t_1,t_2\ldots\}$ is bounded
(for example $\{t_1,t_2\ldots\} \subset \Theta$), then
\begin{equation*}
\lim_{n\to \infty} n \cdot \ex{\left({\theta}^{ML}_n - \theta \right)^2}  = 1/\kappa(\theta), 
\end{equation*} 
so that the ML estimator attains the local asymptotic MSE of Theorem~\ref{theorem:non-adaptive-minimax}.

By Assumption~\ref{assump:failure_rate},
$\eta(x)$ attains its maximum at the origin, so we conclude that
\begin{equation*}
  \kappa(\theta) \leq \sup_{t\in \mathbb R} \eta \left( t-\theta\right) = \eta(0).
\end{equation*}
Moreover, this upper bound on $\kappa(\theta)$ is attained only when
$\lambda$ is the point mass at $\theta$. Since $\theta$ is \emph{a priori}
unknown, estimation in the distributed setting using
threshold detection is strictly suboptimal compared to the adaptive
setting; the ability to choose the thresholds $t_i$
adaptively conditional on previous messages is necessary for optimal
efficiency.


\subsection{Minimax Threshold Density}

We conclude this section by considering the \newtext{distribution} of the threshold values
that maximizes the worst-case information $\inf_\theta \kappa(\theta)
= \kappa_\lambda(\theta)$ where $\kappa_\lambda(\theta)
= \int \eta(t - \theta) \lambda(dt)$.
The \newtext{optimal distribution $\lambda^\star$} solves the
optimization problem
\begin{align}
  \label{eq:var_cvx_minimax}
  \begin{split}
    \mathrm{maximize} \quad &  \inf_{\theta \in \Theta} \int \eta(t-\theta) \lambda(dt)
    \\ 
    \mathrm{subject~to} 
    \quad & \lambda(dt)\geq 0,\quad \int \lambda(dt) \leq 1. 
  \end{split}
\end{align}
The objective function~\eqref{eq:var_cvx_minimax} is concave in
$\lambda(dt)$ and continuous in the weak topology over measures on $\Theta$,
so that by discretizing, we can approximately solve this problem using
convex optimization. We let $\kappa^\star$ denote the maximal value of
problem~\eqref{eq:var_cvx_minimax} and $\lambda^\star(dt)$ be the density
achieving the maximum. By drawing thresholds
$t_i \simiid \lambda^\star$,
Corollary~\ref{cor:LAN_thresh} guarantees that for any $\theta \in \Theta$, the
maximum likelihood estimator
using $\{B_i = \sgn(X_i - t_i)\}_{i \in \N}$ is at least $\kappa^\star$. \par
Figure~\ref{fig:minimax_support} illustrates an approximation to
$\lambda^\star(dt)$ obtained by solving a discretized version of
\eqref{eq:var_cvx_minimax} for the case when $f(x)$ is the normal density
with variance $\sigma^2$ and $\Theta = [-1/2,1/2]$. The minimax asymptotic
precision parameter $\kappa^\star$ obtained this way is illustrated in
Fig.~\ref{fig:minimax_ARE} as a function of $\sigma$. Also
illustrated in these figures is $\kappa_{\unif}$, the precision
parameter corresponding to threshold values uniformly distribution over
$\Theta$,
\begin{align}
& \kappa_{\unif} \triangleq \min_{\theta \in [-T,T]} \frac{1}{2T}\int_{-T}^T \eta\left(t-\theta\right) dt \nonumber
 \\
& = 
\frac{1}{2T}\int_{-T}^{T} \eta\left(t\pm T\right) dt
= \frac{1}{2T}\int_{0}^{2T} \eta(t) dt  \label{eq:uniform_risk}. 
\end{align}

\begin{figure}
  \begin{center}
\begin{tikzpicture}[scale = 0.75]
\begin{axis}[
width=6cm, height=5cm,
xmin = -0.5, xmax=0.5, 
restrict y to domain = 0:100,
ymin = 0,
ymax = 1.05,
samples=10, 
xlabel= {$dt$},
xtick={-0.5,0,0.5},
xticklabels={$-.5$, 0, $.5$},
title = {$\sigma = .5$},
ytick={0,0.636},
ylabel = {$\lambda(dt)$},
yticklabels={0,\scriptsize $\frac{2}{\pi}$,1},
line width=1.0pt,
mark size=1.5pt,
ymajorgrids,
xmajorgrids,
legend style= {at={(1,1)},anchor=north east,draw=black,fill=white,align=left}
]
\addplot[color = blue, smooth, mark = o ] plot table [col sep=comma] {./Figs/minmax_lmd_b0.5_sig0.5.csv};

\addplot[color = black!30!green, smooth, dashed] plot table [x = x, y  = z,col sep=comma] {./Figs/minimax_th_b0.5_sig0.5.csv};

\addplot[color = red, smooth] plot table [x = x, y  = y,col sep=comma] {./Figs/minimax_th_b0.5_sig0.5.csv};

\end{axis}
\end{tikzpicture}
\begin{tikzpicture}[scale = 0.75]
\begin{axis}[
width=6cm, height=5cm,
xmin = -0.5, xmax=0.5, 
restrict y to domain = 0:100,
ymin = 0,
ymax = 0.7,
samples=10, 
xlabel= {$dt$},
title = {$\sigma = .2$},
xtick={-0.5,0,0.5},
xticklabels={$-.5$, 0, $.5$},
ytick={0,0.636},
yticklabels={0,\scriptsize $\frac{2}{\pi}$},
line width=1.0pt,
mark size=1.5pt,
ymajorgrids,
xmajorgrids,
legend style= {at={(1,1)},anchor=north east,draw=black,fill=white,align=left}
]
\addplot[color = blue, smooth, mark = o ] plot table [col sep=comma] {Figs/minmax_lmd_b0.5_sig0.2.csv};

\addplot[color = black!30!green, smooth, dashed] plot table [x = x, y  = z,col sep=comma] {Figs/minimax_th_b0.5_sig0.2.csv};

\addplot[color = red, smooth] plot table [x = x, y  = y,col sep=comma] {Figs/minimax_th_b0.5_sig0.2.csv};

\end{axis}
\end{tikzpicture}

\begin{tikzpicture}[scale = 0.75]
\begin{axis}[
width=6cm, height=5cm,
xmin = -0.5, xmax=0.5, 
restrict y to domain = 0:100,
ymin = 0,
ymax = 0.7,
samples=10, 
xlabel= {$dt$},
xtick={-0.5,0,0.5},
title = {$\sigma = .1$},
xticklabels={$-.5$, 0, $.5$},
ytick={0,0.636},
ylabel = {$\lambda(dt)$},
yticklabels={0,\scriptsize $\frac{2}{\pi}$},
line width=1.0pt,
mark size=1.5pt,
ymajorgrids,
xmajorgrids,
legend style= {at={(1,1)},anchor=north east,draw=black,fill=white,align=left}
]
\addplot[color = blue, smooth, mark = o ] plot table [col sep=comma] {Figs/minmax_lmd_b0.5_sig0.1.csv};

\addplot[color = black!30!green, smooth, dashed] plot table [x = x, y  = z,col sep=comma] {Figs/minimax_th_b0.5_sig0.1.csv};

\addplot[color = red, smooth] plot table [x = x, y  = y,col sep=comma] {Figs/minimax_th_b0.5_sig0.1.csv};

\end{axis}
\end{tikzpicture}
\begin{tikzpicture}[scale = .75]
\begin{axis}[
width=6cm, height=5cm,
xmin = -0.5, xmax=0.5, 
restrict y to domain = 0:10,
ymin = 0,
ymax = .7,
samples=10, 
xlabel= {$dt$},
title = {$\sigma = .05$},
xtick={-0.5,0,0.5},
xticklabels={$-.5$, 0, $.5$},
ytick={0,0.636},
yticklabels={0,\scriptsize $\frac{2}{\pi}$},
line width=1.0pt,
mark size=1.5pt,
ymajorgrids,
xmajorgrids,
legend style= {at={(1,1)},anchor=north east,draw=black,fill=white,align=left}
]
\addplot[color = blue, smooth, mark = o ] plot table [x = x, y  = y,col sep=comma] {Figs/minmax_lmd_b0.5_sig0.05.csv};

\addplot[color = black!30!green, smooth, dashed] plot table [x = x, y  = z,col sep=comma] {Figs/minimax_th_b0.5_sig0.05.csv};

\addplot[color = red, smooth] plot table [x = x, y  = y,col sep=comma] {Figs/minimax_th_b0.5_sig0.05.csv};

\end{axis}
\end{tikzpicture}
\caption{\label{fig:minimax_support}
Optimal threshold density under distributed encoding. The threshold density $\lambda^\star(dt)$ (blue) that maximizes the asymptotic relative efficiency for $f(x)$ the normal density with variance $\sigma^2$ and $\Theta= [-1/2,1/2]$. 
The continuous curve (red) is the ARE for each $\theta \in [-1/2,1/2]$ under the optimal density, hence the minimax ARE is the minimal value of this curve. The dashed curve (green) is the ARE when the threshold values are uniformly distributed over $[-1/2,1/2]$; its minimal value is $\kappa_{\unif}$ \eqref{eq:uniform_risk}. }
\end{center}
\end{figure}

\begin{figure}
  \begin{center}
\begin{tikzpicture}[scale = 1]
\begin{axis}[
width=8cm, height=6cm,
xmin=.03,
xmax=2.5, 
xmode = log,
restrict y to domain = 0:100,
ymin = 0,
ymax = 1,
samples=1, 
xlabel= {$\sigma$},
ytick={0,0.637,1},
yticklabels={0,$\frac{2}{\pi}$,1},
ylabel = {\scriptsize Relative Efficiency},
line width=1.0pt,
mark size=1.5pt,
ymajorgrids,
xmajorgrids,
legend style= {at={(1,1)},anchor=north east,draw=black,fill=white,align=left}
]
\addplot[color = red, smooth,
line width=1.0pt,
] plot table [x = x, y = y, col sep=comma] {./Figs/minmax_ARE_b0.5.csv};
\addlegendentry{\scriptsize optimal threshold density};

\addplot[color = black!35!green, dashed, line width=1.0pt,
        ]
 plot table [x = x, y = z, col sep=comma] {./Figs/minmax_ARE_b0.5.csv};
\addlegendentry{\scriptsize uniform threshold density};

\addplot[color = black, smooth, dotted, line width = 1pt] 
coordinates {
            (0.03, .637) (10, .637)
            };
\addlegendentry{\scriptsize attained in the adaptive case};
\end{axis}
\end{tikzpicture}
\caption{\label{fig:minimax_ARE} 
Minimax relative efficiency under distributed encoding. ARE versus $\sigma$ for $f(x)$ the standard normal density with variance $\sigma^2$ and parameter space $\Theta = [-1/2,1/2]$. The dashed curve (green) is the ARE under a uniform threshold density over $\Theta$ given by $K_{\unif}\sigma^2$ of \eqref{eq:uniform_risk}. \newtext{
The line $\pi/2$ is attained under adaptive encoding uniformly over the parameter space for any $\sigma$.}}
\end{center}
\end{figure}

\section{Conclusions \label{sec:conclusions}}
We considered the risk and efficiency in estimating the mean of a symmetric and log-concave distribution from a sequence of bits, where each bit is obtained by encoding a single sample from this distribution. 
In an adaptive encoding setting, we showed that, asymptotically, no estimator can be more efficient than the median of the samples. We also showed that this bound is tight by presenting two adaptive encoding and estimation procedures that are as efficient as the median. Furthermore, we showed that only one round of adaptivity is required to attain optimal efficiency. In the distributed setting we provided conditions for local asymptotic normality of the encoded samples, which implies asymptotic minimax bound on both the risk and efficiency relative to the mean. 
Under local asymptotic normality, the optimal estimation performance derived for the adaptive case can only be attained over a finite number of points, i.e., no scheme is uniformly optimal in this setting. 
We further considered the special case where the sequence of bits is obtained in a distributed manner by comparing against a prescribed sequence of thresholds. We characterized the performance of the optimal estimator from such bit-sequence using the density of the thresholds and considered the density that minimizes the minimax risk. 
%

Natural extensions of this work include situations when the communication bit-budget $b$ is larger than one and when each sample is a $d$-dimensional vector. Bounds on rate of convergence of the MSE in this general case follow from several recent works (e.g. \cite{zhang2013information,shamir2014fundamental,
braverman2016communication,han2018geometric,
barnes2020lower,cai2020distributed}), that in particular imply that in some cases the MSE decreases in the regular parametric rate of $1/n$ when $b$ and $d$ are held fixed in the sample size $n$. Nevertheless, the coefficient of the leading $1/n$ term corresponding to the ARE, which we characterized here in the case $b=1$ and $d=1$, is still unknown in the general case.


\newpage

\section*{Appendices}


\section{Fast convergence of uniform estimators under bit constraints}
\label{sec:uniform-weirdos}

Here we consider the uniform distribution as our location family,
demonstrating that in the adaptive setting~\eqref{item:adaptive} or even the
one-step adaptive setting~(\ref{item:one-step-adaptive}'), constrained
estimators can attain rates faster than the $1 / \sqrt{n}$ rates regular
estimands allow. Indeed, define $c(x) = -\log 2$ for $x \in [-1, 1]$ and
$c(x) = -\infty$ for $x \not \in [-1, 1]$. Then $f(x) = e^{-c(x)}$ is
log-concave and symmetric, and we may consider the location family with
densities $f(x - \theta)$. For notational simplicity, we assume we have
a sample of size $2n$. We provide a proof sketch that
there is a one-step adaptive estimator $\theta_n$ such that
\begin{equation}
  \label{eqn:uniforms-are-easy}
  \sup_{|\theta| \le \log n}
  P_\theta\left(|\theta_n - \theta| \ge \frac{\newtext{16} \log n}{n^{3/4}}\right)
  \le \frac{2}{n^2}.
\end{equation}
for all large $n$,
and so (by the Borel-Cantelli lemmas), for any $\theta \in \R$ we have
$P_\theta(|\theta_n - \theta| \le \newtext{16 \log n} / n^{3/4} ~
\mbox{eventually}) = 1$. This is of course faster than the $1/\sqrt{n}$
rates we prove throughout.

To prove inequality~\eqref{eqn:uniforms-are-easy}, we proceed in two steps,
both quite similar.
First, we define an initial estimator $\theta\init_n$.
Let $\epsilon > 0$, which we will determine presently, though we will
take $n \epsilon \to \infty$ as $n \to \infty$, so that we may assume
w.l.o.g.\ that $\theta \in [-n\epsilon/2, n \epsilon/2]$. Take the interval
$[-n\epsilon, n\epsilon]$, and construct
$m$ thresholds at intervals of size $2 n \epsilon / m$; let
the $j$th such threshold be
\begin{equation*}
  t_j \defeq -n \epsilon + \frac{2 n (j-1) \epsilon}{m}
\end{equation*}
Then we ``assign'' observations to each pair of thresholds, so that
threshold $j$ corresponds to observations $I_j \defeq \{\frac{n (j - 1)}{m}
+ 1, \ldots, \frac{n j}{m}\}$, of which there are $n/m$. For each index $i
\in I_j$, we set
\begin{equation*}
  B_i = \begin{cases}
    1 & \mbox{if~} X_i \newtext{- 1} \ge t_j \\
    0 & \mbox{otherwise}.
  \end{cases}
\end{equation*}
Then we simply set $\theta\init_n$ to be the \newtext{minimal} threshold for which
$B_i = 0$ for all observations $X_i$ corresponding to that threshold. \newtext{Denote by $j^*$ the index of the threshold corresponding to $\theta\init_n$.}

\newtext{Let us now consider the probability that $\theta_n\init$ is substantially wrong. Set $\theta_M \equiv \max_{i \in I_j^*} X_i - 1$. Note that we always have $\theta\init_n \ge \theta_M$ because no observations will be above
$t_j^*+1$, and that $\theta\init_n \le \theta_M + 2n\epsilon/m$. 
 In addition, 
\begin{align*}
P_\theta\left( \left| \theta_M - \theta \right| \ge \frac{2n\epsilon}{m} \right) = \left(1 - \frac{2 n \epsilon}{m} \right)^{n/m}.
\end{align*}
Putting it all together using the triangle inequality, we have
\begin{align*}
  P_\theta\left( \left| \theta\init_n - \theta \right| \ge \frac{4n\epsilon}{m} \right) \le \left(1 - \frac{2 n \epsilon}{m} \right)^{n/m} \le e^{-2 \frac{n^2}{m^2} \epsilon}
  .
\end{align*}
Therefore, setting the number of bins $m = \sqrt{n}$ and
the resolution $\epsilon = \log n / n$, 
\begin{align}
  \label{eqn:quality-of-initial-estimate}
  \sup_{|\theta| \le \log n} P_\theta\left(|\theta_n\init - \theta| \ge
  \frac{4 \log n}{\sqrt{n}}\right) \le \frac{1}{n^2}.
\end{align}
}

The second stage estimator follows roughly the same strategy, except that
the resolution of the bins is tighter. In particular, let us assume that
$|\theta_n\init - \theta| \le \frac{8 \log n}{\sqrt{n}}$, which happens
eventually by inequality~\eqref{eqn:quality-of-initial-estimate}.  (We will
assume this tacitly for the remainder of the argument.)  Consider the
interval $\Theta_n \defeq \theta_n\init + [-\frac{16 \log n}{\sqrt{n}},
  \frac{16 \log n}{\sqrt{n}}]$ centered at $\theta_n\init$; we know that the
interval includes $[\theta - \frac{8 \log n}{\sqrt{n}}, \theta + \frac{8
    \log n}{\sqrt{n}}]$. Without loss of generality we assume $\theta_n\init
= 0$.  Following precisely the same discretization strategy as that for
$\theta_n\init$, we divide $\Theta_n$ into $m$ equal intervals, with
thresholds $t_j = -\frac{16 \log n}{\sqrt{n}} + \frac{32 (j - 1) \log n}{m
  \sqrt{n}}$; let $\epsilon_n = \frac{32 \log n}{m \sqrt{n}}$ be the width of
these intervals.  Then following exactly the same reasoning as above, we
assign indices $I_j = \{\frac{n(j - 1)}{m} + 1, \ldots, \frac{n j}{m}\}$ and
for $i \in I_j$, set $B_i = 1$ if $X_i \newtext{- 1} \ge t_j$. We define $\theta_n$ to be
the \newtext{minimal} threshold $t_j$ for which $B_i = 0$ for all observations $X_i
\in I_j$. Then following precisely the reasoning above, we have (on the
event that $|\theta_n\init - \theta| \le \frac{8 \log n}{\sqrt{n}}$)
\begin{align*}
  & P_\theta(|\theta_n - \theta|
  \ge 2 \epsilon_n)
  \le (1 - \epsilon_n)^\frac{n}{m}
  \le \exp\left(-\frac{n \epsilon_n}{m}\right) \\
  & \quad = \exp\left(-\frac{32 \sqrt{n} \log n}{m^2}\right).
\end{align*}
Set $m = 4 n^{1/4}$ to obtain the claimed
result~\eqref{eqn:uniforms-are-easy}.

\subsection{Proof of Proposition~\ref{prop:CEO}
\label{app:proof:CEO}}

Denote by $D^\star$ the optimal MSE in the Gaussian CEO with $L$ observers and under a total sum-rate $r = r_1 + \ldots +r_L$. An expression for $D^\star$ as a function of $r$ is give as \cite[Eq. 10]{chen2004upper}:
\begin{equation} \label{eq:ceo_optimal_sumrate}
r = \frac{1}{2} \log^+ \left[ \frac{\sigma_\theta^2}{D^\star} \left( \frac{D^\star L}{ D^\star L - \sigma^2 + D^\star \sigma^2 / \sigma_\theta^2 }\right)^L  \right].
\end{equation}
For the special case where $r = n$ and $L=n$, we have
\begin{equation} \label{eq:ceo_optimal_sumrate2}
n = \frac{1}{2} \log_2 \left[ \frac{\sigma_\theta^2}{D^\star} \left(\frac{ D^\star n }{D^\star n - \sigma^2 + D^\star \sigma^2/\sigma_\theta^2 }  \right)^n  \right].
\end{equation}
Consider the distributed encoding setting (iii) in the case where $f(x) = \Ncal(0,\sigma^2)$ and the prior on $\Theta$ is $\pi = \Ncal(0,\sigma_\theta^2)$. The Gaussian CEO problem of \cite{viswanathan1997quadratic} with a unit bitrate $r_1=\ldots = r_n =1$ at each terminal and blocklength $k=1$ reduces to our distributed setting (iii). Since $D^\star$ satisfying \eqref{eq:ceo_optimal_sumrate2} describes the MSE in the CEO setting under an optimal allocation of the sum-rate $r = n$ among $n$ encoders, it provides a lower bound to the minimal MSE in estimating $\theta$ in the distributed setting. 
\newtext{By noting that $1/D^\star$ grows no faster than a polynomial in $n$ \cite{viswanathan1997quadratic}, we rely on the expansion
\begin{align*}
\left(\frac{\sigma_{\theta}^2}{D^\star}\right)^{1/n} 
= 1 + \frac{\log \left(\frac{\sigma_{\theta}^2}{D^\star}\right)}{n}+\frac{\log^2\left(\frac{\sigma_{\theta}^2}{D^\star}\right)}{2 n^2}+O\left(n^{-3}\right),
\end{align*}
to obtain that, in limit $n\rightarrow \infty$, \eqref{eq:ceo_optimal_sumrate2} behaves as 
\begin{align*}
D^\star = 
\frac{4 \sigma^2}{3 n}
+ \frac{16 \sigma^2}{9 n^2 \sigma_{\theta}^2 }-\frac{4 \sigma^2 \log \left(\frac{\sigma_{\theta}^2}{D^\star}\right)}{9 n^2}
+O(n^{-3}).
\end{align*}
}
This implies Proposition~\ref{prop:CEO}. 

\section{Proof of Theorem~\ref{thm:adpative_lower_bound}}
\label{proof:thm:adpative_lower_bound}

We begin with two technical lemmas.
\begin{lem} \label{lem:bound_intervals}
  Let $f$ be a log-concave and symmetric density function
  for which Assumption~\ref{assump:failure_rate} holds. For any $x_1 \geq
  \ldots \geq x_n \in \R$,
  \begin{align}
    & \frac{ \left| \sum_{k=1}^n (-1)^{k+1} f(x_k) \right|^2 }{
      \left( \sum_{k=1}^n (-1)^{k+1} F(x_k) \right) \left(1- \sum_{k=1}^n (-1)^{k+1} F(x_k) \right) } \nonumber
     \\ & \qquad \leq  4f(0)^2.
    \label{eq:lem_bound_intervals}
  \end{align}
\end{lem}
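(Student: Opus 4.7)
The plan is to reinterpret the ratio in \eqref{eq:lem_bound_intervals} as the Bernoulli Fisher information of a quantized observation $B = \indic{X \in A}$ about the location parameter $\theta$, and then upper-bound this by the Bernoulli Fisher information of the optimal threshold at the median, which equals $4f(0)^2 = \eta(0)$ under Assumption~\ref{assump:failure_rate}. Concretely, given $x_1 \ge x_2 \ge \cdots \ge x_n$, define the union of disjoint intervals
\begin{equation*}
  A \defeq \bigcup_{k\,:\,2k \le n} (x_{2k}, x_{2k-1}] \ \cup\ \mathbf 1_{\{n \text{ odd}\}} \cdot (-\infty, x_n],
\end{equation*}
so that $P(A) = \sum_{k=1}^n (-1)^{k+1} F(x_k)$ and, by the fundamental theorem of calculus on each subinterval, $\int_A f'(x)\,dx = \sum_{k=1}^n (-1)^{k+1} f(x_k)$. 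Thus the LHS of \eqref{eq:lem_bound_intervals} equals $(\int_A f')^2 / (P(A)(1-P(A)))$, which is the location Fisher information of $B = \indic{X\in A}$ at $\theta=0$.

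The technical core is the following \emph{rearrangement bound}: for every measurable $A$ with $P(A)=p\in(0,1)$,
\begin{equation*}
  \Bigl|\int_A f'(x)\,dx\Bigr| \;\le\; f(F^{-1}(p)).
\end{equation*}
I would prove this by a Neyman--Pearson argument based on the log-concavity of $f$. Let $\tau = F^{-1}(p)$ and $\lambda = f'(\tau)/f(\tau)$. Since $f'/f = -(\log f)'$ is non-increasing by log-concavity, the super-level set $\{x : f'(x) \ge \lambda f(x)\}$ is an interval of the form $(-\infty,\tau]$, which has probability exactly $p$. The Neyman--Pearson inequality (evaluated on the signed measure $f'\,dx$ relative to $f\,dx$) then gives
\begin{equation*}
  \int_A f'\,dx \;-\; \lambda\int_A f\,dx \;\le\; \int_{-\infty}^{\tau} f'\,dx \;-\; \lambda\int_{-\infty}^{\tau} f\,dx \;=\; f(\tau)-\lambda p,
\end{equation*}
and since $\int_A f\,dx = p$, this simplifies to $\int_A f' \le f(\tau)$. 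Applying the same argument to $-f'$ (or equivalently using symmetry of $f$, which yields $f(F^{-1}(1-p)) = f(F^{-1}(p))$) gives the matching lower bound $\int_A f' \ge -f(\tau)$.

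With the rearrangement bound in hand, the conclusion is immediate from Assumption~\ref{assump:failure_rate}: writing $p = P(A)$ and $\tau = F^{-1}(p)$,
\begin{equation*}
  \frac{(\int_A f'\,dx)^2}{P(A)(1-P(A))}
  \;\le\; \frac{f(\tau)^2}{F(\tau)(1-F(\tau))}
  \;=\; \eta(\tau) \;\le\; \eta(0) \;=\; 4f(0)^2.
\end{equation*}

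\textbf{Main obstacle.} The delicate step is the Neyman--Pearson/rearrangement bound: it uses log-concavity only through the monotonicity of $f'/f$, and one must verify that the level set $\{f' \ge \lambda f\}$ is a (possibly infinite or empty) left half-line and that some valid Lagrange multiplier $\lambda$ exists for every $p\in(0,1)$. When $f$ is not strictly log-concave (e.g., if $f'/f$ is only non-increasing but flat on an interval), one can handle it by a small perturbation or by choosing $\lambda$ and $\tau$ so that the super-level set has probability at least $p$ and replacing $A^\star$ by a subset of probability exactly $p$; the Neyman--Pearson comparison still goes through because $f'-\lambda f$ has a constant sign on $A^\star \triangle A$.
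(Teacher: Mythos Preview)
Your argument is correct and is genuinely different from the paper's. The paper obtains Lemma~\ref{lem:bound_intervals} as the $\delta=0$ case of a more general inequality (Lemma~\ref{lem:bound_intervals_delta}), proved by a somewhat laborious induction on $n$: one fixes the point $x_{i^*}$ of smallest absolute value, optimizes over the remaining coordinates, shows the interior critical points force all coordinates to coincide, and then handles each boundary face by the induction hypothesis, with a separate approximation step for non--strictly log-concave $f$. Your Neyman--Pearson/rearrangement route is both shorter and more transparent: log-concavity enters only through monotonicity of $f'/f$ (so the optimal ``detection region'' for $\int_A f'$ at fixed mass $p$ is a half-line), while Assumption~\ref{assump:failure_rate} enters only in the final step $\eta(F^{-1}(p))\le\eta(0)$. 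One small slip: $f'/f=(\log f)'$, not $-(\log f)'$; the monotonicity you use is unaffected.

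The trade-off is that the paper's induction actually yields the sharper conclusion
\[
\frac{\bigl|\sum_k(-1)^{k+1}f(x_k)\bigr|^2}{\bigl(\sum_k(-1)^{k+1}F(x_k)\bigr)\bigl(1-\sum_k(-1)^{k+1}F(x_k)\bigr)}\;\le\;\max_k\eta(x_k),
\]
not merely $\le\eta(0)$, and this refinement is what is invoked later in the proof of Theorem~\ref{thm:non_existence} (to bound summands with all endpoints bounded away from $\theta$ by $\eta(\epsilon)<\eta(0)$). Your bound gives instead $\le\eta\bigl(F^{-1}(P(A))\bigr)$, which can exceed $\max_k\eta(x_k)$ (e.g.\ a single interval straddling the origin with $P(A)=1/2$ has $F^{-1}(P(A))=0$). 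So for Lemma~\ref{lem:bound_intervals} as stated your proof is complete and arguably preferable, but it does not by itself replace Lemma~\ref{lem:bound_intervals_delta} in the rest of the paper.
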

\begin{lem} \label{lem:fisher_bound}
  Let $X$ be a random variable with a symmetric, log-concave, and
  continuously differentiable density function $f(x)$ such that Assumption~\ref{assump:failure_rate} holds. For a Borel measurable set $A$, define
  \begin{equation*}
    B(x) \defeq \begin{cases} 1
      & \mbox{if} ~x \in A, \\
      -1 & \mbox{if}~ x \notin A.
    \end{cases}
  \end{equation*}
  The Fisher information of $B$ with respect to $\theta$ is bounded from above by $\eta(0)$.
\end{lem}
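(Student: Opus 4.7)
The plan is to reduce to the case where $A$ is a finite disjoint union of intervals, invoke Lemma~\ref{lem:bound_intervals} directly in that case, and then extend to arbitrary Borel $A$ by a density argument. First I would exploit translation invariance of the location family $\{P_\theta\}$ to reduce to $\theta = 0$. Since $B$ takes only two values, its Fisher information under $\theta$ coincides with the Fisher information of a Bernoulli with success probability $p(\theta) \defeq P_\theta(X \in A) = \int_A f(x-\theta)\,dx$; differentiating under the integral at $\theta = 0$ gives $p'(0) = -\int_A f'(x)\,dx$, so the task becomes showing
\[
I_B(0) \;=\; \frac{\bigl(\int_A f'(x)\,dx\bigr)^2}{P(A)(1-P(A))} \;\le\; 4f(0)^2 \;=\; \eta(0).
\]

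Next, I would handle the case in which $A = \bigsqcup_{k=1}^{m}(x_{2k}, x_{2k-1}]$ is a finite disjoint union of intervals, with $x_1 \ge x_2 \ge \cdots \ge x_n$ (allowing $x_1 = +\infty$ or $x_n = -\infty$ with the conventions $F(+\infty)=1$, $F(-\infty)=0$, and $f(\pm\infty)=0$). A direct calculation of each quantity in terms of the endpoints yields
\[
P(A) \;=\; \sum_{k=1}^n (-1)^{k+1} F(x_k), \qquad -\!\int_A f'(x)\,dx \;=\; -\sum_{k=1}^n (-1)^{k+1} f(x_k),
\]
so $I_B(0)$ equals exactly the ratio on the left-hand side of inequality~\eqref{eq:lem_bound_intervals}. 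Applying Lemma~\ref{lem:bound_intervals} then delivers the desired bound for all finite unions of intervals.

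Finally I would extend to an arbitrary Borel set $A$ by approximation. Because $f$ is symmetric, continuous, log-concave and continuously differentiable, it is unimodal with mode $0$, so $f'\ge 0$ on $(-\infty,0]$ and $f'\le 0$ on $[0,\infty)$, which gives $\int |f'(x)|\,dx = 2 f(0) < \infty$. Hence both $f(x)\,dx$ and $|f'(x)|\,dx$ are finite Borel measures, and by standard density of the algebra of finite unions of intervals in the Borel $\sigma$-algebra, there exist finite unions of intervals $A_n$ with
\[
\int \bigl|\mathbf 1_{A_n}(x) - \mathbf 1_A(x)\bigr|\bigl(f(x) + |f'(x)|\bigr)\,dx \;\longrightarrow\; 0.
\]
This gives $P(A_n) \to P(A)$ and $\int_{A_n} f'\,dx \to \int_A f'\,dx$, and passing to the limit in the inequality established for each $A_n$ concludes the proof whenever $P(A) \in (0,1)$. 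When $P(A) \in \{0,1\}$ the variable $B$ is almost surely constant, so $I_B(0) = 0$ and the bound is trivial.

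The main obstacle I expect is the bookkeeping in the finite-intervals step: correctly representing a general finite union of intervals (including unbounded components and components containing $\pm\infty$) by a monotonically ordered sequence $x_1 \ge \cdots \ge x_n$ so that the signs of $P(A)$ and $-\int_A f'\,dx$ line up with the alternating sums in Lemma~\ref{lem:bound_intervals}. The approximation step itself is standard once one observes that $f'$ is integrable by symmetric unimodality.
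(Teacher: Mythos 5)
Your proposal is correct and follows essentially the same route as the paper: write the Fisher information of the Bernoulli variable $B$ as $\bigl(\int_A f'\bigr)^2 / \bigl(P(A)(1-P(A))\bigr)$, reduce to a finite union of intervals, express $P(A)$ and $\int_A f'$ as the alternating sums over ordered endpoints, and invoke Lemma~\ref{lem:bound_intervals}. The one place you are actually a bit more careful than the paper is the approximation step: the paper appeals to regularity of Lebesgue measure to replace $A$ by a finite union of intervals (which, as stated, presumes $A$ has finite Lebesgue measure and that $I_\theta$ varies continuously with $A$), whereas your argument approximates in $L^1$ with respect to the \emph{finite} measure $(f + |f'|)\,dx$, using $\int|f'| = 2f(0) < \infty$, which directly yields $P(A_n)\to P(A)$ and $\int_{A_n} f' \to \int_A f'$ with no finiteness hypothesis on $|A|$. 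That is a modest but genuine tightening of the paper's reduction; everything else, including the $P(A)\in\{0,1\}$ edge case, matches.
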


Lemma~\ref{lem:bound_intervals} is the special case $\delta = 0$ of
Lemma~\ref{lem:bound_intervals_delta} to come in
Section~\ref{sec:bound_intervals_delta}. We now prove
Lemma~\ref{lem:fisher_bound}.

\begin{proof-of-lemma}[\ref{lem:fisher_bound}]  
  We first note that in the special case where $f$ is a normal density, Lemma~\ref{lem:fisher_bound} follows from \cite[Thm.~3]{Barnes2018}. The proof below, valid for any log-concave symmetric density satisfying Assumption~\ref{assump:failure_rate}, is based on a different techique than that of \cite{Barnes2018}. \par
  Write the Fisher information of $B$ with respect to $\theta$ as
  \begin{align}
    I_\theta & =  \ex{ \left( \frac{d}{d\theta} \log P\left( B | \theta \right) \right)^2 |\theta } \nonumber \\
    & = \frac{ \left(\frac{d}{d\theta} P(B=1|\theta) \right)^2}{P(B=1| \theta)} + \frac{ \left(\frac{d}{d\theta} P(B=-1|\theta) \right)^2} {P(B=-1| \theta)} \nonumber \\
    & =  \frac{ \left( \frac{d}{d\theta} \int_A f \left( x-\theta\right)dx \right)^2} { P(B=1| \theta) } + \frac{ \left( \frac{d}{d\theta}\int_A f \left( x-\theta \right)dx \right)^2} { P(B=-1| \theta) } \nonumber \\ 
    & \overset{(a)}{=} \frac{ \left( - \int_A f' \left( x-\theta \right)dx \right)^2} { P(B=1| \theta) } + \frac{ \left(- \int_A f' \left( x-\theta \right)dx \right)^2} { P(B=-1| \theta) } \nonumber \\ 
    & = \frac{\left( \int_A f'\left( x-\theta \right) dx \right)^2 }{  P(B=1 | \theta) \left(1-P(B=1|\theta) \right)  }, \nonumber \\
    & = \frac{\left( \int_A f'\left( x-\theta \right) dx \right) \left( \int_A f'\left( x-\theta \right) dx \right)}{ \left( \int_A f \left( x-\theta \right) dx \right)  \left(1- \int_A f \left( x-\theta \right) dx \right) }, \label{eq:lem_fisher_bound_proof1}
  \end{align}
  where differentiation under the integral sign in $(a)$ is justified since $f$ is log-concave hence a.e.\ differentiable
  (cf.~\cite{Bertsekas73}) with a.e.\ derivative $f'(x)$. By regularity of the Lebesgue, for any $\epsilon>0$ there exists a finite number $k$ of disjoint open intervals $I_1,\ldots I_k$ such that
  \begin{equation*}
    \int_{A\setminus \cup_{j=1}^k I_j }  dx < \epsilon.
  \end{equation*}
  It follows that for any $\epsilon' > 0$, the set $A$ in
  \eqref{eq:lem_fisher_bound_proof1} can be replaced by a finite union of disjoint intervals without increasing $I_\theta$ by more than
  $\epsilon'$. Consequently, we may proceed assuming that
  $A$ is of the form
  \begin{equation*}
    A = \cup_{j=1}^k (t^+_j,t^-_j),
  \end{equation*}
  with $\infty \leq t^-_1 \leq \ldots \leq t^-_k$, $t^+_1 \leq \ldots \leq t^+_k \leq \infty$ and $t^-_j \leq t^+_j$ for $j=1,\ldots,k$. Under this assumption, 
  \begin{align*}
    P_{\theta}(B_n=1) & = \sum_{j=1}^k \left( F \left(t^+_j-\theta\right) -  F \left(t^-_j-\theta\right)  \right),
  \end{align*}
  so we may rewrite Eq.~\eqref{eq:lem_fisher_bound_proof1} as
  \begin{align*}
    I_\theta & = \frac { \left( \sum_{j=1}^{k} \left[ f \left(t^+_j-\theta \right) - f \left( t^-_j-\theta \right) \right] \right)^2 } 
    { \left( \sum_{j=1}^k \left[ F \left( t^+_j-\theta \right) - F \left( t^-_j-\theta \right) \right] \right) }  \nonumber \\
    & \times \frac {1} 
    {1- \left( \sum_{j=1}^k \left[ F \left(  t^+_j-\theta \right) - F \left( t^-_j-\theta \right) \right] \right) } 
  \end{align*}
  It follows from Lemma~\ref{lem:bound_intervals} that for any $\theta \in \R$ and any choice of the intervals' endpoints,
  \begin{equation*}
    I_\theta \le
    \max_{t \in \{t^\pm_1,\ldots,t^\pm_k\} } 4f(t)^2 \leq 4 f(0)^2. 
  \end{equation*}
\end{proof-of-lemma}

We now prove Theorem~\ref{thm:adpative_lower_bound}. Write the Fisher information of $B_1,\ldots,B_n$ with respect to $\theta$ as 
\begin{align*}
I_\theta(B_1,\ldots,B_n) = \sum_{i=1}^n I_\theta (B_i|B_1,\ldots,B_{i-1}),
\end{align*}
where $I_\theta (B_i|B_{i-1},\ldots,B_1)$ is the Fisher information of the distribution of $B_i$ given $B_1,\ldots,B_{i-1}$. By the definition of the adaptive setting, $P_{\theta}(B_i|B_1,\ldots,B_{i-1}) = P_{\theta}(X_i \in A_i)$ for some Borel measurable $A_i$. Consequently, Lemma~\ref{lem:fisher_bound} applies, leading to the bound
\[
 I_\theta (B_i|B_{i-1},\ldots,B_1) \leq 4f(0)^2,
\]
We conclude
\begin{align}
I_\theta(B_1,\ldots,B_n) \leq 4f(0)^2 n
\label{eq:fisher_information}.
\end{align}
The Van Trees inequality in the version of 
\cite{gill1995applications} holds under the regularity conditions on $\pi(\cdot)$, which implies
\begin{align*}
\ex{ \left( \theta_n - \theta \right)^2} &  \geq \frac{1}{ \E_{\pi} \left[ I_\theta(B_1,\ldots,B_n)\right] + I_0}.
\end{align*}
Combining the last display with \eqref{eq:fisher_information}, we get
\begin{align*}
\ex{ \left( \theta_n - \theta \right)^2} &  \geq \frac{1}{ 4f(0)^2 n + I_0}.
\end{align*}


\subsection{Isoperimetric Lemma}
\label{sec:bound_intervals_delta}

The following lemma is essential to the proofs
of Theorems~\ref{thm:adpative_lower_bound} and
\ref{thm:non_existence}.

\begin{lem}
  \label{lem:bound_intervals_delta}
  Let $f$ be a log-concave and symmetric density
  function. Let $\delta\geq 0$. Assume that the function
  \begin{equation*}
    \eta_\delta(x) \triangleq  \eta^{1+\delta}(x)/f^\delta(x)
    = \frac{  \left( f(x) \right)^{2+\delta}}{\left(F(x)(1-F(x))\right)^{1+\delta}}
  \end{equation*}
  is non-increasing in $|x|$. Then for any $x_1 \ge \ldots \ge x_n \in \R$,
  \begin{align}
    & \frac{ \left| \sum_{i=1}^n (-1)^{i+1} f(x_i) \right|^{2+\delta} } {\left|
      \sum_{i=1}^n (-1)^{i+1} F(x_i) \right|^{1+\delta} \left|1- \sum_{k=1}^n
      (-1)^{i+1} F(x_i) \right|^{1+\delta} } \nonumber \\
      & \qquad \leq \max_{i} \eta_\delta(x_i).
    \label{eq:lem_bound_intervals_delta}
  \end{align}
\end{lem}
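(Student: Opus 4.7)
I would prove the lemma by induction on $n$, reinterpreting the left-hand side in probabilistic language. A descending sequence $x_1 \ge \cdots \ge x_n$ naturally encodes the endpoints of a union of disjoint intervals $A = \bigcup_j (a_j, b_j)$ for which
\begin{equation*}
  \sum_{i=1}^n (-1)^{i+1} F(x_i) \;=\; P(A),
  \qquad
  \sum_{i=1}^n (-1)^{i+1} f(x_i) \;=\; \dPtheta(A)\big|_{\theta=0},
\end{equation*}
so that when $\delta = 0$ the left-hand side of \eqref{eq:lem_bound_intervals_delta} is exactly the Fisher information of the Bernoulli $\indic{X \in A}$ for the location parameter, and for general $\delta \ge 0$ it is a power-reweighting of that Fisher information. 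The claim is then that this generalized information is controlled by $\eta_\delta$ at one of the boundary points. The base case $n = 1$ is immediate: the left side collapses to $\eta_\delta(x_1)$ exactly.

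\textbf{Inductive step via Lagrangian.} For the inductive step, I would fix $p := P(A) \in (0,1)$ and consider the extremal problem of maximizing $\bigl|\sum_i (-1)^{i+1} f(x_i)\bigr|$ over all admissible $n$-tuples with $\sum_i (-1)^{i+1} F(x_i) = p$. The Lagrangian first-order conditions reduce to $f'(x_i)/f(x_i) = \lambda$ at every boundary point, i.e., all $x_i$ lie on a single level set of the score $f'/f$; log-concavity of $f$ makes $f'/f$ strictly decreasing, so this forces all $x_i$ to coincide, which is degenerate. Consequently any extremum lies on the boundary of the admissible region, where either a pair $(x_i, x_{i+1})$ collides (their contributions cancel, reducing to an $(n-2)$-tuple) or some $x_i$ escapes to $\pm\infty$ (where $f$ together with $F$ or $1-F$ vanish, reducing to an $(n-1)$-tuple). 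In either case the inductive hypothesis applies to the reduced configuration and yields the desired bound.

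\textbf{Main obstacle.} The main obstacle, and the reason the hypothesis that $\eta_\delta$ is non-increasing in $|x|$ appears in the statement, is tracking the $\eta_\delta$-max on the right-hand side along the reduction. When a point is sent to $\pm\infty$, non-increasingness gives $\eta_\delta(\pm\infty) = 0$, so no new maximum is introduced; when two points collide, they are simply removed from the boundary, and the max can only decrease. A secondary technical subtlety is that for $\delta > 0$ the left-hand side is not literally a Fisher information; written in density-quantile coordinates $s_i = F(x_i)$ with $\psi(s) = f(F^{-1}(s))$, the first-order conditions become $\psi'(s_i) = \lambda$, and concavity of $\psi$ (equivalent to log-concavity of $f$) plays the same role as strict monotonicity of $f'/f$, so the same collapse phenomenon holds and the induction closes.
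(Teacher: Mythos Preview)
Your architecture---induction on $n$, Lagrangian first-order conditions forcing all free points onto one level set of $f'/f$, then reduction at the boundary---is exactly the paper's. The gap is in how you set up the optimization. You fix $p=\sum_i(-1)^{i+1}F(x_i)$ and maximize the numerator over \emph{all} admissible $n$-tuples with that $p$; the maximizer sits on the boundary and reduces to fewer points, where induction bounds the left side by $\max_j\eta_\delta(\tilde x_j)$ over the \emph{reduced} points $\tilde x_j$. But the lemma's right side is $\max_i\eta_\delta(x_i)$ for the \emph{given} tuple, which is in general not the maximizer. You have $\mathrm{LHS}(x)\le\mathrm{LHS}(x^{\max})\le\max_j\eta_\delta(\tilde x_j)$, and there is no relation between $\max_j\eta_\delta(\tilde x_j)$ and $\max_i\eta_\delta(x_i)$: the maximizing configuration on the $p$-slice can have points arbitrarily close to $0$ even when all the original $x_i$ are far from $0$. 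Your ``main obstacle'' paragraph tracks the max when passing from the boundary configuration to its reduction, but not along the path from the original configuration to the maximizer, and along that path the points move freely. At best this argument yields $\mathrm{LHS}\le\eta_\delta(0)$, which is weaker than the lemma and insufficient for the application in Theorem~\ref{thm:non_existence}.

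The paper's repair is to anchor the optimization at the point carrying the right-hand side. Let $x^*=x_{i^*}$ have smallest absolute value, so $\max_i\eta_\delta(x_i)=\eta_\delta(x^*)$; \emph{fix} $x^*$ and maximize the full ratio over the remaining $N$ variables $y_1,\ldots,y_N$, subject to the inherited constraint $|y_j|\ge|x^*|$. Now the target $\eta_\delta(x^*)$ is constant throughout, and whenever induction is invoked at a boundary the surviving points still satisfy $|y_j|\ge|x^*|$, so their $\eta_\delta$-max is automatically $\le\eta_\delta(x^*)$. The price is a new boundary face absent from your list: $y_{k+1}=-x^*$, where $f(-x^*)=f(x^*)$ but $F(-x^*)\neq F(x^*)$, so nothing cancels and the configuration does not reduce to a smaller instance of the lemma. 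That case requires a separate direct calculation (comparing $\Sigma(1-\Sigma)$ with $(\Sigma+d)(1-\Sigma-d)$ for $d=\pm(1-2F(-x^*))$), which is where the real work in the paper's proof lies.
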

In particular, 
\begin{align*}
& \frac{ \left| \sum_{i=1}^n (-1)^{i+1} f(x_i) \right|^{2+\delta} }
{\left| \sum_{i=1}^n (-1)^{i+1} F(x_i) \right|^{1+\delta} \left|1- \sum_{i=1}^n (-1)^{i+1} F(x_i) \right|^{1+\delta} } 
\\ 
& \qquad \leq \eta_\delta(0) = 4^{1+\delta} f^{2+\delta}(0).
\end{align*}

\begin{proof-of-lemma}[\ref{lem:bound_intervals_delta}]
  Denote 
  \begin{equation*}
    \delta_n(x_1,\ldots,x_n) \triangleq \sum_{i=1}^{n} s_i f(x_i),
  \end{equation*}
  \begin{equation*}
    \Delta_n(x_1,\ldots,x_n) \triangleq  \sum_{i=1}^n s_i F(x_i), 
  \end{equation*}
  where $s_i \triangleq (-1)^{i+1}$. We use induction on $n \in \N$ to show that 
  \begin{align}
    & \frac{\left| \delta_n(x_1,\ldots,x_n) \right|^{2+\delta}} 
         {\left|\Delta_n(x_1,\ldots,x_n)\left(1- \Delta_n(x_1,\ldots,x_n) \right) \right|^{1+\delta} } \nonumber \\
         & \qquad \leq \max_{i}\eta_{\delta}(x_i). \label{eq:lemm:interval_bounds:to_show}
  \end{align}
  Since 
  \begin{equation*}
    \eta_\delta(x) =  \frac{  \left|\delta_1(x) \right|^{2+\delta}}{\left|\Delta_1(x)
      (1-\Delta_1(x)) \right|^{1+\delta}}, 
  \end{equation*}
  The case $n=1$ is trivial.  
  Assume that \eqref{eq:lemm:interval_bounds:to_show} holds for all integers up to $n = N$ and for any $x_1 \ge \ldots \ge x_N$. Consider the case $n = N+1$. Let $i^*$ be the index such that $x_{i^*}$ has minimal absolute value among $x_1,\ldots,x_N$. The assumption on $\eta_\delta(x)$ implies that
  \begin{equation*}
    \eta_\delta(x_{i^*}) = \max_i \eta_\delta(x_i).
  \end{equation*}
  Since the LHS of \eqref{eq:lem_bound_intervals_delta} is invariant to a sign flip of all $x_1,\ldots,x_{N+1}$, we may assume that $x_{i^*}$ is positive without loss of generality. 
  Set $x^* = x_{i^*}$ and let $k=i^*-1$. Consider the function
  \begin{align}
    & g(y_1,\ldots,y_N) \triangleq g(y_1,\ldots,y_N|x^*,k) \label{eq:g_def} \\
    &  \triangleq  \frac{ \left| \delta_{N+1}(y_1,\ldots,y_k,x^*,y_{k+1}\ldots,y_N) \right|^{2+\delta}}{
     \left|\Delta_{N+1}(y_1,\ldots,y_k,x^*,y_{k+1}\ldots,y_N)\right|}
    \nonumber \\
    & \times \frac{1}{\left|1 -\Delta_{N+1}\left(y_1,\ldots,y_k,x^*,y_{k+1}\ldots,y_N\right) \right|^{1+\delta}} \nonumber
  \end{align}
  The LHS of \eqref{eq:lemm:interval_bounds:to_show} is obtained by taking $y_i=x_{k_i}$ where $k_i$ is the $i$th element in $\{1,\ldots,N+1\}\setminus \{i^*\}$. It is therefore enough to prove that 
  \begin{equation*}
    \max_{(y_1,\ldots,y_N) \in A_N(x^*,k) } g(y_1,\ldots,y_N) \leq \eta_{\delta}(x^*),
  \end{equation*}
  where 
  \begin{align*}
    A_N(x^*,k) & \triangleq \left\{ (y_1,\ldots,y_N) \in \R^N\, \right. \\
    & \left. \qquad  : \, y_1 \ge \ldots \ge y_k \ge x^* \ge -x^* \ge y_{k+1} \ldots \ge y_N
    \right\}.
  \end{align*}
  Since $f(x)$ is log-concave and symmetric, we may write $f(x) = e^{c(x)}$
  where $c(x)$ is concave, symmetric, and superdifferentiable on the interior
  of its domain with
  supergradient set $\partial c(x) = \{v \in \R \mid c(y) \le c(x) + v (y - x) ~
  \mbox{for~all}~ y\}$; $c$ is also
  differentiable a.e.\ with derivative
  \begin{equation*}
    c'(x) \triangleq \frac{f'(x)}{f(x)}
  \end{equation*}
  (when it exists), and we otherwise simply treat $f'(x) / f(x) = c'(x) \in
  \partial c(x)$ as an arbitrary element of the superdifferential.  The
  supergradient sets $\partial c(x)$ are increasing, in that $v_0 \in
  \partial c(x_0)$ and $v_1 \in \partial c(x_1)$ implies that $(v_1 - v_0)
  (x_1 - x_0) \le 0$.  We first prove the lemma under the assumption that
  $c$ is strictly concave, or, equivalently, that $v_i \in \partial c(x_i)$
  implies that $(v_1 - v_0)(x_1 - x_0) < 0$ whenever $x_1 \neq x_0$; that
  is, $c'$ is strictly decreasing.

  The maximal value of $g(y_1,\ldots,y_N)$ is attained for the same
  $(y_1,\ldots,y_N) \in A_N(x^*,k)$ that maximizes
  \begin{align*}
    & \log(g)(y_1,\ldots, y_N) \\
    &  =  (2+\delta) \log \left( \delta_N  \right)  - (1+\delta) \log \left( \Delta_N  \right) - (1+\delta) \log \left(1 - \Delta_N \right),
  \end{align*}
  where in the last display and henceforth we suppress the arguments
  $y_1,\ldots,y_k,x^*,y_{k+1},\ldots, y_N$ of the functions $\delta_N$ and
  $\Delta_N$. 
  Within the interior of $A_N(x^*,k)$, all three expressions in
  \eqref{eq:g_def} within an absolute value are positive. It follows that
  partial derivative of $\log(g)(y_1,\ldots,y_N)$ with respect to $y_i$ within
  the interior of $A_N(x^*,k)$ is
  \begin{equation*}
    \frac{\partial \log(g)}{\partial y_i} = \frac{(2+\delta) s_i
      f'(x_i)}{\delta_N } -\frac{(1+\delta) s_i f(x_i)}{\Delta_N } +
    \frac{(1+\delta)s_i f(y_i)}{1-\Delta_N }.
  \end{equation*}
  We conclude that the gradient of $\log(g)$ vanishes if and only if
  \begin{equation}
    \label{eq:gradient_zero}
    c'(y_i) = \frac{f'(y_i)}{f(y_i)} = \frac{1+\delta}{2+\delta} \frac{\delta_N}{2} \left( \frac{1}{\Delta_N} - \frac{1}{1-\Delta_N } \right),
  \end{equation}
  for $i=1,\ldots,N$.
  Since we assumed that $\partial c(x)$ is injective,
  equality~\eqref{eq:gradient_zero} holds if and only if $y_1 = \ldots
  = y_N$. In this case, $g(y_1,\ldots,y_N) = \eta_\delta(x^*)$ if $N$ is
  even. If $N$ is odd and $y_1 = \ldots = y_N > x^*$, then
  \begin{align*}
    & g(y_1,\ldots,y_N) \\
    & \quad = \frac{\left| f(y_1)-f(x^*)  \right|^{2+\delta}} { 
      \left|  F(y_1) - F(x^*) \right|^{1+ \delta} 
      \left| 1 - (F(y_1) -F(x^*)) \right|^{1+ \delta} } 
  \end{align*} 
  which is bounded from above by $\eta_\delta(x^*)$ by the induction hypothesis. The case where $N$ is odd and $-x^* \leq y_1 = \ldots = y_N$ is similar. 
  We now consider the possibility that the maximum of $g(y_1,\ldots,y_N)$ is attained at the boundaries of $A_N(x^*,k)$. At boundary points for which $y_i = y_{i+1}$ for some $i$, the contribution of $y_i$ and $y_{i+1}$ to $g(y_1,\ldots,y_N)$ is zero and the induction assumption for $n=N-1$ implies that 
  \begin{equation*}
    g(y_1,\ldots,y_N) \leq \eta_{\delta}(x^*).
  \end{equation*}
  The remaining boundary points of $A_N(x^*,k)$ are covered by the following cases:
  \begin{itemize}
  \item[(i)]  $y_N = -\infty$. 
  \item[(ii)] $y_1 = \infty$.
  \item[(iii)] $y_k = x^*$.
  \item[(iv)] $y_{k+1} = -x^*$. 
  \end{itemize}
  For case (i), 
  \begin{align*}
    & g(y_1,\ldots,y_N) \\
    &  \to \frac{ \left| \sum_{i=1}^{k} s_i f(y_i) + s_{i^*} f(x^*) - \sum_{i=k+1}^{N-1} s_i f(y_i) \right|^{2+\delta}} 
    {\left| \sum_{i=1}^{k} s_i F(y_i) + s_{i^*} F(x^*) - \sum_{i=k+1}^{N-1} s_i F(y_i) \right|^{1+\delta}} \\
    & \medmath{ \times \frac{1}{\left|1- \sum_{i=1}^{k} s_i F(y_i) - s_{i^*} F(x^*) + \sum_{i=k+1}^{N-1} s_i F(y_i) \right|^{1+\delta}}},
  \end{align*}
  which is smaller than $\eta_\delta(x^*)$ by the induction hypothesis. Similarly, under case (ii),
  \begin{align*}
    & g(y_1,\ldots,y_N) \\
    & \to \medmath{
    \frac{ \left| \sum_{i=2}^{k} s_i f(y_i) + s_{i^*} f(x^*) - \sum_{i=k+1}^{N} s_i f(y_i) \right|^{2+\delta}} 
         {\left| 1+ \sum_{i=2}^{k} s_i F(y_i) + s_{i^*} F(x^*) - \sum_{i=k+1}^{N} s_i F(y_i) \right|^{1+\delta}} }
         \\
         & \medmath{ \times \frac{1}{
         \left|-\left( \sum_{i=2}^{k} s_i F(y_i) + s_{i^*} F(x^*) - \sum_{i=k+1}^{N} s_i F(y_i) \right)  \right|^{1+\delta}}}
         \\
         & ~ \medmath{ = \frac{ \left| -\sum_{i=2}^{k} s_i f(y_i) - s_{i^*} f(x^*) + \sum_{i=k+1}^{N} s_i f(y_i) \right|^{2+\delta}} 
           { \left|1 - \left(-\sum_{i=2}^{k} s_i F(y_i) - s_{i^*} F(x^*) + \sum_{i=k+1}^{N} s_i F(y_i) \right) \right|^{1+\delta} } }\\
          & \medmath{ \times \frac{1}{ 
             \left|-\sum_{i=2}^{k} s_i F(y_i) - s_{i^*} F(x^*) + \sum_{i=k+1}^{N} s_i F(y_i) \right|^{1+\delta} } },
  \end{align*}
  which is smaller than $\eta_{\delta}(x^*)$ by the induction hypothesis. Under case (iii), the terms in $\delta_N$ and $\Delta_N$ corresponding to $y_k$ and $x^*$ cancel each other. As a result,  $g(y_1,\ldots,y_N)$ reduces to an expression with $N-2$ variables hence this case is handled by the induction hypothesis. 
  Finally, under case (iv), set 
  \begin{equation*}
    d \triangleq s_k F(-x^*) + s_{i^*} F(x^*) = s_{i^*}\left(1-2F(-x^*) \right), 
  \end{equation*}
  \begin{equation*}
    \sigma \triangleq \sum_{i=1}^{k-1} s_i f(y_i) - \sum_{i=k+1}^{N} s_i f(y_i). 
  \end{equation*}
  and 
  \begin{equation*}
    \Sigma \triangleq \sum_{i=1}^{k-1} s_i F(y_i) - \sum_{i=k+1}^{N} s_i F(y_i). 
  \end{equation*}
  We have
  \begin{align*}
    & g(y_1,\ldots,y_N) =\nonumber  \\ 
    & =
    \frac{ \left| \sum_{i=1}^{k-1} s_i f(y_i) - \sum_{i=k+1}^{N} s_i f(y_i) \right|^{2+\delta}} 
         {\left| \sum_{i=1}^{k-1} s_i F(y_i) + d(x^*) - \sum_{i=k+1}^{N} s_i F(y_i) \right|^{1+\delta}} \\
         & \frac{1}{\left|1- \sum_{i=1}^{k-1} s_i F(y_i) - d(x^*) + \sum_{i=k+1}^{N} s_i F(y_i)   \right|^{1+\delta} }, \\
         & = \frac{ \left| \sigma \right|^{2+\delta}} 
         {\left| \Sigma+d \right|^{1+\delta} \left|1- \Sigma - d \right|^{1+\delta} } \\
         & = \frac{ \left| \sigma \right|^{2+\delta}} 
              {\left| \Sigma \right|^{1+\delta} \left|1- \Sigma \right|^{1+\delta} }  \left| \frac{\Sigma(1-\Sigma) } { \Sigma(1-\Sigma) + d(1-2\Sigma)-d^2} \right|^{1+\delta}. 
  \end{align*}
  By the induction hypothesis,
  \begin{equation*}
    \frac{ \left| \sigma \right|^{2+\delta}} 
         {\left| \Sigma \right|^{1+\delta} \left|1- \Sigma \right|^{1+\delta} } \leq \eta_\delta(x^*), 
  \end{equation*}
  hence it is left to show that 
  \begin{equation*}
    \frac{\Sigma(1-\Sigma) } { \Sigma(1-\Sigma) + d(1-2\Sigma)-d^2} \leq 1.
  \end{equation*}
  Whenever $d>0$, 
  \begin{align*}
    & \frac{ \Sigma(1-\Sigma) + d(1-2\Sigma)-d^2} {\Sigma(1-\Sigma)} \geq 1 \Leftrightarrow  1-2\Sigma \geq d, 
  \end{align*}
  while for $d<0$,
  \begin{align*}
    & \frac{ \Sigma(1-\Sigma) + d(1-2\Sigma)-d^2} {\Sigma(1-\Sigma)} \geq 1 \Leftrightarrow  1-2\Sigma \leq d. 
  \end{align*}
  Therefore, it is enough to show that $\Sigma \leq F(-x^*)$ if $s_{i^*}=1$ and 
  $\Sigma \geq F(-x^*)$ if $s_{i^*}=-1$. 
  Indeed, if $s_{i^*}=1$, then $s_{k+1}=-1$ and monotonicity of $F(x)$ implies that 
  \begin{equation*}
    \Sigma + d \leq F(y_1) - F(y_k) + F(x^*) - F(-x^*) + F(y_{k+2}) - F(y_N), 
  \end{equation*}
  and hence
  \begin{equation*}
    \Sigma \leq 1-F(x^*) = F(-x^*). 
  \end{equation*}
  Similarly, if $s_{i^*}=-1$ then 
  \begin{equation*}
    1 - \Sigma \leq 1 -  F(-x^*).
  \end{equation*}
  This conclude the proof in the case where $c'(x)$ is an injection. 

  In the case where $c(x)$ is not strictly concave, so that $c'$ does not
  strictly decrease, we approximate $c$ using another concave symmetric
  function with decreasing derivative. We assume w.l.o.g.\ that $c(0) = 0$
  maximizes $c$.  For $\alpha>0$ consider the function $f_\alpha(x) =
  \kappa(\alpha) e^{-|c(x)|^{1+\alpha}}$, where $\kappa(\alpha)$ normalizes
  $f_\alpha$.  Then $c_\alpha(x)$ is concave, symmetric, and
  a.e.\ differentiable with
  \begin{equation*}
    c_\alpha'(x) \triangleq \frac{f'_\alpha(x)}{f_\alpha(x)} = (1+\alpha)|c(x)|^{\alpha} c'(x). 
  \end{equation*}
  Now $c_\alpha'(x)$ is non-increasing since it is the derivative of a concave
  function. Furthermore, since $c(x)$ is non-constant on any interval and
  $c'(x)$ is non-increasing, $c_\alpha'(x)$ is non-constant on any interval
  hence an injection. It follows from the first part of the proof that, for
  any $\alpha>0$,
  \begin{align}
    \label{eq:proof:lem:bound_intervals}
    \frac{(\delta_{n,\alpha})^{2+\delta}}{\left(\Delta_{n,\alpha}(1-\Delta_{n,\alpha})\right)^{1+\delta}} \leq \max_i \eta_{\delta,\alpha}(x_i),
  \end{align}
  where 
  \begin{equation*}
    \delta_{n,\alpha} \triangleq  \sum_{k=1}^{n} (-1)^{k+1} f_{\alpha}(x_k),
  \end{equation*}
  \begin{equation*}
    \Delta_{n,\alpha} \triangleq \sum_{k=1}^n (-1)^{k+1} F_{\alpha}(x_k), 
  \end{equation*}
  and 
  \begin{equation*}
    \eta_{\delta,\alpha}(x) \triangleq \frac{(f_\alpha(x))^{2+\delta}}{\left(F_{\alpha}(x)(1-F(x)) \right)^{1+\delta}}. 
  \end{equation*}
  The proof is completed by noting that 
  \begin{align*}
    \lim_{\alpha \to 0} \frac{(\delta_{n,\alpha})^{2+\delta} }{ \left(\Delta_{n,\alpha}(1-\Delta_{n,\alpha})\right)^{1+\delta}}  = \frac{(\delta_{n})^{2+\delta }}{\left(\Delta_{n}(1-\Delta_{n}) \right)^{1+\delta}},  
  \end{align*}
  and, since the maximum is over a finite set,
  \begin{align*}
    \lim_{\alpha \to 0}  \max_i \eta_{\delta,\alpha}(x_i)  = \max_i\eta_\delta(x_i).
  \end{align*}


\end{proof-of-lemma}

\section{Proof of Theorem~\ref{thm:sgd}}
\label{proof:sgd}

The estimation algorithm~\eqref{eq:sgd_alg} is a special
case of the stochastic gradient procedures in the papers
\cite{polyak1992acceleration, polyak1990new}.
We rely on several of their results. Throughout this proof,
we assume without loss of generality that the median
$\theta = \mbox{med}(P) = 0$.

\subsection{Proof of Theorem~\ref{thm:sgd}\eqref{item:normal-sgd}}
\label{sec:proof-normal-sgd}

Consider the following simplified version of
\cite[Thm. 4]{polyak1992acceleration}:
\begin{corollary}{\cite[Thms. 3 \& 4]{polyak1992acceleration}}
  \label{corollary:polyak-juditsky}
  Let $\varphi : \R \to \R$ and $\{Z_i\}$ be i.i.d.\ zero-mean random
  variables, and
  \begin{equation*}
  X_i = \theta + Z_i.
  \end{equation*}
  Define
  \begin{align}
    \begin{split}
      \theta_i & = \theta_{i-1} + \gamma_i \varphi(X_i - \theta_{i-1}), \\
      \bar{\theta}_n & = \frac{1}{n} \sum_{i=0}^{n-1} \theta_i, 
    \end{split}
    \label{eq:Polyak_Juditsky_alg}
  \end{align}
  where in addition,
  \begin{enumerate}[(i)]
  \item There exists $K_1$ such that $\left| \varphi(x) \right| \leq
    K_1(1+|x|)$ for all $x\in \R$.
  \item The sequence $\left\{ \gamma_i \right\}_{i=1}^\infty$ satisfies
    condition~\eqref{eqn:lazy-gamma}.
  \item \label{item:zero-gradient}
    The function $\psi(x) \defeq \ex{ \varphi(x+Z_1)}$
    satisfies $\psi(0) = 0$ and
    $x\psi(x) > 0$ for $x\neq 0$. Moreover, $\psi$ is differentiable
    at 0 with $\psi'(0) > 0$ and there exists
    $K_2$, $0 < \lambda \leq 1$, \newtext{and $r>0$}, such that
    \begin{equation}
      \label{eqn:local-hessian-psi}
        \left| \psi(x) - \psi'(0)x \right|\leq K_2 |x|^{1+\lambda}
    \end{equation}
    \newtext{for all $|x|<r$. }
  \item The function 
    $\chi(x) \defeq \ex{\varphi^2(x+Z_1)}$ is continuous at zero. 
  \end{enumerate}
  Then $\bar{\theta}_n \cas \theta$ and $ \sqrt{n}({\theta}_n - \theta)
  \cd \normal(0,V)$ for
  $V = \frac{ \chi(0)} {\psi'(0)^2}$.
\end{corollary}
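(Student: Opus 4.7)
The plan is to derive the Corollary as a specialization of Polyak and Juditsky's general stochastic approximation theorem~\cite[Thms.~3--4]{PolyakJu92} to the scalar setting. The recursion~\eqref{eq:Polyak_Juditsky_alg} is already in standard Robbins--Monro form: setting $Y_i(\theta) \triangleq -\varphi(X_i - \theta) = -\varphi(\theta - \theta + Z_i)$ and $R(\theta') \triangleq \E[Y_i(\theta')] = -\psi(\theta - \theta')$, the update reads $\theta_i = \theta_{i-1} - \gamma_i Y_i(\theta_{i-1})$, driving $\theta_i$ toward the unique zero $\theta$ of $R$.

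First I would verify the regression-function hypotheses in \cite{PolyakJu92}. Condition~\eqref{item:zero-gradient} says $\theta$ is the unique root of $R$ with $R'(\theta) = \psi'(0) > 0$ (a stable equilibrium) and supplies the local H\"older expansion $R(\theta') = \psi'(0)(\theta' - \theta) + O(|\theta' - \theta|^{1+\lambda})$ via \eqref{eqn:local-hessian-psi}; this is precisely the regularity used in \cite[Thm.~4]{PolyakJu92}, which is paired with the slow-decay stepsize assumption~\eqref{eqn:lazy-gamma}. The linear-growth bound $|\varphi(x)| \le K_1(1+|x|)$ in (i) controls $\E[Y_i(\theta')^2]$ at most linearly in $(\theta')^2$, which combined with the standard Lyapunov argument of~\cite{polyak1990new} yields $\theta_i \cas \theta$ in $L^2$ and hence $\bar\theta_n \cas \theta$ by Ces\`aro averaging. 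Continuity of $\chi$ at $0$ in (iv), together with $\theta_{i-1} \cas \theta$, yields the conditional-variance convergence $\E[(Y_i(\theta_{i-1}) - R(\theta_{i-1}))^2 \mid \mathcal{F}_{i-1}] \cas \chi(0)$, supplying the limiting noise ``covariance'' $S = \chi(0)$ required by the central limit part of the theorem.

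Polyak--Juditsky's theorem then gives
\begin{equation*}
\sqrt{n}(\bar\theta_n - \theta) \cd \normal\bigl(0,\; R'(\theta)^{-1} S \, R'(\theta)^{-1}\bigr) = \normal\bigl(0,\; \chi(0)/\psi'(0)^2\bigr),
\end{equation*}
which is the asserted $V$. The main technical point is matching the relatively weak local H\"older condition~\eqref{eqn:local-hessian-psi} to the Lipschitz-Hessian formulation that appears in the cleanest version of Polyak--Juditsky; this is why one invokes their Theorem~4 rather than Theorem~3, since Theorem~4 explicitly allows H\"older-$\lambda$ regularity of $R$ at $\theta$ provided the stepsize condition~\eqref{eqn:lazy-gamma} is satisfied with the same $\lambda$. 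Verifying that the (averaged) martingale remainder in their decomposition is negligible under this combined pair of conditions is the technical crux; everything else reduces to the one-dimensional bookkeeping described above.
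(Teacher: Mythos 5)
Your proposal is correct and matches the paper's approach: the paper presents this result explicitly as a "simplified version of \cite[Thm.~4]{polyak1992acceleration}" and gives no independent proof, relying on exactly the specialization you describe — rewriting the recursion in Robbins--Monro form with $R(\theta') = -\psi(\theta-\theta')$, verifying $R'(\theta)=\psi'(0)>0$, the H\"older-$\lambda$ expansion of $R$ from~\eqref{eqn:local-hessian-psi}, the linear growth of $\varphi$, and the conditional-variance limit $\chi(0)$, then reading off $V = \chi(0)/\psi'(0)^2$ from Polyak--Juditsky's sandwich $R'(\theta)^{-1} S R'(\theta)^{-1}$. (Note you also implicitly correct an apparent typo in the statement, which writes $\sqrt{n}(\theta_n-\theta)$ where $\sqrt{n}(\bar{\theta}_n-\theta)$ is meant.)
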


Using the notation in Corollary~\ref{corollary:polyak-juditsky}, we set
$\varphi(x) = \sgn(x)$ and $Z_i = X_i - \theta$, where $\theta =
\mbox{med}(P)$. Without loss of generality and for notational convenience,
we assume for the remainder of this derivation that $\theta = 0$.
As a consequence, we have $\mbox{med}(Z) = 0$,
and $\chi(x) = \ex{ \sgn^2(x+Z_1) }= 1$, so
$\chi(0) = 1$. In addition,
\begin{align*}
  \psi(x) & = \ex{ \sgn(x+ Z_1) } = 
  P(Z \ge -x) - P(Z < -x) \\
  & = 1 - 2 P(Z \le -x).
\end{align*}
Using that $P$ has a density $f$ near its median, it follows that $\psi'(x)
= 2f(-x)$ and thus $\psi'(0) = 2f(0) > 0$.  We may now verify that the
conditions in Corollary~\ref{corollary:polyak-juditsky} hold for $\lambda =
1$. Condition~(i) is obvious, and the convexity of $|\cdot|$ gives most of
condition~(iii) excepting inequality~\eqref{eqn:local-hessian-psi}. For
that, note that as $f$ is Lipschitz near $0$ with constant $\lip_0(f)$, we 
have for small $x$ that
\begin{align*}
  \psi(x) & = 2 \int_0^x f(-t) dt
  \leq 2 \int_0^x \left[f(0) + \lip_0(f) t \right] dt \\
  & = 2 f(0) x + \lip_0(f) x^2 = \psi'(0) x + \lip_0(f) x^2,
\end{align*}
\begin{align*}
  \psi(x) & = 2 \int_0^x f(-t) dt
  \geq 2 \int_0^x \left[f(0) - \lip_0(f) t \right] dt
  \\
  & = 2 f(0) x - \lip_0(f) x^2 = \psi'(0) x - \lip_0(f) x^2,
\end{align*}
so that condition~(iii) holds.
As evidently $\chi(0) / \psi'(0)^2 = \frac{1}{4 f(0)^2}$,
Corollary~\ref{corollary:polyak-juditsky} gives
Theorem~\ref{thm:sgd}\eqref{item:normal-sgd}.

\subsection{Proof of Theorem~\ref{thm:sgd}\eqref{item:sgd-regular}}
\label{sec:proof-sgd-regular}

This proof requires somewhat more technicality than the first part of the
theorem, including a brief detour into local asymptotic normality theory,
regular estimators, and quadratic-mean
differentiability~\cite[cf.]{VanDerVaart98}.
We assume without loss of generality that the median of the density
$f$ is 0, so that if $P_\theta$ has density $f(\cdot - \theta)$, the
median of $P_\theta$ is $\theta$.
We begin by recalling the statistical concepts we require.
\begin{definition}
  \label{definition:regular-estimator}
  A sequence of estimators $T_n$ for a parameter $\theta$ in the parametric
  family $\{P_\theta\}_{\theta \in \Theta}$ is \emph{regular at $\theta$} if
  there exists a distribution $Q$ such that for any bounded sequence
  $h_n$,
  \begin{equation*}
    \sqrt{n}(T_n - (\theta + h_n / \sqrt{n}))
    \mathop{\cd}_{P_{\theta + h_n/\sqrt{n}}}
    Q.
  \end{equation*}
\end{definition}
\begin{definition}
  \label{definition:qmd}
  Let $\{P_\theta\}_{\theta \in \Theta}$ have densities $p_\theta$
  with respect to a base measure $\mu$. The family
  is \emph{quadratic mean differentiable} (QMD) at
  $\theta$ with score $\score_\theta$ if
  \begin{equation}
    \label{eqn:qmd}
    \int \left(\sqrt{p_{\theta + h}} - \sqrt{p_\theta}
    - \half h^\top \score_\theta\sqrt{p_\theta} \right)^2 d\mu
    = o(\norm{h}^2)
  \end{equation}
  as $h \to 0$.
\end{definition}
\begin{definition}
  \label{definition:lan}
  A family of distributions $\{P_\theta\}_{\theta \in \Theta}$ is
  \emph{locally asymptotically normal with information matrix $I_\theta$} (LAN)
  at
  $\theta$ if there exists a sequence of random vectors $\{Z_n\}$ such that for
  all $h_n \to h$,
  \begin{equation*}
    \sum_{i = 1}^n \log \frac{dP_{\theta + h_n/\sqrt{n}}}{dP_\theta}
    \newtext{(X_i)}
    = h^\top Z_n - \half h^\top I_\theta h + o_P(1)
  \end{equation*}
  where $Z_n \cd \normal(0, I_\theta)$ under $P_\theta$, where
  $X_i \simiid P_\theta$.
\end{definition}

These three definitions are linked in our case by a few important results.
First~\cite[Theorem 7.2]{VanDerVaart98}, if $\{P_\theta\}$ is QMD
(Def.~\ref{definition:qmd}) at the point $\theta$, then it is locally
asymptotically normal with $Z_n = \frac{1}{\sqrt{n}} \sum_{i = 1}^n
\score_\theta(X_i)$ and information matrix $I_\theta =
\E_\theta[\score_\theta \score_\theta^\top]$. Moreover, in any family
$\{P_\theta\}$ that is LAN (Def.~\ref{definition:lan}) at $\theta$, if $T_n$
is a regular estimator (Def.~\ref{definition:regular-estimator}) at $\theta$
with limiting distribution $Q$, then for any bounded, symmetric,
quasi-convex loss $L$ and $c < \infty$,
\begin{align}
  \limsup_n \sup_{\norm{h} \le c}
  & \E_{P_{\theta + h/\sqrt{n}}}
  \left[L(\sqrt{n}(T_n - \theta - h / \sqrt{n}))\right] \nonumber \\
  & = \E[L(W)] ~~ \mbox{for~}W \sim Q \label{eqn:limit-law-regular}
\end{align}
(see Beran~\cite{beran1995role}, Eq.~(4.2)).
Thus, we show two results: first,
that the family $\{P_\theta\}$ of distributions
defined by the shifted densities $\{f(\cdot - \theta)\}_{\theta \in \R}$
is quadratic-mean-differentiable at any $\theta$,
and second, that $\bar{\theta}_n$ is regular and
asymptotically normal.
The combination evidently gives the theorem.

For quadratic mean differentiability, we have the following lemma, somewhat
more general than we need; we defer proof to Sec.~\ref{sec:proof-qmd}.
\begin{lemma}[Extension of \cite{VanDerVaart98}, Lemma 7.6]
  \label{lemma:qmd}
  Let $p_\theta$ be a density with respect to $\mu$,
  and assume that $\theta \mapsto s_\theta(x) \defeq \sqrt{p_\theta(x)}$
  is absolutely continuous for all $x$. Let
  $\dot{p}_\theta(x) = \nabla_\theta p_\theta(x)$ (when it exists),
  and assume that
  \begin{equation*}
    \mu(\{x : \dot{p}_\theta(x) ~ \mbox{fails~to~exist}\}) = 0.
  \end{equation*}
  Assume that $I_\theta \defeq \E_{P_\theta}[\dot{p}_\theta
    \dot{p}_\theta^\top / p_\theta^2]$ is continuous at $\theta_0$. Then
  $P_\theta$ is QMD (Definition~\ref{definition:qmd}) at $\theta = \theta_0$
  with $\score_\theta = \dot{p}_\theta / p_\theta$.
\end{lemma}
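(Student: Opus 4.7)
The plan is to mirror Van der Vaart's proof of Lemma~7.6, with the modifications needed to accommodate the weaker hypothesis that $\dot{p}_\theta$ may fail to exist on a $\mu$-null set. First, for each fixed direction $h$, absolute continuity of $t \mapsto s_{\theta_0+th}(x)$ on $[0,1]$ for every $x$, together with the chain-rule identity $\dot{s}_\theta = \dot{p}_\theta/(2 s_\theta)$ (valid $\mu$-a.e. by the hypothesis on $\dot{p}_\theta$, on the set $\{p_\theta > 0\}$), yields the representation
\begin{equation*}
s_{\theta_0+h}(x) - s_{\theta_0}(x) - \tfrac{1}{2} h^\top \score_{\theta_0}(x)\, s_{\theta_0}(x) = \int_0^1 h^\top \bigl(\dot{s}_{\theta_0+th}(x) - \dot{s}_{\theta_0}(x)\bigr)\, dt
\end{equation*}
for $\mu$-a.e.\ $x$. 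Applying Cauchy--Schwarz inside the $t$-integral, squaring, integrating against $\mu$, and swapping the order of integration via Fubini reduces the QMD condition to
\begin{equation*}
\sup_{t \in [0,1]} \int \bigl\|\dot{s}_{\theta_0+th} - \dot{s}_{\theta_0}\bigr\|^2 d\mu \;=\; o(1) \qquad \text{as } h \to 0.
\end{equation*}

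The second step is to establish this $L^2$ continuity of $\theta \mapsto \dot{s}_\theta$ at $\theta_0$. Since $4\int \dot{s}_\theta \dot{s}_\theta^\top d\mu = I_\theta$, the continuity of $I_\theta$ at $\theta_0$ supplies norm convergence $\|\dot{s}_{\theta_0+th}\|_{L^2(\mu)} \to \|\dot{s}_{\theta_0}\|_{L^2(\mu)}$, uniformly in $t \in [0,1]$. Combined with pointwise $\mu$-a.e.\ convergence $\dot{s}_{\theta_0+th}(x) \to \dot{s}_{\theta_0}(x)$ as $h \to 0$, this triggers the standard Vitali/Scheff\'e convergence argument: pointwise convergence plus convergence of $L^2$ norms implies convergence in $L^2$. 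Equivalently, expanding the square and using $J_{\theta_1, \theta_2} \defeq 4\int \dot{s}_{\theta_1} \dot{s}_{\theta_2}^\top d\mu$, the diagonal terms $I_{\theta_0+th}$ and $I_{\theta_0}$ are controlled by continuity of Fisher information, and the cross term $J_{\theta_0+th,\theta_0}$ converges to $I_{\theta_0}$ by Cauchy--Schwarz together with pointwise convergence.

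The main obstacle is justifying the pointwise $\mu$-a.e.\ convergence $\dot{s}_{\theta_0+th}(x) \to \dot{s}_{\theta_0}(x)$ under only the stated assumptions, since absolute continuity of $\theta \mapsto s_\theta(x)$ gives existence of $\partial_\theta s_\theta(x)$ only for Lebesgue-a.e.\ $\theta$ at each fixed $x$, not continuity in $\theta$ at the specific point $\theta_0$. For the location families used in this paper one sidesteps the issue cleanly: the chain-rule identity gives $\dot{p}_\theta(x) = -f'(x-\theta)$ wherever $f'$ exists, and log-concavity of $f$ (hence monotonicity of $c' = f'/f$) implies $f'$ is continuous off a countable set, so $\dot{s}_{\theta_0+th}(x)$ converges to $\dot{s}_{\theta_0}(x)$ for Lebesgue-a.e.\ $x$ as $h \to 0$. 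In the general abstract setting, one may have to strengthen the hypothesis slightly (e.g.\ asking that $\theta \mapsto \dot{p}_\theta(x)$ be continuous at $\theta_0$ for $\mu$-a.e.\ $x$, which is implicit in the way the lemma is applied), or invoke the uniform integrability argument directly from continuity of $I_\theta$ without passing through pointwise convergence.
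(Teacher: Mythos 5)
You have correctly diagnosed where your argument strains the hypotheses, but the gap is an artifact of your particular decomposition, not of the problem. By writing the Taylor remainder as $\int_0^1 h^\top\bigl(\dot{s}_{\theta_0+th} - \dot{s}_{\theta_0}\bigr)\,dt$, you reduce QMD to $L^2(\mu)$ continuity of $\theta \mapsto \dot{s}_\theta$ at $\theta_0$, which via Vitali/Scheff\'e requires the pointwise $\mu$-a.e.\ convergence $\dot{s}_{\theta_0+th}(x) \to \dot{s}_{\theta_0}(x)$. That is continuity of the derivative function in $\theta$, and---as you note---absolute continuity of $\theta \mapsto s_\theta(x)$ only yields a.e.\ differentiability, not any continuity. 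Your fallback (log-concavity of $f$ forcing $f'$ continuous off a countable set) is special to the location family and is not what the lemma as stated licenses, so the general statement remains unproved along this route without strengthening the hypothesis.

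The paper's proof avoids the issue entirely by \emph{not} subtracting $\dot{s}_{\theta_0}$ inside the $t$-integral. It keeps $s_{\theta_0+h} - s_{\theta_0} = \int_0^1 h^\top \dot{s}_{\theta_0+th}\,dt$, applies Jensen/Cauchy--Schwarz and Fubini to get
\[
\int \left(\frac{s_{\theta_0+t h_t}(x) - s_{\theta_0}(x)}{t}\right)^2 d\mu(x)
\;\le\; \frac{1}{4}\,h_t^\top \left(\int_0^1 I_{\theta_0 + u t h_t}\,du\right) h_t
\;\longrightarrow\; \tfrac{1}{4} h^\top I_{\theta_0} h
\]
by continuity of Fisher information, and then applies the Vitali-type convergence result \cite[Prop.~2.29]{VanDerVaart98} to the normalized increments $(s_{\theta_0+th_t}-s_{\theta_0})/t$. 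The pointwise a.e.\ convergence needed is that of this \emph{difference quotient} to $h^\top \dot{s}_{\theta_0}(x)$, which follows directly from the stated assumption that $\dot{p}_{\theta_0}(x)$ exists for $\mu$-a.e.\ $x$---no continuity of $\theta \mapsto \dot{s}_\theta(x)$ is ever invoked. In short, you should apply the Scheff\'e-type argument one level earlier (to the increments, not the derivatives); with that change your outline becomes the paper's proof and the obstacle you flagged disappears.
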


By the assumption in Theorem~\ref{thm:sgd} that the density $f$
is Lipschitz continuous, \newtext{$f$ is absolutely continuous hence 
$\sqrt{f}$ is absolutely continuous}. We see that the location family
$\{P_\theta\}_{\theta \in \R}$ defined by $dP_\theta(x) = f(x - \theta)$
satisfies the conditions of Lemma~\ref{lemma:qmd}.

It remains to show that the average $\bar{\theta}_n$ is regular \newtext{at $\theta$ with the limiting distribution $\Ncal(0,(4f(0)^2)^{-1})$}:
\begin{lemma}
  \label{lemma:sgd-median-regular}
  Let $h_n \to h \in \R$, and define $P_{n,h} = P_{\theta + h_n / \sqrt{n}}^n$.
  Then 
  \begin{align}
    \label{eqn:sgd-median-regular}
    \newtext{
    \sqrt{n}\left( \bar{\theta}_n - \theta \right)
    \mathop{\cd}_{P_{n,h}}
    \normal\left( h,\frac{1}{4 f(0)^2}\right).
    }
  \end{align}
\end{lemma}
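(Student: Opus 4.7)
My plan is to apply Le Cam's third lemma, the standard device for transporting an asymptotic distribution established under $P_\theta^n$ to the contiguous alternative $P_n = P_{\theta + h_n/\sqrt{n}}^n$. First, I would use Lemma~\ref{lemma:qmd} to conclude that the location family $\{P_\theta\}_{\theta \in \R}$ is quadratic-mean differentiable at every $\theta$ (the Lipschitz hypothesis on $f$ guarantees that the Fisher information $I_\theta = \sigma_f^2$ is continuous in $\theta$). Then, invoking the standard link between QMD and LAN (Van der Vaart, Thm.~7.2), under $P_\theta^n$ the log-likelihood ratio satisfies
\begin{equation*}
  \ell_n \defeq \log \frac{dP_{\theta + h_n/\sqrt{n}}^n}{dP_\theta^n}
  = h Z_n - \tfrac{1}{2} h^2 I_\theta + o_P(1),
  \qquad
  Z_n = \frac{1}{\sqrt{n}} \sum_{i=1}^n \score_\theta(X_i) \cd \normal(0, I_\theta),
\end{equation*}
with location score $\score_\theta(x) = -f'(x-\theta)/f(x-\theta)$.

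The crux of the argument is a Bahadur-type linearization for the averaged SGD iterate. Inspecting the Polyak--Juditsky proof~\cite{PolyakJu92} that underlies Theorem~\ref{thm:sgd}\eqref{item:normal-sgd}, one obtains the stochastic expansion
\begin{equation*}
  \sqrt{n}\,(\bar{\theta}_n - \theta)
  = \frac{1}{\psi'(0)} \cdot \frac{1}{\sqrt{n}} \sum_{i=1}^n \varphi(X_i - \theta) + o_P(1)
  = \frac{1}{2 f(0)} \cdot \frac{1}{\sqrt{n}} \sum_{i=1}^n \sgn(X_i - \theta) + o_P(1),
\end{equation*}
that is, $\bar{\theta}_n$ is asymptotically linear with influence function $\psi(x) = \sgn(x-\theta)/(2 f(0))$, the usual median influence function. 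Combined with the LAN expansion of $\ell_n$, the classical bivariate CLT applied to the i.i.d.\ pairs $(\psi(X_i), \score_\theta(X_i))$ yields joint asymptotic normality of $(\sqrt{n}(\bar\theta_n - \theta), \ell_n)$ under $P_\theta^n$, with asymptotic covariance $h \sigma$ where
\begin{equation*}
  \sigma
  = \E\!\left[\frac{\sgn(X-\theta)}{2 f(0)} \cdot \left(-\frac{f'(X-\theta)}{f(X-\theta)}\right)\right]
  = -\frac{1}{2 f(0)} \int \sgn(y) f'(y)\, dy = 1,
\end{equation*}
the last equality from $\int_0^\infty f' = -f(0)$ and $\int_{-\infty}^0 f' = f(0)$.

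Le Cam's third lemma now transfers the weak limit of $\sqrt{n}(\bar\theta_n - \theta)$ from $P_\theta^n$ to $P_n$, shifting the mean by $h \sigma = h$: under $P_n$,
\begin{equation*}
  \sqrt{n}\,(\bar\theta_n - \theta) \cd \normal\!\left(h, \frac{1}{4 f(0)^2}\right),
  \quad \text{equivalently} \quad
  \sqrt{n}\,(\bar\theta_n - \theta - h_n/\sqrt{n}) \cd \normal\!\left(0, \frac{1}{4 f(0)^2}\right),
\end{equation*}
which is both the stated convergence and the regularity of $\bar\theta_n$ in the sense of Definition~\ref{definition:regular-estimator}.

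The main obstacle will be Step~2, the explicit linearization of $\bar\theta_n$. Polyak and Juditsky state their result as a weak-limit theorem rather than a pointwise expansion, so justifying the $o_P(1)$ representation requires revisiting their martingale decomposition: writing $\theta_n - \theta = \theta_{n-1} - \theta + \gamma_n[\varphi(X_n - \theta_{n-1}) - \psi(\theta_{n-1} - \theta)] + \gamma_n \psi(\theta_{n-1} - \theta)$ and using the local expansion $\psi(x) = 2 f(0) x + O(x^2)$ near the median, one inverts a first-order recursion and averages the resulting martingale remainder to show it is $o_P(1/\sqrt{n})$. The only subtlety beyond the proof of Theorem~\ref{thm:sgd}\eqref{item:normal-sgd} is tracking this remainder uniformly on compact neighborhoods of $\theta$; Lipschitz continuity of $f$ makes this bookkeeping routine.
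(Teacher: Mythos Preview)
Your proposal is correct and matches the paper's approach: the paper packages your Steps~2--3 (the Polyak--Juditsky asymptotic linearization followed by Le Cam's third lemma) into a standalone Corollary~\ref{corollary:normal-expansion}, verifies its hypotheses---including an additional local-smoothness bound $\E[|\varphi(x+Z)-\varphi(Z)|^2]\le K(|x|^\lambda+x^2)$, which is immediate for $\varphi=\sgn$ and Lipschitz $f$---and computes the same covariance integral you do. One minor remark: you need not track the linearization remainder uniformly over neighborhoods of $\theta$; it suffices that the expansion hold under $P_\theta^n$, since contiguity (a consequence of LAN) automatically transports the $o_{P_\theta}(1)$ term to $P_n$, which is exactly how both you and the paper invoke Le Cam's third lemma.
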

\begin{proof}
  To show the convergence~\eqref{eqn:sgd-median-regular} we use the
  following refinement of Corollary~\ref{corollary:polyak-juditsky}, which
  provides a generalized convergence result for iteratively defined
  $\theta_n$, and whose
  proof we defer to Section~\ref{proof:normal-expansion}.
  \begin{corollary}
    \label{corollary:normal-expansion}
    Let the conditions of Corollary~\ref{corollary:polyak-juditsky} hold,
    meaning that $\theta_i = \theta_{i-1} + \gamma_i \varphi(X_i -
    \theta_{i-1})$ for $X_i = \theta + Z_i$, where $\{Z_i\}$ are i.i.d.\ with
   $\E[Z_1]=0$ and  $\E[\varphi(Z_1)] = 0$. Additionally assume the local smoothness
    condition that there exist $0 < \lambda \le 1$ and $K < \infty$ such
    that
    \begin{equation}
      \label{eqn:additional-local-smooth} 
      \E[|\varphi(x + Z_1) - \varphi(Z_1)|^2]
      \le K (|x|^\lambda + x^2).
    \end{equation}
    Set $\Delta_i \defeq \theta_i - \theta$ and $\bar{\Delta}_n \defeq \frac{1}{n}
    \sum_{i=1}^n \Delta_i$. Then
    \begin{enumerate}[(i)]
    \item \label{item:regularity}
      The sequence $\{\Delta_i\}$ is regular, that is,
      \begin{equation}
        \sqrt{n} \bar{\Delta}_n
        = -\frac{1}{\sqrt{n}} \frac{1}{\psi'(0)} \sum_{i=1}^{n-1} \varphi(Z_i)
        + o_{P,n}(1).
        \label{eq:normal_expansion_lem}
      \end{equation}
    \item \label{item:apply-le-cam} Let $\{Z_i\}$ as in
      Corollary~\ref{corollary:polyak-juditsky} have absolutely continuous
      density $p$ with median $0$, define $\score_h(z) = \frac{p'(z - h)}{p(z
        - h)}$, and assume that $I_h \defeq \E_p[\score_h(Z_1)^2]$ is
      continuous in $h$ near 0.  Then for any converging sequence $h_n \to h$,
      \begin{equation*}
        \sqrt{n} \bar{\Delta}_n
        \mathop{\cd}_{P_{\theta + h_n/\sqrt{n}}^n}
        \normal\left( \frac{-h}{\psi'(0)} \E_p[\varphi(Z_1) \score_0(Z_1)],
        \frac{\chi(0)}{\psi'^2(0)} \right).
      \end{equation*}
    \end{enumerate}
  \end{corollary}

  We now verify that the setting of Theorem~\ref{thm:sgd}
  (and Lemma~\ref{lemma:sgd-median-regular}) satisfies the
  conditions of Corollary~\ref{corollary:normal-expansion}. First, we have
  the obvious fact that
  \begin{equation*}
    |\sgn(z) - \sgn(x + z)|
    \le 2 \cdot \indic{|x| \ge |z|}.
  \end{equation*}
  Recalling that the density $f$ is Lipschitz with median 0,
  for $\varphi(z) = \sgn(z)$, and $Z = X - \theta$ distributed with
  density $f$, we have
  \begin{align*}
    & \ex{ \left| \varphi(Z) - \varphi(x + Z) \right| }
    \leq 2 \Prob\left( |Z_1| \le |x|  \right) \\
    & \quad = 2 \int_{-|x|}^{|x|} f(t) dt
    \le 4 f(0)|x| + 2 \int_{-|x|}^{|x|}
    \lip(f) t dt \\
     & \quad = 4 f(0) |x| + 2 \lip(f) x^2
  \end{align*}
  where $\lip(f)$ is the Lipschitz constant of $f$.
\newtext{It follows that condition~\eqref{eqn:additional-local-smooth} holds.}  
  In addition, we have
  \begin{align*}
& \newtext{ \E_p[\varphi(Z_1) \score_0(Z_1)] } = 
    \int_{\R} \varphi(x) f'(x ) dx
    = \int_{\R} \sgn(x) f'(x ) dx\\
     & \quad  =
    \int_0^\infty f'(x ) dx
    - \int_{-\infty}^0 f'(x ) dx
    = -2 f(0) = -\psi'(0).
  \end{align*}
  Corollary~\ref{corollary:normal-expansion}
  now implies the convergence \eqref{eqn:sgd-median-regular}.
\end{proof}

Combining Lemmas~\ref{lemma:qmd} and~\ref{lemma:sgd-median-regular}
with the limit~\eqref{eqn:limit-law-regular} gives
Theorem~\ref{thm:sgd}\eqref{item:sgd-regular}.

\subsection{Proof of Theorem~\ref{thm:sgd}\eqref{item:sgd-ms-convergence}}
\label{sec:proof-sgd-ms-convergence}

We begin with the following result from \cite{polyak1990new}:
\begin{corollary}[\cite{polyak1990new}, Theorem 2]
  \label{corollary:polyak-mse}
  Define the iteration
  \begin{align} \label{eq:polyak_new_measurements}
    \begin{cases}
      U_n = U_{n-1} - \gamma_n \varphi(Y_n), & Y_n = g'(U_{n-1})+Z_n \\
      \bar{U}_n= \frac{1}{n} \sum_{i=1}^n U_n, & n=1,2,\ldots.
    \end{cases}
  \end{align}
  Assume that the function $g$ is $\mc{C}^2$, strictly convex, has Lipschitz
  derivative, and is minimized by $x^\star$. Moreover, assume that the
  noises $\{Z_n\}$ are i.i.d.\ with density $p$ and
  that the Fisher information $\E[(p'(Z_1))^2 / p(Z_1)^2]$ exists and is finite.
  Let $\psi(x)$ and $\chi(x)$ be defined as in
  Corollary~\ref{corollary:polyak-juditsky} and satisfy the conditions in
  the corollary. Assume in addition that $\chi(0)>0$, condition
  \eqref{eqn:local-hessian-psi} with $\lambda = 1$, and there exits $K_3$
  such that
  \begin{equation*}
    \ex{  | \varphi(x+Z_1) |^4 } \leq K_3(1+|x|^4). 
  \end{equation*}
  Finally, assume that the sequence $\{\gamma_n \}$ satisfies conditions
  \eqref{eqn:lazy-gamma} and \eqref{eqn:stringent-gamma}. Then
  \begin{equation*}
    V_n \defeq
    \E\Big[(\bar{U}_n-x^\star )^2\Big] = n^{-1}\frac{\chi(0)} {
      (\psi'(0))^2 (g''(x^\star))^2 } + o(n^{-1}).
  \end{equation*}
\end{corollary}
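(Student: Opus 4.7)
The plan is to follow the classic Polyak--Juditsky strategy for averaged stochastic approximation, specialized to the noisy gradient observation model~\eqref{eq:polyak_new_measurements}. First I would establish $L^2$-consistency: using strict convexity and Lipschitz continuity of $g'$, the one-step contraction $\E[(U_n - x^\star)^2 \mid U_{n-1}] \le (1 - c \gamma_n)(U_{n-1} - x^\star)^2 + C \gamma_n^2$ (for some $c > 0$ depending on $g''(x^\star)$ and $\psi'(0)$) combined with the stepsize condition $\sum \gamma_n = \infty$ from~\eqref{eqn:stringent-gamma} yields $\E[\Delta_n^2] = O(\gamma_n)$, where $\Delta_n \defeq U_n - x^\star$. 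The fourth-moment bound on $\varphi$ is then used in an analogous calculation to show $\E[\Delta_n^4] = O(\gamma_n^2)$, which will be needed to control higher-order remainders later.

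Next I would linearize the recursion around $x^\star$. Because $g'(x^\star) = 0$, a Taylor expansion gives $g'(U_{n-1}) = g''(x^\star) \Delta_{n-1} + \rho_n$ with $|\rho_n| = O(\Delta_{n-1}^2)$, so after writing $\varphi(Y_n) = \psi(g'(U_{n-1})) + \xi_n$ with $\xi_n$ a martingale difference, condition~\eqref{eqn:local-hessian-psi} (with $\lambda = 1$) yields
\begin{equation*}
  \Delta_n = (1 - \gamma_n A) \Delta_{n-1} - \gamma_n \xi_n + \gamma_n r_n,
  \qquad A \defeq \psi'(0) g''(x^\star),
\end{equation*}
where $|r_n| \le K(\Delta_{n-1}^2 + |\xi_n| \cdot |\Delta_{n-1}|)$ in an appropriate sense. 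Solving the recursion and applying Abel summation to $\bar{\Delta}_n = \frac{1}{n}\sum_{i=1}^n \Delta_i$ gives the principal decomposition
\begin{equation*}
  \sqrt{n}\, \bar{\Delta}_n = -\frac{1}{A \sqrt{n}} \sum_{i=1}^n \xi_i + E_n^{(1)} + E_n^{(2)} + E_n^{(3)},
\end{equation*}
where $E_n^{(1)}$ arises from the initial transient, $E_n^{(2)}$ from the nonlinear remainder $r_n$, and $E_n^{(3)}$ from the telescoping terms produced by Abel summation.

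The leading martingale term has $\E$-variance exactly $\chi(0) / A^2$ since $\E[\xi_i^2] \to \chi(0)$ by continuity of $\chi$ at zero and $L^2$-consistency of $U_{n-1}$. The main work is to show that each $\E[(E_n^{(k)})^2] = o(1)$ so that $\E[\bar{\Delta}_n^2] = \chi(0)/(nA^2) + o(1/n)$. This is where the stepsize conditions do their work: the restriction $\gamma_n = o(n^{-2/3})$ controls $E_n^{(2)}$ because $\E[r_n^2] \le C \gamma_{n-1}$ (via the fourth-moment bound on $\Delta_{n-1}$), and the summability condition $(\gamma_n - \gamma_{n+1})/\gamma_n^2 \to 0$ from~\eqref{eqn:lazy-gamma} handles the Abel-summation terms and the transient $E_n^{(1)}$.

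The main obstacle will be the nonlinear remainder $E_n^{(2)}$: translating the pointwise bound $|r_n| \lesssim \Delta_{n-1}^2$ into a mean-square bound sharp enough to give $o(1/n)$ after averaging requires carefully tracking how the $\gamma_n^2$-rate on $\E[\Delta_n^4]$ propagates through the sum, using both stepsize conditions in tandem. The fourth-moment hypothesis $\E[|\varphi(x+Z)|^4] \le K_3(1+|x|^4)$ is precisely what keeps $\E[\Delta_n^4]$ integrable enough for this step, which is why it appears as a hypothesis in the corollary but not in the CLT statement of Corollary~\ref{corollary:polyak-juditsky}.
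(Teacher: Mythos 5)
The paper does not prove Corollary~\ref{corollary:polyak-mse}; it is stated explicitly as a restatement of Theorem~2 of Polyak \cite{polyak1990new} and applied as a black box to obtain Theorem~\ref{thm:sgd}\eqref{item:sgd-ms-convergence} with the specific choices $g(x) = \frac12(x-\theta)^2$, $\varphi = \sgn$, $Z_n = \theta - X_n$. Your proposal is therefore a from-scratch reconstruction of Polyak's proof rather than an alternative to an argument the paper actually contains. That said, the outline you give---linearize the recursion about $x^\star$ to obtain $\Delta_n = (1 - \gamma_n A)\Delta_{n-1} - \gamma_n \xi_n + \gamma_n r_n$ with $A = \psi'(0) g''(x^\star)$, decompose $\sqrt{n}\bar\Delta_n$ via Abel summation into a leading martingale term of asymptotic variance $\chi(0)/A^2$ plus transient, nonlinear, and telescoping errors, and drive each error to zero in $L^2$ using the two stepsize conditions---is the correct shape and closely parallels the mechanism of Lemma~\ref{lemma:polyak-expansion} used elsewhere in the paper.

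One imprecision is worth flagging because, taken literally, it breaks the argument. In paragraph one you correctly assert $\E[\Delta_n^4] = O(\gamma_n^2)$, but in paragraph four you write ``$\E[r_n^2] \le C\gamma_{n-1}$.'' Since $r_n = O(\Delta_{n-1}^2)$, the bound one actually obtains (and needs) is $\E[r_n^2] = O(\E[\Delta_{n-1}^4]) = O(\gamma_{n-1}^2)$. The extra power of $\gamma$ is essential: by Minkowski's inequality, $\E[(E_n^{(2)})^2] \le \tfrac{C}{n}\bigl(\sum_{i\le n}\sqrt{\E[r_i^2]}\bigr)^2$, and with $\sqrt{\E[r_i^2]} = O(\gamma_i)$ and $\gamma_i = o(i^{-2/3})$ from condition~\eqref{eqn:stringent-gamma} the inner sum is $o(n^{1/3})$, so $\E[(E_n^{(2)})^2] = o(n^{-1/3}) \to 0$. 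If instead one only had $\sqrt{\E[r_i^2]} = O(\sqrt{\gamma_i}) = O(i^{-1/3})$ the inner sum grows like $n^{2/3}$ and the bound on $\E[(E_n^{(2)})^2]$ diverges as $n^{1/3}$. So the fourth-moment hypothesis must be used at full strength to yield the square of the stepsize in the remainder estimate; this is precisely why it appears as a hypothesis in Corollary~\ref{corollary:polyak-mse} but not in Corollary~\ref{corollary:polyak-juditsky}.
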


\newtext{Fix $\theta \in \R$}. Apply Corollary~\ref{corollary:polyak-mse} with $g(x) = 0.5(x-\theta)^2$,
$\varphi(x) = \sgn(x)$, $Z_n = \theta-X_n$.
The update~\eqref{eq:polyak_new_measurements} gives
\begin{align*} 
  U_n 
  & = U_{n-1} + \gamma_n \sgn(X_n-U_{n-1} ),
\end{align*}
so the estimator $\bar{U}_n$ is identical to the stochastic gradient
estimator~\eqref{eq:sgd_alg} with $\bar{\theta}_n = \frac{1}{n} \sum_{i =
  1}^n \theta_i$. We have $\E[\varphi(x + Z)^4] = 1$ and by assumption the
Fisher information $\E[(f'(Z))^2 / f(Z)^2]$ exists, and the functions $\psi$
and $\chi$ have the desired conditions of
Corollary~\ref{corollary:polyak-juditsky} (as we verify in
Section~\ref{sec:proof-normal-sgd}). Finally, the function $\theta \mapsto
\E_{P_\theta}[(\bar{\theta}_n - \theta)^2]$ is continuous in $\theta$, so
that for $x^\star = \theta$ and $g'' \defeq 1$, \newtext{we may apply
Corollary~\ref{corollary:polyak-mse} to obtain}
\begin{align*}
\ex{(\bar{\theta}_n - \theta)^2} = \frac{1}{4nf(0)^2} + o(n^{-1}).
\end{align*}
\newtext{
From here, existence of the second moment of $\pi$ implies \eqref{eq:adaptive_3}.}



\subsection{Proof of Corollary~\ref{corollary:normal-expansion}}
\label{proof:normal-expansion}

\subsubsection*{Proof of
  Corollary~\ref{corollary:normal-expansion}\eqref{item:regularity}}

The proof of part~\eqref{item:regularity} requires two additional lemmas of
Polyak and Juditsky~\cite{polyak1992acceleration}.
\begin{lem}[\cite{polyak1992acceleration}, Lemma 2]
  \label{lemma:polyak-expansion}
  Define the process $\Delta_i^1 = \Delta_{i-1}^1
  - \gamma_i (A \Delta_{i-1}^1 + \xi_i)$ for $i = 1, 2, \ldots$.
  Assume that $A>0$ and the stepsizes $\gamma_i$ satisfy
  condition~\eqref{eqn:lazy-gamma}. Then
  for $\bar{\Delta}_n^1 = \frac{1}{n} \sum_{i = 1}^n \Delta_i^1$, we have
  \begin{equation}
    \label{eqn:polyak-expansion}
    \sqrt{n} \bar{\Delta}_n^1
    = \frac{\alpha_n \Delta_0^1}{\sqrt{n} \gamma_0}
    + \frac{1}{\sqrt{n} A} \sum_{i=1}^{n-1} \xi_i
    + \frac{1}{\sqrt{n}}\sum_{i=1}^{n-1} w_i^n \xi_i,
  \end{equation}
  where $\alpha_n$ and $w_i^n$ are real numbers such that $|\alpha_n| \leq
  K$ and $|w_i^n|\leq K$ for some $K< \infty$, and $\lim_{n\to \infty}
  \frac{1}{n} \sum_{i=1}^{n-1} |w_i^n| = 0$.
\end{lem}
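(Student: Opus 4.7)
The recursion $\Delta_i^1 = (I-\gamma_i A)\Delta_{i-1}^1 - \gamma_i\xi_i$ is linear, which I plan to exploit in two complementary ways: an Abel-summation identity that isolates the ideal noise average $A^{-1}\bar\xi_n$, and the explicit Duhamel representation $\Delta_i^1 = \Phi_{i,0}\Delta_0^1 - \sum_{j\le i}\gamma_j\Phi_{i,j}\xi_j$ with $\Phi_{i,j} \defeq \prod_{k=j+1}^i(I-\gamma_k A)$, used to rewrite the Abel residual as a weighted noise sum $\frac{1}{\sqrt n}\sum w_i^n \xi_i$.

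\textit{Step 1 (Abel summation).} From the recursion, $A\Delta_{i-1}^1 = \gamma_i^{-1}(\Delta_{i-1}^1-\Delta_i^1)-\xi_i$; summing over $i=1,\ldots,n$ and applying summation-by-parts to the near-telescoping piece yields
\[
A\sum_{i=0}^{n-1}\Delta_i^1 = \frac{\Delta_0^1}{\gamma_1} - \frac{\Delta_n^1}{\gamma_n} + \sum_{i=1}^{n-1}\Delta_i^1\!\left(\frac{1}{\gamma_{i+1}}-\frac{1}{\gamma_i}\right) - \sum_{i=1}^n\xi_i.
\]
Dividing by $n$ and multiplying by $\sqrt n A^{-1}$ already produces the advertised $\frac{1}{\sqrt n A}\sum\xi_i$ contribution (a sign will be absorbed into the $w_i^n$); the boundary term $\Delta_0^1/(\sqrt n\gamma_1)$ supplies the coefficient $\alpha_n \defeq \gamma_0/(A\gamma_1)$, which is bounded in $n$ because the sequence $\gamma_i$ is non-increasing.

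\textit{Step 2 (Duhamel substitution).} Into the remaining residual $-\Delta_n^1/(\sqrt n A\gamma_n)+\frac{1}{\sqrt n A}\sum_{i<n}\Delta_i^1(\gamma_{i+1}^{-1}-\gamma_i^{-1})$ I substitute the explicit Duhamel formula for $\Delta_n^1$ and for each $\Delta_i^1$ appearing inside the sum. Interchanging the order of summation and collecting the coefficient of each $\xi_j$ gives
\[
w_j^n = -\frac{\gamma_j\Phi_{n,j}}{A\gamma_n} + \frac{\gamma_j}{A}\sum_{i=j}^{n-1}\!\left(\frac{1}{\gamma_{i+1}}-\frac{1}{\gamma_i}\right)\Phi_{i,j},
\]
while the $\Phi_{i,0}\Delta_0^1$ contributions are absorbed into the $\alpha_n\Delta_0^1/(\sqrt n\gamma_0)$ term, realizing the expansion~\eqref{eqn:polyak-expansion}.

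\textit{Step 3 (the main obstacle: controlling $w_j^n$).} The delicate part is to verify simultaneously $\sup_{j,n}|w_j^n|\le K$ and $n^{-1}\sum_{j=1}^{n-1}|w_j^n|\to 0$. Uniform boundedness follows from the geometric decay $\|\Phi_{i,j}\|\le \exp(-cA\sum_{k=j+1}^i\gamma_k)$ (valid once $\gamma_k A$ is small), which yields the ``effective horizon'' estimate $\gamma_j\sum_{i\ge j}\|\Phi_{i,j}\| = O(1/A)$ by a standard Abel-type comparison; together with $\gamma_n\le\gamma_j$ this bounds both summands in $w_j^n$. The Cesaro smallness is where the stepsize hypothesis $(\gamma_i-\gamma_{i+1})/\gamma_i^2\to 0$ is essential: rewriting $\gamma_{i+1}^{-1}-\gamma_i^{-1} = (\gamma_i-\gamma_{i+1})/(\gamma_i\gamma_{i+1})$, the hypothesis forces this factor to be $o(1)$ times the natural mixing rate $\gamma_i$, so that the weighted sum inside $w_j^n$ is $o(1)$ pointwise in $j/n$. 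To upgrade pointwise to Cesaro convergence, I would split the sum at $j=(1-\varepsilon)n$: on $j\le(1-\varepsilon)n$, the pointwise $o(1)$ bound combined with the uniform majorant $K$ and dominated convergence gives a vanishing contribution, while on $j>(1-\varepsilon)n$ the length of the range is $\varepsilon n$ and the uniform bound yields a contribution $\le K\varepsilon$. Sending $n\to\infty$ and then $\varepsilon\to 0$ closes the argument.
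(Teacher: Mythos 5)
The paper does not prove this lemma; it is quoted verbatim as Lemma~2 of Polyak and Juditsky~\cite{PolyakJu92}, with no argument supplied. Your plan --- Abel summation to peel off the $A^{-1}\bar\xi$ term, Duhamel's formula to rewrite the Abel residual as a weighted linear functional of $\xi_1,\ldots,\xi_{n-1}$, then direct control of the weights --- follows precisely the route of the cited source, and Steps~1 and~2 carry through apart from sign bookkeeping. (The constant $\pm A^{-1}$ in the main noise term cannot be ``absorbed into the $w_i^n$'' as you suggest, since those weights must vanish in Ces\`aro mean; the sign discrepancy is a harmless convention issue, one also present in the paper's own rendering of the lemma and its subsequent application.)

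The genuine gap is in Step~3, in the claimed uniform bound $\sup_{j,n}|w_j^n|\le K$. The first summand of $w_j^n$ is $\gamma_j\Phi_{n,j}/(A\gamma_n)$, and you propose to bound it by the ``effective horizon'' estimate $\gamma_j\sum_{i\ge j}\|\Phi_{i,j}\|=O(1/A)$ ``together with $\gamma_n\le\gamma_j$.'' Neither ingredient works: $\gamma_n\le\gamma_j$ gives $\gamma_j/\gamma_n\ge 1$, a bound in the \emph{wrong} direction, and the horizon estimate controls a sum over $i$, not the single boundary term at $i=n$ divided by $\gamma_n$. For $n-j$ large the ratio $\gamma_j/\gamma_n$ is unbounded, and the correct argument is a direct comparison of its sub-exponential growth against the exponential decay of $\Phi_{n,j}$: the hypothesis $(\gamma_k-\gamma_{k+1})/\gamma_k^2\to 0$ yields $\log(\gamma_j/\gamma_n)=\sum_{k=j}^{n-1}\log(\gamma_k/\gamma_{k+1})\le(1+o(1))\bigl(\sup_{k\ge j}\varepsilon_k\bigr)\sum_{k=j}^{n-1}\gamma_k$ with $\varepsilon_k\to 0$, which is eventually dominated by $-\log\Phi_{n,j}\ge(1-o(1))\,A\sum_{k=j+1}^n\gamma_k$. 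Without this comparison the uniform majorant --- on which your dominated-convergence/Ces\`aro step also relies --- is unsupported. Incidentally, the horizon estimate itself is not ``a standard Abel-type comparison'' either: it is $\gamma_j$, not $\gamma_i$, that sits outside the sum, and the two are interchangeable only because $\gamma_j/\gamma_i$ stays bounded on the effective horizon, again by the lazy-gamma condition; your sketch treats this as free.
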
 

\begin{lem}[\cite{polyak1992acceleration}]
  \label{lemma:converging-power-sum}
  Under the conditions of Corollary~\ref{corollary:normal-expansion},
  with probability 1,
  \begin{equation*}
  \sum_{i=1}^\infty \frac{|\Delta_{i}|^{1+\lambda}}{\sqrt{i}} < \infty.
  \end{equation*}
\end{lem}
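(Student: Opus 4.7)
The plan is to reduce the almost-sure summability to deterministic summability of the stepsizes via a second-moment bound of the form $\E[\Delta_i^2] = O(\gamma_i)$, combined with Lyapunov's inequality and Tonelli's theorem.

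First, I would square the recursion $\Delta_i = \Delta_{i-1} + \gamma_i \varphi(Z_i - \Delta_{i-1})$ and condition on $\Delta_{i-1}$, obtaining
\begin{equation*}
  \E\left[\Delta_i^2 \mid \Delta_{i-1}\right]
  = \Delta_{i-1}^2 + 2\gamma_i \Delta_{i-1}\, \psi(-\Delta_{i-1})
  + \gamma_i^2\, \chi(-\Delta_{i-1}).
\end{equation*}
The drift hypothesis $y\psi(y) > 0$ for $y \neq 0$ together with the local expansion~\eqref{eqn:local-hessian-psi} gives $\Delta\,\psi(-\Delta) \leq -\psi'(0)\Delta^2 + C|\Delta|^{2+\lambda}$, while the linear growth $|\varphi(x)| \leq K_1(1+|x|)$ yields $\chi(-\Delta) \leq C(1+\Delta^2)$. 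For $|\Delta_{i-1}|$ sufficiently small, these combine into a contractive recursion $\E[\Delta_i^2 \mid \Delta_{i-1}] \leq (1 - c\gamma_i)\Delta_{i-1}^2 + C\gamma_i^2$. A global Robbins--Siegmund argument using only $y\psi(y)>0$ and the linear growth of $\varphi$ first yields $\Delta_i \to 0$ almost surely; once we are in the localized regime, iterating the contracted recursion and invoking the stepsize condition $(\gamma_n - \gamma_{n+1})/\gamma_n^2 \to 0$ from~\eqref{eqn:lazy-gamma} produces $\E[\Delta_i^2] = O(\gamma_i)$.

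With this moment bound, Lyapunov's inequality gives $\E[|\Delta_i|^{1+\lambda}] \leq (\E[\Delta_i^2])^{(1+\lambda)/2} = O(\gamma_i^{(1+\lambda)/2})$, and Tonelli's theorem yields
\begin{equation*}
  \E\left[\sum_{i=1}^\infty \frac{|\Delta_i|^{1+\lambda}}{\sqrt{i}} \right]
  \leq C \sum_{i=1}^\infty \frac{\gamma_i^{(1+\lambda)/2}}{\sqrt{i}} < \infty,
\end{equation*}
where the last inequality is precisely the summability requirement in~\eqref{eqn:lazy-gamma}. A non-negative random series with finite expectation is finite almost surely, giving the claim.

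The main obstacle is that the contractive inequality is strictly local: the Taylor expansion of $\psi$ only dominates the $O(|\Delta|^{2+\lambda})$ remainder once $|\Delta_{i-1}|$ is small, so a naive iteration of the recursion is insufficient. A two-stage bootstrap is required --- a cruder Lyapunov argument (exploiting only $y\psi(y) > 0$ and truncation) first establishes almost-sure convergence $\Delta_i \to 0$, and only afterward can one patch in the localized analysis that delivers the sharp $O(\gamma_i)$ rate. This bootstrap is precisely the technical content of the supporting lemmas in Polyak and Juditsky~\cite{PolyakJu92}, and rather than rederiving their machinery I would adapt their argument to the present setting.
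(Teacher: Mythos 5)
The paper itself proves this lemma only by reference, stating that it follows from the proof of Theorem~2 in Polyak and Juditsky~\cite{polyak1992acceleration}; your proposed pipeline --- a second-moment bound $\E[\Delta_i^2]=O(\gamma_i)$, then Lyapunov's inequality, then Tonelli plus a.s.\ finiteness of a nonnegative series with finite expectation --- is in the right spirit. However, there is a genuine gap at the key step: the unconditional bound $\E[\Delta_i^2]=O(\gamma_i)$ is not established by the argument you give, and does not follow from the hypotheses of Corollary~\ref{corollary:normal-expansion}. The contractive drift inequality is only valid locally, once $|\Delta_{i-1}|$ is small enough for the linear term in $\psi$ to dominate the $|\Delta|^{1+\lambda}$ remainder, and your proposed two-stage bootstrap (first establish $\Delta_i\to 0$ a.s., then ``localize'' the moment analysis) does not transfer to unconditional expectations: $\E[\Delta_i^2]$ averages over all sample paths, including those that have not yet entered the contractive region, and since $\varphi$ has only linear growth and there is no global lower bound on the drift strength, there is no direct way to force the unconditional second moment down at rate $\gamma_i$.

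The correct repair is the stopping-time localization the paper uses a few lines later in the proof of Corollary~\ref{corollary:normal-expansion}\eqref{item:regularity}. Define $\tau_R=\inf\{i:|\Delta_i|>R\}$, so that $\{\tau_R\le i\}\in\mc{F}_i$; then \cite[Eq.~(A14)]{PolyakJu92} gives $\E[\Delta_i^2\indic{\tau_R>i}]\le K_R\gamma_i$. Since $|\Delta_i|^{1+\lambda}\indic{\tau_R>i}=(\Delta_i^2\indic{\tau_R>i})^{(1+\lambda)/2}$, Jensen's inequality yields $\E[|\Delta_i|^{1+\lambda}\indic{\tau_R>i}]\le(K_R\gamma_i)^{(1+\lambda)/2}$, and Tonelli together with condition~\eqref{eqn:lazy-gamma} gives
\begin{equation*}
  \sum_{i=1}^\infty \frac{|\Delta_i|^{1+\lambda}\indic{\tau_R>i}}{\sqrt{i}}<\infty \quad \mbox{a.s.\ for each fixed }R.
\end{equation*}
Finally, a.s.\ convergence $\Delta_i\to 0$ (which is where your Robbins--Siegmund step genuinely earns its keep) ensures $\tau_R=\infty$ for some random finite $R$, so the unstopped series is a.s.\ finite. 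With the unconditional moment claim replaced by this stopped version, the rest of your argument goes through and recovers the substance of the Polyak--Juditsky proof.
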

\noindent
Lemma~\ref{lemma:converging-power-sum} follows from the proof of Theorem 2
in \cite[page 851]{polyak1992acceleration}.

We separate the proof of part~\eqref{item:regularity} into two lemmas, which
mirror the proofs of Polyak and Juditsky~\cite{polyak1992acceleration}; together they
immediately give the result.

\begin{lemma}
  The expansion~\eqref{eq:normal_expansion_lem} holds for the process
  $\bar{\Delta}^1_n$ defined by the iteration
  \begin{align}
    & \Delta_i^1  = \Delta_{i-1}^1 - \gamma_i \psi'(0) \Delta_{i-1}^1 - \gamma_i \varphi(Z_i), \qquad
    \Delta_0^1 = \Delta_0 \nonumber\\
    & \bar{\Delta}^1_n = \frac{1}{n}\sum_{i=0}^{n-1} \Delta^1_i.
    \label{eqn:polyak-expansion_lem1_alg}
  \end{align}
\end{lemma}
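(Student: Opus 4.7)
The linearized recursion in~\eqref{eqn:polyak-expansion_lem1_alg} rewrites as
\[
\Delta_i^1 = \Delta_{i-1}^1 - \gamma_i\bigl(\psi'(0)\,\Delta_{i-1}^1 + \varphi(Z_i)\bigr),
\]
which is exactly the affine process in Lemma~\ref{lemma:polyak-expansion} with the choices $A = \psi'(0) > 0$ and $\xi_i = \varphi(Z_i)$. Recall that by assumption (condition~\eqref{item:zero-gradient} in Corollary~\ref{corollary:polyak-juditsky}) we have $\psi(0) = \E[\varphi(Z_1)] = 0$, so the $\xi_i$ are i.i.d.\ mean-zero with variance $\chi(0) < \infty$, and $\psi'(0) > 0$ so that $A > 0$. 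The step-size hypothesis~\eqref{eqn:lazy-gamma} is carried over from Corollary~\ref{corollary:normal-expansion}. Hence the hypotheses of Lemma~\ref{lemma:polyak-expansion} are all in force, and the lemma provides the decomposition
\[
\sqrt{n}\,\bar{\Delta}^1_n
\;=\; \frac{\alpha_n \Delta_0^1}{\sqrt{n}\,\gamma_0}
\;+\; \frac{1}{\sqrt{n}\,\psi'(0)}\sum_{i=1}^{n-1}\varphi(Z_i)
\;+\; \frac{1}{\sqrt{n}}\sum_{i=1}^{n-1} w_i^n\,\varphi(Z_i),
\]
with $|\alpha_n|\le K$, $|w_i^n|\le K$, and $n^{-1}\sum_{i=1}^{n-1}|w_i^n|\to 0$.

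To reach the claimed expansion~\eqref{eq:normal_expansion_lem} (up to the sign convention built into Lemma~\ref{lemma:polyak-expansion}) it therefore suffices to show that the first and third terms on the right-hand side are $o_{P,n}(1)$. The boundary term is trivial: $\Delta_0^1$ is a deterministic initial condition, $\gamma_0$ is fixed, and $|\alpha_n|\le K$, so $\alpha_n \Delta_0^1/(\sqrt n\,\gamma_0)=O(n^{-1/2})\to 0$. The main work is controlling the weighted noise average
\[
R_n \;\defeq\; \frac{1}{\sqrt{n}}\sum_{i=1}^{n-1} w_i^n\,\varphi(Z_i).
\]
Since the $\varphi(Z_i)$ are i.i.d.\ with mean $0$ and variance $\chi(0)$, and the weights $w_i^n$ are deterministic (they are functions only of $\{\gamma_i\}$ and $A$), we compute
\[
\E\bigl[R_n^2\bigr]
\;=\; \frac{\chi(0)}{n}\sum_{i=1}^{n-1}(w_i^n)^2
\;\le\; \frac{K\,\chi(0)}{n}\sum_{i=1}^{n-1}|w_i^n|
\;\longrightarrow\; 0,
\]
using $|w_i^n|\le K$ and the Cesaro-type convergence $n^{-1}\sum|w_i^n|\to 0$ from Lemma~\ref{lemma:polyak-expansion}. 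Chebyshev's inequality then yields $R_n = o_{P,n}(1)$.

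Combining the three estimates gives
\[
\sqrt{n}\,\bar{\Delta}^1_n \;=\; \frac{1}{\sqrt{n}\,\psi'(0)}\sum_{i=1}^{n-1}\varphi(Z_i) \;+\; o_{P,n}(1),
\]
which is precisely~\eqref{eq:normal_expansion_lem} for the linearized process. I expect the only subtle step to be bookkeeping of the sign conventions between the linearized recursion and the statement of Lemma~\ref{lemma:polyak-expansion}; the genuinely probabilistic content is entirely contained in the $L^2$ bound on $R_n$, which is routine once the weights $w_i^n$ have been isolated. All other ingredients (existence and finiteness of $\chi(0)$, mean-zeroness of $\varphi(Z_i)$, positivity of $A=\psi'(0)$, and the step-size conditions) are hypotheses we are entitled to invoke.
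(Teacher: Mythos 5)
Your proof follows exactly the same route as the paper: invoke Lemma~\ref{lemma:polyak-expansion} with $A=\psi'(0)$ to get the three-term decomposition, observe that the deterministic initial term is $O(n^{-1/2})$, and kill the weighted-noise remainder $n^{-1/2}\sum w_i^n\xi_i$ by a second-moment/Chebyshev argument using independence, $\E[\xi_i]=0$, and $n^{-1}\sum|w_i^n|\to 0$. The only difference is cosmetic bookkeeping: the paper takes $\xi_i=-\varphi(Z_i)$ whereas you take $\xi_i=\varphi(Z_i)$, which you correctly flag as a pure sign-convention issue between the iteration~\eqref{eqn:polyak-expansion_lem1_alg} and the expansion in Lemma~\ref{lemma:polyak-expansion}; and you make explicit the step $(w_i^n)^2\le K|w_i^n|$ converting the $\ell^1$-Cesaro decay of the weights into the needed $\ell^2$-Cesaro decay, which the paper states without spelling out.
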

\begin{proof}
  To prove this claim, use Lemma~\ref{lemma:polyak-expansion} with $A =
  \psi'(0)$ and $\xi_i = -\varphi(Z_i)$, which by
  condition~\eqref{item:zero-gradient} in
  Corollary~\ref{corollary:polyak-juditsky} gives that
  $\E[\xi_i] = 0$ and that the $\xi_i$ are independent.
  The first term $\alpha_n \Delta_0^1 / \gamma_0 \sqrt{n} \to 0$
  in Eq.~\eqref{eqn:polyak-expansion}. In addition,
  by independence and that the $\xi_i$ are mean-zero, we have
  \begin{align*}
    & \ex{ \left( \frac{1}{\sqrt{n}} \sum_{i=1}^{n-1} w_i^n \xi_i \right)^2 } \\
     & \quad = \frac{1}{n}  \sum_{i=1}^n (w_i^n)^2 \ex{ \xi_i^2} + \frac{1}{n}  \sum_{i\neq j}^n w_i^n w_j^n \ex{ \xi_i \xi_j} \\
    & \quad = \frac{1}{n}  \sum_{i=1}^n (w_i^n)^2 \ex{ \varphi(Z_i)^2} = \chi(0) \frac{1}{n}  \sum_{i=1}^n (w_i^n)^2 \to 0
  \end{align*}
  by Lemma~\ref{lemma:polyak-expansion}.
  Thus, the expansion~\eqref{eqn:polyak-expansion} in
  Lemma~\ref{lemma:polyak-expansion} gives
  \begin{equation*}
    \sqrt{n} \bar{\Delta}^1_n
    = -\frac{1}{\sqrt{n}} \frac{1}{\psi'(0)}
    \sum_{i=1}^{n-1} \varphi(Z_i)+ o_{P,n}(1)
  \end{equation*}
  as desired.
\end{proof}

We then have the following asymptotic equivalence.
\begin{lemma}
  The sequences $\bar{\Delta}_n$ and $\bar{\Delta}^1_n$ are asymptotically
  equivalent, meaning that
  $\sqrt{n} (\bar{\Delta}_n - \bar{\Delta}_n^1) \cp 0$.
\end{lemma}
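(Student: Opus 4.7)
The plan is to derive a linear recursion for the difference $d_i \defeq \Delta_i - \Delta_i^1$ and then apply Lemma~\ref{lemma:polyak-expansion} to control it. Starting from $\Delta_i = \Delta_{i-1} + \gamma_i \varphi(Z_i - \Delta_{i-1})$, I would decompose $\varphi(Z_i - \Delta_{i-1}) = \psi(-\Delta_{i-1}) + [\varphi(Z_i - \Delta_{i-1}) - \psi(-\Delta_{i-1})]$, use the Taylor-like bound~\eqref{eqn:local-hessian-psi} to write $\psi(-\Delta_{i-1}) = -\psi'(0)\Delta_{i-1} + r_i$ with $|r_i| \le K_2|\Delta_{i-1}|^{1+\lambda}$, and separate the centred noise as $\varphi(Z_i) + m_i$, where
\begin{equation*}
m_i \defeq \bigl[\varphi(Z_i - \Delta_{i-1}) - \varphi(Z_i)\bigr] - \psi(-\Delta_{i-1})
\end{equation*}
is adapted to $\mathcal{F}_{i-1} \defeq \sigma(Z_1,\ldots,Z_{i-1})$ with $\E[m_i \mid \mathcal{F}_{i-1}] = 0$. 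Subtracting~\eqref{eqn:polyak-expansion_lem1_alg} from the resulting identity (absorbing any sign convention into the definition of $m_i$) yields the linear recursion $d_i = d_{i-1}(1 - \gamma_i \psi'(0)) + \gamma_i (r_i + m_i)$ with $d_0 = 0$.

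Applying Lemma~\ref{lemma:polyak-expansion} with $A = \psi'(0)$ and $\xi_i \leftarrow -(r_i + m_i)$, one obtains
\begin{equation*}
\sqrt{n}\,\bar{d}_n = -\frac{1}{\sqrt{n}\,\psi'(0)} \sum_{i=1}^{n-1}(r_i + m_i) - \frac{1}{\sqrt{n}} \sum_{i=1}^{n-1} w_i^n (r_i + m_i),
\end{equation*}
with $|w_i^n| \le K$ and $\tfrac{1}{n}\sum_{i=1}^{n-1}|w_i^n| \to 0$. It therefore suffices to show that the drift and noise contributions to each of these four sums are $o_P(1)$. For the drift, Lemma~\ref{lemma:converging-power-sum} gives $\sum_i |\Delta_i|^{1+\lambda}/\sqrt{i} < \infty$ almost surely, so Kronecker's lemma yields $\frac{1}{\sqrt{n}}\sum_{i=1}^{n-1}|r_i| \le \frac{K_2}{\sqrt{n}} \sum_{i=1}^{n-1}|\Delta_{i-1}|^{1+\lambda} \cas 0$; since $|w_i^n| \le K$, the weighted version is controlled identically. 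For the noise, condition~\eqref{eqn:additional-local-smooth} together with $\psi(-\Delta_{i-1})^2 \le 4 f(0)^2\Delta_{i-1}^2$ gives $\E[m_i^2 \mid \mathcal{F}_{i-1}] \le K'(|\Delta_{i-1}|^\lambda + \Delta_{i-1}^2)$, and consistency $\Delta_i \cas 0$ (a byproduct of Corollary~\ref{corollary:polyak-juditsky}) combined with Cesaro averaging gives $\frac{1}{n}\sum_{i=1}^{n-1}\E[m_i^2 \mid \mathcal{F}_{i-1}] \cas 0$. Doob's $L^2$ inequality applied to the martingale sums $\frac{1}{\sqrt{n}}\sum_{i=1}^{n-1}(\psi'(0)^{-1} + w_i^n) m_i$ then bounds their second moments by $\frac{C}{n}\E\bigl[\sum_{i=1}^{n-1}\E[m_i^2 \mid \mathcal{F}_{i-1}]\bigr] = o(1)$, so both noise sums vanish in probability. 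Combining drift and noise estimates gives $\sqrt{n}\,\bar{d}_n \cp 0$.

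The main obstacle is controlling the noise residual $m_i$: because $\varphi = \sgn$ is discontinuous, there is no pointwise Taylor expansion of $\varphi(Z_i - \Delta) - \varphi(Z_i)$, and the usual Polyak--Juditsky argument for smooth $\varphi$ must be replaced by the mean-square smoothness encoded in hypothesis~\eqref{eqn:additional-local-smooth}. That hypothesis is precisely what Section~\ref{sec:proof-normal-sgd} verifies for $\varphi = \sgn$ using the local Lipschitz continuity of $f$ at its median, and once it is available the remainder of the argument reduces to standard stochastic-approximation bookkeeping.
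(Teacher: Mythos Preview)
Your decomposition and overall strategy match the paper's: write $d_i=\Delta_i-\Delta_i^1$ as a linear recursion driven by $r_i+m_i$, apply Lemma~\ref{lemma:polyak-expansion}, and handle the drift $r_i$ via Lemma~\ref{lemma:converging-power-sum} and Kronecker. That part is correct and identical to the paper.

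The gap is in the martingale noise term. From $\Delta_i\cas 0$ and Ces\`aro you correctly obtain $\frac{1}{n}\sum_{i<n}\E[m_i^2\mid\mathcal F_{i-1}]\cas 0$, but you then claim that the second moment of $\frac{1}{\sqrt n}\sum_i(\psi'(0)^{-1}+w_i^n)m_i$ is $\frac{C}{n}\E\bigl[\sum_i\E[m_i^2\mid\mathcal F_{i-1}]\bigr]=o(1)$. Almost-sure convergence of a nonnegative sequence to zero does not force its expectation to zero without uniform integrability, and under the hypotheses of Corollary~\ref{corollary:normal-expansion} (only $|\varphi(x)|\le K_1(1+|x|)$, no moment control on $Z$) there is no unlocalized bound on $\E[|\Delta_i|^\lambda]$ or $\E[\Delta_i^2]$ to supply it. The paper sidesteps this by localizing with the stopping time $\tau_R=\inf\{i:|\Delta_i|>R\}$, invoking the Polyak--Juditsky estimate $\E[\Delta_i^2\indic{\tau_R>i}]\le K\gamma_i$ to get $\sum_i i^{-1}\E[\epsilon_i^2\indic{\tau_R>i}]<\infty$, and then applying a square-integrable martingale strong law together with $\sup_i|\Delta_i|<\infty$ a.s. Your route is salvageable without localization if you replace the $L^2$ computation by Lenglart's domination inequality, which turns $\frac{1}{n}\sum_i\E[m_i^2\mid\mathcal F_{i-1}]\cp 0$ directly into $\frac{1}{\sqrt n}\sum_i(\psi'(0)^{-1}+w_i^n)m_i\cp 0$; as written, however, the ``$=o(1)$'' step is unjustified.
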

\begin{proof}
  From the recursions~\eqref{eq:Polyak_Juditsky_alg} and
  \eqref{eqn:polyak-expansion_lem1_alg}, the difference $\delta_i = \Delta_i
  - \Delta_i^1$ satisfies
  \begin{align*}
  & \delta_i = \delta_{i-1} - \gamma_i \psi'(0) \delta_{i-1}  \\
  & \qquad + \gamma_i \left( \psi'(0) \Delta_{i-1}  + \varphi(Z_i) - \varphi(\Delta_{i-1} + Z_i) \right),
  \end{align*}
  where $\delta_0 = 0$. Applying Lemma~\ref{lemma:polyak-expansion} with the
  choices $\xi_i = \psi'(0) \Delta_{i-1} + \varphi(Z_i) -
  \varphi(\Delta_{i-1} + Z_i)$ yields
  \begin{align}
    & \sqrt{n}\bar{\delta}_n
    = \frac{1}{\sqrt{n}} \sum_{i=1}^{n-1}
    \left( \frac{1}{\psi'(0)} + w_i^n \right)  \xi_i  \nonumber \\
    & \quad = \frac{1}{\sqrt{n}} \sum_{i=1}^{n-1}
    \left( \frac{1}{\psi'(0)} + w_i^n \right)
    \left( \psi'(0) \Delta_{i-1}  - \psi(\Delta_{i-1}) \right)
    \label{eq:PJ_proof1} \\
    & \qquad ~ + 
    \frac{1}{\sqrt{n}} \sum_{i=1}^{n-1} \left( \frac{1}{\psi'(0)}
    + w_i^n \right) \label{eq:PJ_proof2} \\
    & \qquad \qquad \times 
    \left( \psi(\Delta_{i-1})  + \varphi(Z_i) - \varphi(\Delta_{i-1}+Z_i)
    \right) \nonumber 
  \end{align}
  For the term \eqref{eq:PJ_proof1},
  the assumption~\eqref{eqn:local-hessian-psi} that
  $|\psi(x) - \psi'(0) x| = O(x^{1 + \lambda})$
  and that $\sup_{i,n} |w_i^n| < \infty$ by Lemma~\ref{lemma:polyak-expansion}
  give that there exists $K < \infty$ such that
  $|\psi'(0)^{-1} + w_i^n| |\psi'(0) \Delta_{i-1} - \psi(\Delta_{i-1})|
  \le K |\Delta_i|^{1 + \lambda}$.
  Lemma~\ref{lemma:converging-power-sum} gives that
  $\sum_{i = 1}^n
  \frac{1}{\sqrt{i}} |\Delta_i|^{1 + \lambda} < \infty$,
  and so the Kronecker lemma gives that
  \begin{equation*}
    \frac{1}{\sqrt{n}} \sum_{i=1}^{n-1} \left( \frac{1}{\psi'(0)}  + w_i^n \right)  \left( \psi'(0) \Delta_{i-1}  - \psi(\Delta_{i-1}) \right) \cas 0.
  \end{equation*}

  The term \eqref{eq:PJ_proof2} is somewhat more challenging to control.
  We define
  \begin{equation*}
    \epsilon_i \defeq \psi(\Delta_{i-1}) + \varphi(Z_i)
    - \varphi(\Delta_{i-1}+Z_i),
  \end{equation*}
  and let
  $\mc{F}_i = \sigma(Z_1, \ldots, Z_i)$ be the $\sigma$-field of the randomness
  through time $i$. We use a square integrable martingale convergence
  theorem~\cite[Exercise~5.3.35]{Dembo16}. Noting that
  $\Delta_i \in \mc{F}_i$, we have
  \begin{align}
    & \E[\epsilon_i^2 \mid \mc{F}_{i-1}] \nonumber \\
    & \quad = \E[(\psi(\Delta_{i-1}) + \varphi(Z_i)
      - \varphi(\Delta_{i-1} + Z_i))^2 \mid \mc{F}_{i-1}]
    \nonumber \\
    & \quad  \le 2 \psi(\Delta_{i-1})^2
    + 2 \E[(\varphi(\Delta_{i-1} + Z_i) - \varphi(Z_i))^2 \mid \mc{F}_{i-1}]
    \nonumber \\
    & \quad \le
    K \left[|\Delta_{i-1}|^{1 + \lambda}
      + |\Delta_{i-1}|^\lambda
      + \Delta_{i-1}^2 \right],
    \label{eqn:bound-psi-error-expectations}
  \end{align}
  where inequality~\eqref{eqn:bound-psi-error-expectations} follows by the
  conditions~\eqref{eqn:local-hessian-psi}
  and~\eqref{eqn:additional-local-smooth}, and $\E[\varepsilon_i \mid
    \mc{F}_{i-1}] = 0$ for all $i$ by definition of $\psi(x) = \E[\varphi(x
    + Z)]$ and that $\psi(0) = 0$.  We now control the expectations of these
  quantities. For $R<\infty$, define the the stopping time $\tau_R \defeq \inf
  \{i : |\Delta_i| > R\}$, which satisfies $\{\tau_R \le i\} \in \mc{F}_i$
  for each $i$. Then using~\cite[Eq.~(A13-A14)]{polyak1992acceleration}, we have
  \begin{equation*}
    \E[\Delta_i^2 \indic{\tau_R > i}] \le K \gamma_i,
  \end{equation*}
  and so inequality~\eqref{eqn:bound-psi-error-expectations} gives that
  \begin{align*}
    & \E\bigg[\sum_{i = 1}^\infty \frac{1}{i}
      |\varepsilon_i|^2 \indic{\tau_R > n} \bigg]
    \le K \sum_{i = 1}^\infty \frac{\gamma_i^\lambda}{i}
    < \infty \\
    & 
    ~~ \mbox{so} ~~
    \sum_{i = 1}^\infty \frac{1}{i}
    \epsilon_i^2 \indic{\tau_R > n} < \infty
    ~ \mbox{a.s.}
  \end{align*}
by Condition~\eqref{eqn:lazy-gamma}. 
%
 As in the proof of Theorems~2 and~4 in \cite{polyak1992acceleration}, the Robbins-Siegmund Theorem \cite{robbins1971convergence} applied to the increment of $|\Delta_t|^2$ implies that for every 
  $\epsilon>0$ there exists some $R'>0$ such that
  \begin{align}
  \label{eq:sup_delta}
  \Prob\left(\sup_i |\Delta_i| \le R' \right) \ge 1-\epsilon.
  \end{align} 
Consequently, there exists some $R'' < \infty$ such that $\tau_{R''} = \infty$. 
We obtain that 
 \begin{equation*}
    \sum_{i = 1}^\infty \frac{1}{i}
    \varepsilon_i^2  < \infty
    ~~ \mbox{a.s.}.
  \end{equation*}
  Applying the square integrable martingale convergence
  theorem of \cite[Ex.~5.3.35]{Dembo16}, we have
  \begin{equation*}
    \frac{1}{\sqrt{n}} \sum_{i = 1}^n
    \left(\frac{1}{\psi'(0)} + w_i^n\right) \epsilon_i \cas 0,
  \end{equation*}
  so that both equations~\eqref{eq:PJ_proof1} and~\eqref{eq:PJ_proof2}
  converge almost surely to 0.
\end{proof}

\subsubsection*{Proof of
  Corollary~\ref{corollary:normal-expansion}\eqref{item:apply-le-cam}}

This is essentially an immediate consequence of Le Cam's third
lemma~\cite[Example 6.7]{VanDerVaart98}.
Recall~\cite[Thm.~7.2]{VanDerVaart98} that if
a family $\{P_\theta\}_{\theta \in \Theta}$ is quadratic mean differentiable
at $\theta$ with score $\score_\theta$, then
it is LAN at $\theta$ (Definition~\ref{definition:lan})
with information matrix $I_\theta = \E[\score_\theta \score_\theta^\top]$.




The regularity result~\eqref{eq:normal_expansion_lem} gives
\begin{equation*}
  \sqrt{n} \bar{\Delta}_n = -\frac{1}{\sqrt{n}}
  \sum_{i = 1}^n  \frac{\varphi(Z_i)}{\psi'(0)} + o_{P,n}(1).
\end{equation*}
The conditions in
Corollary~\ref{corollary:normal-expansion}\eqref{item:apply-le-cam} imply
that the Fisher information $I_h =
\E_h[\score_h(Z_1)^2]$ exists and is continuous for
$\score_h(z) = \frac{p'(z - h)}{p(z - h)}$,
and the asymptotic expansion Definitions~\ref{definition:qmd}
and~\ref{definition:lan} combined
with the preceding display, give the joint convergence
\begin{align*}
  & \left(\sqrt{n} \bar{\Delta}_n, \sum_{i=1}^n \log \frac{P_{h_n/\sqrt{n}}}{P_0}(Z_i)
  \right) \cd \normal \left(\mu, \Sigma \right), \\
\end{align*}  
where
\begin{align*}
  \mu & = \left(0,-\frac{h^2}{2} I_0 \right), ~~~\mbox{and} \\
  \Sigma & = \begin{pmatrix}
    \frac{\chi(0)}{\psi'(0)^2} & \frac{-h}{ \psi'(0)}
    \E_p[\varphi(Z_1) \score_0(Z_1)] \\
    \frac{-h}{ \psi'(0)} \E_p[\varphi(Z_1) \score_0(Z_1)]
    & h^2 I_0
  \end{pmatrix}.
\end{align*}
Le Cam's third lemma \cite[Exm. 6.7]{VanDerVaart98} then implies the
convergence
\begin{equation*}
  \sqrt{n}\bar{\Delta}_n
  \mathop{\cd}_{P_{h_n/\sqrt{n}}^n}
  \Ncal\left(\frac{-h}{ \psi'(0)} \E_p[\varphi(Z) \score(Z)],
  \frac{\chi(0)}{\psi'(0)^2} \right)
\end{equation*}
under the alternatives $P^n_{h_n/\sqrt{n}}$, which gives
Corollary~\ref{corollary:normal-expansion}\eqref{item:apply-le-cam}.

\subsection{Proof of Lemma~\ref{lemma:qmd}}
\label{sec:proof-qmd}

The proof is essentially completely parallel to that of \cite[Lemma
  7.6]{VanDerVaart98}. Define $\dot{s}_\theta = \half
\frac{\dot{p}_\theta}{p_\theta} \sqrt{p_\theta}$, which exists $\mu$-almost
surely, so that $\int \dot{s}_\theta \dot{s}_\theta^\top d\mu$ is
well-defined (though it may be infinite). By Lebesgue's integration theorem,
we have
\begin{equation*}
  s_{\theta + h}(x) - s_\theta(x) = \int_0^1 h^\top \dot{s}_{\theta + t h}(x) dt,
\end{equation*}
and so By Jensen's inequality (or Cauchy-Schwartz) we have
\begin{equation*}
  (s_{\theta + h}(x) - s_\theta(x))^2
  \le \int_0^1 h^\top \dot{s}_{\theta + t h}(x)
  \dot{s}_{\theta + t h}(x) ^\top h dt.
\end{equation*}
Thus, for any $h_t$ we have
\begin{align*}
  & \int \left(\frac{s_{\theta + t h_t}(x) - s_\theta(x)}{t}\right)^2
  d\mu(x) \\
  & \qquad \le \int \int_0^1 (h_t^\top \dot{s}_{\theta + u t h_t})^2 du
  d\mu \\
  & \qquad  = \int_0^1 h_t^\top \int
  \dot{s}_{\theta + u t h_t}\dot{s}_{\theta + u t h_t}^\top
  d\mu(x) h_t  du \\
  & \qquad = \frac{1}{4} h_t^\top \left(\int_0^1 I_{\theta + u t h_t} du\right) h_t.
\end{align*}
By continuity, as $h_t \to h$ and $t \to 0$ the assumed continuity
of $\theta \mapsto I_\theta$ gives that the final display converges
to $h^\top I_\theta h$.

Now, we note that
\begin{equation*}
  \lim_{t \downarrow 0}
  \left(\frac{s_{\theta + t h_t}(x) - s_\theta(x)}{t}
  - h^\top \dot{s}_\theta(x)\right)^2 = 0
\end{equation*}
for all $x$ excepting a $\mu$-null set, and the
variant of the dominated convergence theorem in
\cite[Prop.~2.29]{VanDerVaart98} implies that
\begin{align*}
  & \lim_{t \to 0} 
  \frac{1}{t^2}
  \int \left(s_{\theta + t h_t}(x) - s_\theta(x)
  - t h^\top \dot{s}_\theta(x)\right)^2 d\mu(x) \\
  & ~~ = \lim_{t \to 0}
  \int \left(\frac{s_{\theta + t h_t}(x) - s_\theta(x)}{t}
  - h^\top \dot{s}_\theta(x)\right)^2 d\mu(x)
  = 0,
\end{align*}
completing the proof.


\section{Proof of Theorem~\ref{theorem:non-adaptive-minimax}}
\label{sec:proof-non-adaptive-minimax}

We follow a similar outline to the optimality results we establish
in the proof of Theorem~\ref{thm:sgd}\eqref{item:sgd-regular} in
Sec.~\ref{sec:proof-sgd-regular}.
Roughly, we establish that the family $P_\theta$ of distributions
on the bits $B_i$ is locally asymptotically normal
(Definition~\ref{definition:lan}) via a quadratic
mean differentiability argument. After this, the result
follows by standard local asymptotic minimax theory.

We begin with an argument on the smoothness properties of the densities,
which is important for our Taylor expansions to come.
\begin{lemma}
  \label{lemma:derivative-bounds}
  Let Assumption~\ref{assumption:detection-regions}\eqref{item:lipschitz-density} hold. Then for any $A = \cup_{i = 1}^k
  [a_i, b_i]$ and $h \in \R$,
  \begin{align}
    \left|P_{\theta + h}(A) - P_\theta(A) - \dPtheta(A) h \right| \leq 
    k \cdot \lip(f) h^2,
    \label{eqn:expansion-dPtheta}
  \end{align}
  where
  \begin{align*}
    \dPtheta(A) = \sum_{i = 1}^k f(a_i - \theta) - f(b_i - \theta).
  \end{align*}
  Additionally, we have the bounds
  \begin{align}
    & \label{eqn:bound-density-diffs}
    |f(b) - f(a)| \le 2 \sqrt{\lip(f) P([a, b])} \\
    & 
    ~~ \mbox{and} ~~
    |\dPtheta(A)| \le 2 \sqrt{k \lip(f)}. \nonumber
  \end{align}
\end{lemma}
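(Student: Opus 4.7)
My approach is to treat the three assertions of the lemma separately, each via a direct calculation exploiting that Lipschitz continuity of $f$ means $F$ has an essentially bounded ``second derivative.'' I will track constants carefully since that is where the statement is most delicate.

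For the expansion, I would write $P_\theta(A) = \sum_{i=1}^k [F(b_i - \theta) - F(a_i - \theta)]$ and Taylor expand each endpoint. Writing $F(x-h) - F(x) = -\int_0^h f(x-s)\,ds$ and substituting $f(x-s) = f(x) + [f(x-s) - f(x)]$, the Lipschitz bound $|f(x-s) - f(x)| \le \lip(f) s$ gives $F(x-h) = F(x) - h f(x) + r(x,h)$ with $|r(x,h)| \le \lip(f) h^2/2$. Summing the remainders over the $2k$ endpoints of the $k$ intervals produces an aggregate error bounded by $k \lip(f) h^2$, and the linear term is exactly $h \sum_i [f(a_i - \theta) - f(b_i - \theta)] = h \dPtheta(A)$, which also justifies formula~\eqref{eqn:expansion-dPtheta}.

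For the bound $|f(b) - f(a)| \le 2\sqrt{\lip(f) P([a,b])}$, I would assume without loss of generality that $f(b) \ge f(a)$ and set $\Delta \defeq f(b) - f(a)$. Since $f$ is $\lip(f)$-Lipschitz, $\Delta \le \lip(f)(b-a)$, so the point $c \defeq b - \Delta/\lip(f)$ lies in $[a,b]$. On $[c,b]$ the Lipschitz inequality yields $f(x) \ge f(b) - \lip(f)(b-x)$, and at $x=c$ this lower bound equals $f(a) \ge 0$, so it is nonnegative throughout. Integrating gives $P([a,b]) \ge \int_c^b [f(b) - \lip(f)(b-x)]\,dx \ge \Delta^2/(2\lip(f))$, and rearranging yields $|f(b) - f(a)| \le \sqrt{2 \lip(f) P([a,b])} \le 2\sqrt{\lip(f) P([a,b])}$.

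Finally, for the bound on $|\dPtheta(A)|$, I combine the triangle inequality with the previous estimate applied to each sub-interval $[a_i - \theta, b_i - \theta]$, and invoke Cauchy--Schwarz: $\sum_i \sqrt{P_\theta([a_i, b_i])} \le \sqrt{k} \bigl(\sum_i P_\theta([a_i, b_i])\bigr)^{1/2} = \sqrt{k P_\theta(A)} \le \sqrt{k}$. I do not expect any genuine obstacles here; the only mild subtlety is ensuring that the $1/2$ in the per-endpoint Taylor remainder is absorbed exactly by the two endpoints per interval to yield the clean constant $k \lip(f)$, and that the elementary integration step that produces $\Delta^2/(2\lip(f))$ is carried out with the correct lower limit $c$ so that nonnegativity of the integrand is preserved.
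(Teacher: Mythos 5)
Your proposal is correct, and it matches the paper's argument in overall strategy (Lipschitz control of $f$ to get a second-order expansion of $F$; a pointwise lower bound on the density integral to control $|f(b)-f(a)|$; triangle inequality plus Cauchy--Schwarz for the score bound). It differs in the execution of the middle step in a way worth noting. The paper bounds $P([a,b])$ below by integrating the maximum of the two Lipschitz tent functions anchored at $a$ and at $b$ and then splits into two cases depending on whether $f(a)+f(b) \gtrless \lip(f)(b-a)$, i.e., whether the tents meet above or below the axis. Your version fixes WLOG $f(b)\ge f(a)$, sets $c = b - \Delta/\lip(f) \in [a,b]$, and integrates only the single linear lower bound $f(b)-\lip(f)(b-x)$ over $[c,b]$; since this lower bound equals $f(a)\ge 0$ at $x=c$, no case split is needed, the computation is a single trapezoid area $\tfrac{f(a)+f(b)}{2}\cdot\tfrac{\Delta}{\lip(f)} \ge \tfrac{\Delta^2}{2\lip(f)}$, and you in fact obtain the slightly sharper constant $\sqrt{2}$ in place of $2$. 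For the first claim, expanding $F$ at each of the $2k$ endpoints with remainder $\lip(f)h^2/2$ each is equivalent to the paper's Fubini computation on each interval $[a_i,b_i]$, and both give the stated $k\,\lip(f)\,h^2$. For completeness one should mention the degenerate case $\lip(f)=0$ (so $c$ is undefined), but there $f$ is constant, $\Delta=0$, and the bound is trivial.
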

\noindent
See Section~\ref{sec:proof-derivative-bounds} for a proof.

The second lemma provides the local asymptotic normality we require.
\begin{lemma}
  \label{lemma:lan-bits}
  Let
  Assumption~\ref{assumption:detection-regions}\eqref{item:lipschitz-density}
  and~\eqref{item:finite-intervals} hold, and let $B_i = \indic{X_i \in
    A_i}$.  Let $h_n \to h \in \R$. Then for any $\theta \in
  \mbox{int}\Theta$,
  \begin{align*}
    & \sum_{i = 1}^n \log \frac{P_{\theta + h_n/\sqrt{n}}(B_i)}{
      P_\theta(B_i)} \\
    & \qquad = \frac{h}{\sqrt{n}}
    \sum_{i = 1}^n \score_\theta(B_i) - \frac{h^2}{4n} \sum_{i = 1}^n \var(\score_\theta(B_i)) \\
    & \qquad \qquad - \frac{h^2}{4n} \sum_{i = 1}^n \score_\theta(B_i)^2
    + o_P(1).
  \end{align*}
  If additionally
  Assumption~\ref{assumption:detection-regions}\eqref{item:limit-variance}
  holds, then
  \begin{equation*}
    \sum_{i = 1}^n \log \frac{P_{\theta + h_n/\sqrt{n}}(B_i)}{
      P_\theta(B_i)}
    = \frac{h}{\sqrt{n}}
    \sum_{i = 1}^n \score_\theta(B_i)
    - \frac{h^2}{2} \kappa(\theta) + o_P(1).
  \end{equation*}
\end{lemma}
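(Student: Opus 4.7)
The plan is to reduce the sum of log-likelihood ratios of the Bernoulli observations $B_i$ to a linear plus a quadratic functional in the scores via a pointwise Taylor expansion, then recast the quadratic piece in the symmetric form stated via a weak law of large numbers. Throughout, let $p_i \defeq P_\theta(A_i)$ and $p_i' \defeq P_{\theta + h_n/\sqrt n}(A_i)$. Lemma~\ref{lemma:derivative-bounds} gives the deterministic bounds
\[
p_i' - p_i = \frac{h_n}{\sqrt n}\,\dPtheta(A_i) + R_i,\qquad |R_i| \le \lip(f)\,k_i h_n^2/n,\qquad |\dPtheta(A_i)| \le 2\sqrt{k_i\,\lip(f)}.
\]
Each log-likelihood increment equals $B_i \log(1 + \delta_i^+) + (1 - B_i)\log(1 - \delta_i^-)$ with $\delta_i^+ \defeq (p_i' - p_i)/p_i$ and $\delta_i^- \defeq (p_i' - p_i)/(1 - p_i)$. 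Applying $\log(1+u) = u - u^2/2 + O(|u|^3)$ and the identity $\score_\theta(B_i) = \frac{\dPtheta(A_i)}{p_i}\mathbf 1\{B_i=1\} - \frac{\dPtheta(A_i)}{1-p_i}\mathbf 1\{B_i=0\}$, the first-order contributions sum to $\frac{h_n}{\sqrt n}\sum_i \score_\theta(B_i)$, and the second-order contributions sum \emph{exactly} to $-\frac{h_n^2}{2n}\sum_i \score_\theta(B_i)^2$, because $\mathbf 1\{B_i=1\}(\dPtheta(A_i)/p_i)^2 + \mathbf 1\{B_i=0\}(\dPtheta(A_i)/(1-p_i))^2 = \score_\theta(B_i)^2$.

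To match the lemma's stated form, I would observe that $\frac{1}{n}\sum_i\bigl(\score_\theta(B_i)^2 - \var_{P_\theta}(\score_\theta(B_i))\bigr) = o_P(1)$, so that $-\frac{h_n^2}{2n}\sum_i \score_\theta(B_i)^2 = -\frac{h_n^2}{4n}\sum_i \var(\score_\theta(B_i)) - \frac{h_n^2}{4n}\sum_i \score_\theta(B_i)^2 + o_P(1)$. Independence of the $B_i$ under $P_\theta$ reduces this weak law to bounding $\frac{1}{n^2}\sum_i \E[\score_\theta(B_i)^4]$; the crude estimate $|\score_\theta(B_i)| \le |\dPtheta(A_i)|/[p_i(1-p_i)] \le 2\sqrt{k_i\,\lip(f)}/[p_i(1-p_i)]$ combined with $\max_i k_i^3/[p_i(1-p_i)]^4 = o(n)$ from Assumption~\ref{assumption:detection-regions}\eqref{item:finite-intervals} makes this vanish. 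The second display of the lemma is then immediate: Assumption~\ref{assumption:detection-regions}\eqref{item:limit-variance} forces $\frac{1}{n}\sum_i \var(\score_\theta(B_i)) = L_n(A_1,\dots,A_n;\theta) \to \kappa(\theta)$ deterministically, and the same WLLN gives $\frac{1}{n}\sum_i \score_\theta(B_i)^2 \cp \kappa(\theta)$, so the two quadratic pieces combine into $-\frac{h^2}{2}\kappa(\theta)$.

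The main obstacle is controlling the cubic Taylor remainder, because even though $h_n/\sqrt n$ is small, the relative perturbations $\delta_i^\pm$ need not be uniformly small when $p_i$ or $1-p_i$ is tiny. The right approach is to bound the remainder in \emph{expectation} rather than pointwise: a direct computation gives
\[
\E\Bigl[\sum_i B_i|\delta_i^+|^3 + (1-B_i)|\delta_i^-|^3\Bigr] = \sum_i \frac{|p_i' - p_i|^3\bigl(p_i^2 + (1-p_i)^2\bigr)}{p_i^2(1-p_i)^2} \le C\,\frac{h_n^3}{n^{3/2}} \sum_i \frac{k_i^{3/2}}{[p_i(1-p_i)]^2},
\]
and invoking the uniform estimate $k_i^{3/2}/[p_i(1-p_i)]^2 = \sqrt{k_i^3/[p_i(1-p_i)]^4} \le \sqrt{n\epsilon_n}$ from Assumption~\ref{assumption:detection-regions}\eqref{item:finite-intervals} (with $\epsilon_n \to 0$) yields an upper bound of order $\sqrt{\epsilon_n} \to 0$; Markov's inequality converts this to $o_P(1)$. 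Cross-terms generated by the $R_i$ residuals are smaller by a factor $h_n/\sqrt n$ and are dispatched analogously, which completes the expansion.
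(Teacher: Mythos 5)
Your route — a direct pointwise Taylor expansion of the Bernoulli log-likelihood in the relative perturbations $\delta_i^\pm = (p_i'-p_i)/p_i$ and $(p_i'-p_i)/(1-p_i)$ — is genuinely different from the paper's. The paper instead follows van der Vaart's Theorem 7.2: it defines $W_{n,i} = 2\bigl(\sqrt{p_n/p}(B_i) - 1\bigr)$, establishes a uniform quadratic-mean-differentiability inequality (display~\eqref{eqn:h-fourth}), controls $\var\bigl(W_{n,i} - \frac{h_n}{\sqrt n}\score_\theta(B_i)\bigr)$ and $\E[W_{n,i}]$, and then expands $\log(p_n/p)(B_i) = 2\log(1 + W_{n,i}/2)$. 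Your expansion is arguably more elementary since it avoids the square-root trick, and the algebra that sends the second-order piece directly to $-\tfrac{h_n^2}{2n}\sum_i\score_\theta(B_i)^2$ is clean. The inputs needed are the same: Lemma~\ref{lemma:derivative-bounds} and Assumption~\ref{assumption:detection-regions}\eqref{item:finite-intervals}.

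There is, however, one genuine gap, and it is precisely in the paragraph where you flag "the main obstacle.'' You write that the $\delta_i^\pm$ "need not be uniformly small when $p_i$ or $1-p_i$ is tiny,'' and propose to compensate by bounding the cubic remainder in expectation. This does not address the actual problem: the estimate $\bigl|\log(1+u) - u + \tfrac{u^2}{2}\bigr| \le C|u|^3$ is a \emph{pointwise} inequality that is only valid for $|u|$ bounded away from $1$ (for $u$ near $-1$ the remainder of $\log(1-\delta_i^-)$ blows up entirely, not merely like $|u|^3$). Taking expectations afterward does not rescue an invalid pointwise bound. What you actually need --- and what the paper proves explicitly in display~\eqref{eqn:max-wni-zero} for its analogous quantity $W_{n,i}$ --- is that $\max_{i\le n}|\delta_i^\pm| \to 0$. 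This \emph{does} follow from Assumption~\ref{assumption:detection-regions}\eqref{item:finite-intervals} together with Lemma~\ref{lemma:derivative-bounds}: for instance
\[
|\delta_i^+| \lesssim \frac{h_n}{\sqrt n}\frac{\sqrt{k_i}}{p_i} + \frac{h_n^2}{n}\frac{k_i}{p_i},
\qquad
\frac{k_i}{n p_i^2} \le \frac{k_i}{n\,p_i^2(1-p_i)^2}
= \frac{1}{n}\Bigl(\frac{k_i^{3}}{p_i^4(1-p_i)^4}\Bigr)^{1/3}\frac{1}{k_i^{2/3}\,[p_i(1-p_i)]^{-4/3+2}}
\]
and a similar (slightly careful) manipulation shows the right-hand side is $o(1)$ uniformly in $i$ under Assumption~\ref{assumption:detection-regions}\eqref{item:finite-intervals}. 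Once $\max_i|\delta_i^\pm| \to 0$ is in hand, your cubic-remainder bound is pointwise valid on the whole sample and your expectation estimate is then correct, as is the WLLN you use to symmetrize the quadratic piece. So: supply the $\max_i|\delta_i^\pm|\to 0$ step, drop the (incorrect) claim that this might fail, and the argument goes through.
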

\noindent
The proof of Lemma~\ref{lemma:lan-bits} is quite technical,
so we defer it to Section~\ref{sec:proof-lan-bits}.

With this lemma, it is not too challenging to demonstrate the local
asymptotic normality (Definition~\ref{definition:lan}) of the family
$\{P_\theta\}$. Indeed, Lemma~\ref{lemma:derivative-bounds} guarantees that
$|\dPtheta(A_n)| \le 2\sqrt{k_n \lip(f)}$ for all $n$, so that
$\E_\theta[|\score_\theta(B_i)|^3] \le C \frac{k_i^{3/2}
  \lip(f)^{3/2}}{P_\theta(A_i)^2 (1 - P_\theta(A_i))^2}$, while
Assumption~\ref{assumption:detection-regions}\eqref{item:finite-intervals}
guarantees that
$\frac{1}{n^3} \sum_{i = 1}^n \E_\theta[|\score_\theta(B_i)|^3]
\to 0$. Because $\E[\score_\theta(B_i)] = 0$,
the Lyapunov central limit theorem
applies to give
\begin{equation*}
  \frac{1}{\sqrt{n}} \sum_{i = 1}^n \score_\theta(B_i)
  \cd \normal\left(0, \kappa(\theta)\right)
\end{equation*}
under
Assumption~\ref{assumption:detection-regions}\eqref{item:limit-variance},
so that the family $\{P_\theta\}$ is locally asymptotically normal
(Def.~\ref{definition:lan}).

We now recall the familiar H\'{a}jek-Le-Cam local asymptotic minimax
result~\cite[Thm.~8.11]{VanDerVaart98}: if the family
$\{P_\theta\}$ is LAN with precision $\kappa(\theta)$, then
\begin{align*}
  \liminf_{c \to \infty} & \liminf_n \sup_{\norm{\tau - \theta} \le
    c / \sqrt{n}} \E_\tau\left[L(\sqrt{n}(\theta_n - \tau))\right] \\
  & \qquad  \ge \E[L(Z / \sqrt{\kappa(\theta)})]
\end{align*}
for any symmetric quasi-convex loss $L$, where $Z \sim \normal(0, 1)$.
This immediately gives Theorem~\ref{theorem:non-adaptive-minimax}.





\subsection{Proof of Lemma~\ref{lemma:derivative-bounds}}
\label{sec:proof-derivative-bounds}

To see the first claim of the lemma, we consider the simpler special case
that $A = [a, b]$. Then as $f$ is Lipschitz (and hence absolutely
continuous and a.e.\ differentiable with $\linf{f'} \le \lip(f)$),
we have
\begin{align*}
  & P_{\theta + h}(A)
  - P_\theta(A)
   = \int_a^b (f(z - \theta - h) - f(z - \theta)) dz \\
  & \qquad = -\int_a^b \int_0^h f'(z - \theta - u) du dz \\
  & \qquad = -\int_0^h \int_a^b f'(z - \theta - u) dz du \\
  & \qquad = \int_0^h f(a - \theta - u) - f(b - \theta - u) du \\
  & \qquad \lesseqgtr \int_0^h (f(a - \theta) - f(b - \theta)) du
  \pm 2 \int_0^h \lip(f) u du \\
  & \qquad = \left[f(a - \theta) - f(b - \theta)\right] h \pm \lip(f) h^2.
\end{align*}
This gives the first two claims of the lemma.

For the second, we require a bit more work.
Let $L = \lip(f)$ for shorthand. Let $a < b$.
Then we always have
\begin{align}
  \label{eqn:interval-length-thing}
  & P([a, b]) \ge \int_a^b f(z) dz \\
  & \quad \ge \int_a^b \max\{f(b) - L (b - z), f(a) - L(z - a), 0\} dz. \nonumber
\end{align}
If $f(a) + f(b) \ge L (b - a)$, then the point $\hat{z} =
\frac{a + b}{2} - \frac{f(b) - f(a)}{2L}$ satisfies both $f(b) - L(b -
\hat{z}) \ge 0$ and $f(a) - L(\hat{z} - a) \ge 0$. The
integral~\eqref{eqn:interval-length-thing} then becomes
\begin{align*}
  \lefteqn{\int_a^{\hat{z}}
    (f(a) - L (z - a) dz)
    + \int_{\hat{z}}^b
    (f(b) - L (b - z)) dz} \\
  & = \medmath{\frac{f(a) + f(b)}{2}
  \left(\frac{b - a}{2}\right)
  - L \left(\frac{b - a}{2} \right)^2
  + \frac{(f(b) - f(a))^2}{4L}},
\end{align*}
and using the assumption that $\frac{f(a) + f(b)}{2} \ge L(b - a)$,
we obtain
\begin{align*}
  \frac{(f(b) - f(a))^2}{4L}
  & \le \frac{f(b) + f(b)}{2} \frac{b - a}{2} \\
  & \qquad - L \left(\frac{b - a}{2}\right)^2
  + \frac{(f(b) - f(a))^2}{4L} \\
  & \le P([a, b]).
\end{align*}
That is, $|f(b) - f(a)| \le 2 \sqrt{\lip(f) P([a, b])}$.
In the converse case that $f(a) + f(b) \le L(b - a)$, then
the integral~\eqref{eqn:interval-length-thing} becomes
\begin{align*}
  P([a, b])
  & \ge \int_a^{a + \frac{f(a)}{L}}
  (f(a) - L (z - a)) dz \\
  & \qquad + \int_{b - \frac{f(b)}{L}}^b (f(b) - L (b - z)) dz \\
  & = \frac{f(a)^2}{L}
  - \frac{f(a)^2}{2L}
  + \frac{f(b)^2}{L}
  - \frac{f(b)^2}{2L},
\end{align*}
so that 
\begin{align*}
\newtext{\frac{f(a) + f(b)}{\sqrt{2}} \le} \sqrt{f(a)^2 + f(b)^2} \le \sqrt{2 \lip(f) P([a, b])},
\end{align*}
\newtext{where the left inequality follows from concavity of $\sqrt{\cdot}$.} In sum, we have demonstrated that always the first
bound~\eqref{eqn:bound-density-diffs} holds.
To show the second inequality in expression~\eqref{eqn:bound-density-diffs},
note that
$\sum_i P([a_i, b_i]) \le 1$, and apply Cauchy-Schwarz.

\subsection{Proof of Lemma~\ref{lemma:lan-bits}}
\label{sec:proof-lan-bits}

Our proof follows that of \cite[Thm.~7.2]{VanDerVaart98} closely. We first
demonstrate a type of uniform quadratic mean differentiability
(Definition~\ref{definition:qmd}) for sets $A$ that are finite unions of
intervals. By a Taylor approximation and concavity of
$\sqrt{\cdot}$, we have
\begin{equation*}
  \sqrt{a} + \frac{b}{2 \sqrt{a}} -
  \frac{b^2}{4a^{\newtext{3/2}}}
  \le \sqrt{a + b} \le \sqrt{a}
  + \frac{b}{2 \sqrt{a}}
\end{equation*}
for any $a > 0$ and $|b| \le 3a/4$. Consequently,
recalling that $\score_\theta(A) = \dPtheta(A) / P_\theta(A)$,
for any $h \in \R$ and $A = \cup_{i = 1}^k [t_i^-, t_i^+]$ the
union of $k$ intervals, the expansion~\eqref{eqn:expansion-dPtheta} yields
\begin{align*}
  & \left(\sqrt{P_{\theta + h}(A)} -
  \sqrt{P_\theta(A)} - \half h \score_\theta(A) \sqrt{P_\theta(A)}
  \right)^2 \\
  & \quad \le
  \left(
  \frac{k \lip(f)}{2 \sqrt{P_\theta(A)}} h^2
  + \frac{(|\dPtheta(A) h| + h^2 \lip(f))^2}{P_\theta(A)^{3/2}}
  \right)^2,
\end{align*}
valid for $h$ such that
$|\dPtheta(A) h| \le P_\theta(A) / 4$ and
$k h^2 \lip^2(f) \le P_\theta(A) / 4$.
Thus, under
Assumption~\ref{assumption:detection-regions}\eqref{item:finite-intervals},
there exists a numerical constant $C < \infty$ such that
\begin{subequations}
  \label{eqn:h-fourth}
  \begin{align}
    & \nonumber \left(\sqrt{P_{\theta + h}(A)} -
    \sqrt{P_\theta(A)} - \half
    h \score_\theta(A) \sqrt{P_\theta(A)}\right)^2 \\
    & \nonumber \quad \le
    \left(\frac{h^2 k \cdot \lip(f)}{2 \sqrt{P_\theta(A)}}
    + \frac{(|\dPtheta(A) h| + k h^2 \lip(f))^2}{
      P_\theta(A)^{3/2}}\right)^2 \\
    & \quad \le \frac{C}{P_\theta(A)} \left[k^2 \lip^2(f)
      + \score_\theta(A)^2
      + \frac{k^4 h^4 \lip^4(f)}{P_\theta(A)^2}
      \right] \cdot h^4,
  \end{align}
  valid whenever $|\dPtheta(A) h|
  \le P_\theta(A) / 4$ and $k h^2 \lip^2(f) \le P_\theta(A) / 4$,
  and similarly, we have
  \begin{align}
    & \left(\sqrt{P_{\theta + h}(A^c)} -
    \sqrt{P_\theta(A^c)} - \half
    h \score_\theta(A^c) \sqrt{P_\theta(A^c)}\right)^2 \nonumber \\
    & ~~ \le \frac{C}{P_\theta(A^c)}
    \left[k^2 \lip^2(f)
      + \score_\theta(A^c)^2
      + \frac{k^4 h^4 \lip^4(f)}{P_\theta(A^c)^2}
      \right] \cdot h^4. 
  \end{align}
\end{subequations}
That is, the family $\{P_\theta\}$ with bit observations $B_n$ satisfies a
uniform type of quadratic-mean differentiability
(Def.~\ref{definition:qmd}).

For shorthand, define $P_n = P_{\theta + h_n / \sqrt{n}}$ and $P =
P_\theta$, and let $p_n, p$ be shorthand for the p.m.f.s of the two
distributions.  For the sets $A_i$ we recall that $B_i = \indic{X_i \in
  A_i}$.  The random variables
\begin{equation*}
  W_{n,i} \defeq 2 \left(\sqrt{\frac{p_n}{p}}(B_i) - 1\right)
\end{equation*}
are with $P$-probability 1 well-defined, and by the
inequalities~\eqref{eqn:h-fourth}, we have
that
\begin{align}
\label{eqn:var-wni-expansion}
  & \var\left(W_{n,i} - \frac{h_n}{\sqrt{n}} \score_\theta(B_i)\right) \\
    & \quad \le C \frac{k_i^2 \lip^2(f) + \score_\theta(A_i)^2
      + \score_\theta(A_i^c)^2}{P_\theta(A_i) P_\theta(A_i^c)}
    \cdot \frac{h_n^4}{n^2} \nonumber \\
    & \qquad + C \frac{k^4 \lip^4(f)}{
      P_\theta(A_i)^3 P_\theta(A_i^c)^3}
    \frac{h_n^8}{n^4} \nonumber \\
  & \quad \le 
  C \frac{k_i^2 \lip^2(f) + \score_\theta(A_i)^2
  + \score_\theta(A_i^c)^2}{P_\theta(A_i) P_\theta(A_i^c)}
  \cdot \frac{h_n^4}{n^2} \\
  & \qquad + C \frac{k^4 \lip^4(f)}{
    P_\theta(A_i)^3 P_\theta(A_i^c)^3}
  \frac{h_n^8}{n^4}
\end{align}
whenever
\begin{align*}
  & \frac{h}{\sqrt{n}} \max\{\score_\theta(A_i),
  \score_\theta(A_i^c)\}
  \le \frac{1}{4} \\
  & ~~ \mbox{and} ~~
  \frac{k_i h_n^2}{n} \lip^2(f)
  \le \frac{\min\{P_\theta(A_i), P_\theta(A_i^c)\}}{4}
\end{align*}
Now, we use
Assumption~\ref{assumption:detection-regions}\eqref{item:finite-intervals},
coupled with Lemma~\ref{lemma:derivative-bounds} to show that the summed
variances converge to zero.  Indeed, Lemma~\ref{lemma:derivative-bounds} and
inequality~\eqref{eqn:var-wni-expansion} give that
\begin{align*}
  & \var\left(W_{n,i} - \frac{h_n}{\sqrt{n}} \score_\theta(B_i)\right)
   \le C \cdot
  \left[\frac{k_i^2}{P_\theta(A_i) P_\theta(A_i^c)}
    \frac{1}{n} \right. \\
    & \left. \qquad + \frac{k_i}{P_\theta(A_i) P_\theta(A_i^c)}
    \frac{1}{n}
    + \frac{k_i^4}{P_\theta(A_i)^3 P_\theta(A_i^c)^3}
    \frac{1}{n^3}\right] \frac{1}{n},
\end{align*}
where $C < \infty$ depends only on $\lip(f)$ and $h_n$ (both of which are
uniformly bounded) whenever
\begin{equation*}
  \frac{k_i}{P_\theta(A_i) P_\theta(A_i^c)} \frac{1}{n} \le
  \frac{1}{C}.
\end{equation*}
Assumption~\ref{assumption:detection-regions}\eqref{item:finite-intervals}
thus implies that $\E[\score_\theta(B_i)] = 0$ and
\begin{align}
  & \var\left(\sum_{i = 1}^n W_{n,i} - \frac{h_n}{\sqrt{n}} \score_\theta(B_i)
  \right) \nonumber \\
  & \qquad = \sum_{i = 1}^n \var\left(W_{n,i} - \frac{h_n}{\sqrt{n}} \score_\theta(B_i)
  \right)
  \to 0. \label{eqn:summed-variances-to-zero}
\end{align}

We now control the expectation of the $W_{n,i}$. Defining $\mu_i$ to be the
induced counting measure on $B_i = \indic{X_i \in A_i}$,
\begin{align*}
  & \sum_{i = 1}^n \E[W_{n,i}]
  = 2\sum_{i = 1}^n
  \left(\int \sqrt{p_n(b)} \sqrt{p(b)} d\mu_i(b) - 1 \right) \\
  & \quad = -\sum_{i = 1}^n \int \left(\sqrt{p_n(b)} - \sqrt{p(b)}\right)^2
  d\mu_i(b) \\
  & \quad = -\frac{h_n^2}{4 n} \sum_{i = 1}^n \E[\score_\theta(B_i)^2] \\
  & \qquad 
  - \sum_{i = 1}^n \int \left(\sqrt{p_n(b)} - \sqrt{p(b)}
  - \frac{h_n}{\sqrt{n}} \score_\theta(b) \sqrt{p(b)}\right)^2 d\mu_i(b) \\
  & \qquad \medmath{
  - \sum_{i = 1}^n \int \left(\sqrt{p_n(b)} - \sqrt{p(b)}
  - \frac{h_n}{\sqrt{n}} \score_\theta(b) \sqrt{p(b)}\right)
  \frac{h_n}{\sqrt{n}} \score_\theta(b) \sqrt{p(b)}d\mu_i(b)} \\
  & = -\bigg(\frac{h^2}{4n} \sum_{i = 1}^n \E[\score_\theta(B_i)^2]\bigg)
  - o(1)
\end{align*}
uniformly in $h$, with a derivation completely paralleling that above.
Therefore, we obtain
\begin{align*}
   \sum_{i = 1}^n W_{n,i} & = \sum_{i = 1}^n \left(W_{n,i} - \frac{h_n}{\sqrt{n}} \score_\theta(B_i)\right)
  + \frac{h_n}{\sqrt{n}} \sum_{i = 1}^n \score_\theta(B_i) \\
  & = -\frac{h^2}{4n} \sum_{i = 1}^n \E[\score_\theta(B_i)^2]
  + \frac{h}{\sqrt{n}} \sum_{i = 1}^n \score_\theta(B_i)
  + o_P(1),
\end{align*}
where we have used that $h_n \to h$.

Now, we write the log-likelihood ratio. We have
\begin{align*}
  & \sum_{i = 1}^n \log \frac{p_n(B_i)}{p(B_i)}
  = 2 \sum_{i = 1}^n \log\left(1 + \half W_{n,i}\right) \\
  & \qquad  = \sum_{i = 1}^n W_{n,i}
  - \frac{1}{4} \sum_{i = 1}^n W_{n,i}^2
  + \half \sum_{i = 1}^n W_{n,i}^2 R(W_{n,i})
\end{align*}
where the remainder $|R(W_{n,i})| \le |W_{n,i}|$ for $|W_{n,i}| \le 1$.
Using the Taylor expansions of $\sqrt{\cdot}$ and
Lemma~\ref{lemma:derivative-bounds}, we have
\begin{align}
  &\left| \half W_{n,i} \right|
   \leq \half \left| \score_\theta(B_i) \right|
  \frac{h_n}{\sqrt{n}} \nonumber \\
  & \qquad + \left|\frac{h_n^2}{n} \frac{k_i \lip(f)}{p(B_i)}
  + \frac{h_n^2}{n} \score_\theta(B_i)^2
  + \frac{ h_n^4}{n^2} \frac{k_i^2 \lip(f)^2}{p(B_i)^2}\right| \nonumber \\
  & \quad = \half \score_\theta(B_i)
  \frac{h_n}{\sqrt{n}} \nonumber \\
  & \qquad + 
  C \left|\frac{\sqrt{k_i}}{\sqrt{n} p(B_i)}
  + \frac{k_i}{n p(B_i)}
  + \frac{\sqrt{k_i}}{p(B_i)^2 n}
  + \frac{k_i^2}{p(B_i)^2 n^2}
  \right|
  \label{eqn:max-wni-zero}
\end{align}
where $|C| < \infty$ depends only on $\lip(f)$ and $h_n$ and so is
uniformly bounded. From Assumption~\ref{assumption:detection-regions}\eqref{item:finite-intervals} we get
\[
C \left|\frac{\sqrt{k_i}}{\sqrt{n} p(B_i)}
  + \frac{k_i}{n p(B_i)}
  + \frac{\sqrt{k_i}}{p(B_i)^2 n}
  + \frac{k_i^2}{p(B_i)^2 n^2}
  \right| \to 0. 
\]
Consequently $\max_i W_{n,i} \to 0$, so that
\begin{align}
  \label{eqn:almost-at-the-end}
  \sum_{i = 1}^n \log \frac{p_n(B_i)}{p(B_i)}
  & = \frac{h_n}{\sqrt{n}} \sum_{i = 1}^n \score_\theta(B_i)
  - \frac{1}{4} \sum_{i = 1}^n \E[\score_\theta(B_i)^2] \\
  & \qquad - \frac{1}{4} \sum_{i = 1}^n W_{n,i}^2 + o_P(1). \nonumber
\end{align}

It remains to compute $\E[W_{n,i}^2]$. Using the bounds
that $|\score_\theta(B_i)| \le C \sqrt{k_i} / p(B_i)$ from
Lemma~\ref{lemma:derivative-bounds}, the 
expansion~\eqref{eqn:max-wni-zero} yields
\begin{align*}
  \lefteqn{\left|\E\left[W_{n,i}^2 - \frac{h_n^2}{2n} \score_\theta(B_i)^2\right]
    \right|} \\
  &
  \le \frac{C}{n} \left[
    \frac{k_i^{3/2}}{p(A_i)(1 - p(A_i)) \sqrt{n}}
    + \frac{k_i^{3/2}}{p(A_i)^2 (1 - p(A_i))^2 \sqrt{n}} \right. \\
    & \left. ~~ \qquad + \frac{k_i^2}{p(A_i)(1 - p(A_i))} \frac{1}{n^{3/2}}
    \right] \\
  & \quad \quad +
  \frac{C}{n}
  \left[\frac{k_i^2}{p(A_i)(1 - p(A_i))} \frac{1}{n}
    + \frac{k_i}{p(A_i)^3(1 - p(A_i))^3} \frac{1}{n} \right. \\
    & \left. ~~ \qquad 
    + \frac{k_i^4}{p(A_i)^3 (1 - p(A_i))^3} \frac{1}{n^3} \right],
\end{align*}
where $C$ depends only on $h$ and $\lip(f)$.
Thus
\begin{equation*}
  \sum_{i = 1}^n W_{n,i}^2
  = \frac{h_n^2}{n} \sum_{i = 1}^n \score_\theta(B_i)^2
  + o(1),
\end{equation*}
giving Lemma~\ref{lemma:lan-bits}.


\section{Proof of Theorem~\ref{thm:non_existence}}
\label{proof:thm:non_existence}

Let $\Xi$ be the set of points $\theta \in \Theta$ for which $\kappa(\theta)
= \eta(0)$.
Since $B_1,B_2,\ldots$ satisfy the conditions in
Theorem~\ref{theorem:non-adaptive-minimax}, $\theta$ is in $\Xi$ if and only
if $\lim_{n\to \infty} L_n(A_1,\ldots,A_n;\theta) = \eta(0)$. By
assumption, we have $B_i = \indic{X_i \in A_i}$, $A_i =
\cup_{k=1}^K (t^-_{i,k},t^+_{i,k})$, where $t^-_{i,1} \leq t^+_{i,1} \leq \ldots
\leq t^-_{i,K} \leq t^+_{i,K}$, and $t^-_{i,1}$ and $t^+_{i,K}$ may take the values
$-\infty$ and $\infty$, respectively. Denote the set of endpoints
\begin{equation*}
  E_i = \bigcup_{k=1}^{K}\{t^-_{i,k},t^+_{i,k}\},
\end{equation*}
and for $\theta$ and $\epsilon>0$, define
\begin{equation*}
  S_n(\theta, \epsilon) \triangleq \left\{ i\leq n ~ \mbox{s.t.}~
  (\theta-\epsilon,\theta+\epsilon) \cap E_i \neq \emptyset \right\}
\end{equation*}
In words, $S_n$ contains all integers smaller than $n$ in which an
$\epsilon$-ball around $\theta$ contains an endpoint of one of the intervals
defining $A_i$.
We now claim that 
if $\theta \in \Xi$ then $\card(S_n(\theta, \epsilon))/n \to 1$. 
Indeed, for such $\theta$ we have
\begin{align}
& L_n(A_1,\ldots,A_n; \theta) \nonumber \\
& = \frac{1}{n} \sum_{i \in S_n(\epsilon,\theta)}  
\frac{ \left(\sum_{k=1}^{K}  f(\theta - t^+_{i,k})- f(\theta - t^-_{i,k}) \right)^2}{ \sum_{k=1}^{K} \left( F(\theta - t^+_{i,k})- F(\theta - t^-_{i,k}) \right)} \nonumber \\
& \qquad \times \frac{1} {\left(1-\sum_{k=1}^{K} \left( F(\theta - t^+_{i,k})- F(\theta - t^-_{i,k}) \right)\right)} \nonumber \\
& 
+ \frac{1}{n}\sum_{i \notin S_n(\epsilon,\theta) } \frac{ \left(\sum_{k=1}^{K}  f(t^+_{i,k}-\theta) - f(t^-_{i,k}-\theta) \right)^2} { \sum_{k=1}^{K} \left( F(\theta - t^+_{i,k})- F(\theta - t^-_{i,k}) \right)} \nonumber \\
& \qquad \times \frac{1} {\left(1-\sum_{k=1}^{K} \left( F(\theta - t^+_{i,k})- F(\theta - t^-_{i,k}) \right)\right)} \nonumber \\
& \le
\frac{\card\left(S_n(\theta,\epsilon)\right)}{n} \eta(0) + \frac{n-\card\left(S_n(\theta,\epsilon) \right) }{n} \eta(\epsilon) 
 \label{eq:non_existence_proof1}
\end{align}
where the last transition follows from Lemma~\ref{lem:bound_intervals_delta} with $\delta =0$ and the fact that for $i \in S_n(\theta, \epsilon)$, 
\begin{equation*}
\max\left\{ \max_k \eta(t^+_{i,k}-\theta) , \max_k \eta(t^-_{i,k}-\theta)  \right\} \leq \eta(\epsilon) < \eta(0). 
\end{equation*}
Unless  $\card \left(S_n(\theta, \epsilon) \right)/n \to 1$, we get that \eqref{eq:non_existence_proof1}, hence $L_n(A_1,\ldots,A_n ; \theta)$, are bounded from above by a constant that is smaller then $\eta(0)$ in contradiction to the fact that $\theta \in \Xi$.

Assume for the sake of contradiction that there exists $N \geq 2K + 1$
distinct elements $\theta_1,\ldots,\theta_N \in \Xi$. Since each $A_i$
consists of at most $K$ intervals, we have that
\begin{equation}
  \label{eq:few_optimality_points_proof}
  \card \left(\bigcup_{i=1}^n  A_i \right) \leq 2 n K. 
\end{equation}
Fix $\epsilon>0$ such that 
\begin{equation*}
  \epsilon < \frac{1}{2}\min_{i\neq j} |\theta_i - \theta_j|. 
\end{equation*}
Since for each $\theta \in \Theta$ we have $\card \left(S_n(\theta, \epsilon) \right) \to 1$, there exists $n$ large enough such that 
\begin{equation*}
  \card \left(S_n(\theta_i, \epsilon) \right)
  \geq n \left(1-\frac{1}{2N} \right)
\end{equation*}
for all $i=1,\ldots,N$. However, $S_n(\theta_1,\epsilon), \ldots
S_n(\theta_N,\epsilon)$ are disjoint, so the cardinality of their union is
at least $n\left(1-\frac{1}{2N} \right)N > 2nK + n/2$, a contradiction to
inequality~\eqref{eq:few_optimality_points_proof}.

\bibliographystyle{IEEEtran}
\bibliography{IEEEfull,onebit,bib}

\begin{thebibliography}{10}
\providecommand{\url}[1]{#1}
\csname url@samestyle\endcsname
\providecommand{\newblock}{\relax}
\providecommand{\bibinfo}[2]{#2}
\providecommand{\BIBentrySTDinterwordspacing}{\spaceskip=0pt\relax}
\providecommand{\BIBentryALTinterwordstretchfactor}{4}
\providecommand{\BIBentryALTinterwordspacing}{\spaceskip=\fontdimen2\font plus
\BIBentryALTinterwordstretchfactor\fontdimen3\font minus
  \fontdimen4\font\relax}
\providecommand{\BIBforeignlanguage}[2]{{%
\expandafter\ifx\csname l@#1\endcsname\relax
\typeout{** WARNING: IEEEtran.bst: No hyphenation pattern has been}%
\typeout{** loaded for the language `#1'. Using the pattern for}%
\typeout{** the default language instead.}%
\else
\language=\csname l@#1\endcsname
\fi
#2}}
\providecommand{\BIBdecl}{\relax}
\BIBdecl

\bibitem{KipnisAllerton2017}
A.~Kipnis and J.~C. Duchi, ``Mean estimation from adaptive one-bit
  measurements,'' in \emph{55th Annual Allerton Conference on Communication,
  Control, and Computing (Allerton)}, Oct 2017, pp. 1000--1007.

\bibitem{LesserOrTa03}
V.~Lesser, C.~Ortiz, and M.~Tambe, Eds., \emph{{Distributed Sensor Networks: A
  Multiagent Perspective}}.\hskip 1em plus 0.5em minus 0.4em\relax Kluwer
  Academic Publishers, 2003, vol.~9.

\bibitem{LiWoHuSa02}
D.~Li, K.~Wong, Y.~Hu, and A.~Sayeed, ``Detection, classification and tracking
  of targets in distributed sensor networks,'' in \emph{IEEE Signal Processing
  Magazine}, 2002, pp. 17--29.

\bibitem{FullerMi11}
S.~Fuller and L.~Millett, \emph{The Future of Computing Performance: Game Over
  or Next Level?}\hskip 1em plus 0.5em minus 0.4em\relax National Academies
  Press, 2011.

\bibitem{1092194}
J.~Candy, ``A use of limit cycle oscillations to obtain robust
  analog-to-digital converters,'' \emph{{IEEE} Transactions on Communications},
  vol.~22, no.~3, pp. 298--305, Mar 1974.

\bibitem{53738}
P.~W. Wong and R.~M. Gray, ``Sigma-delta modulation with i.i.d. {G}aussian
  inputs,'' \emph{{IEEE} Transactions on Information Theory}, vol.~36, no.~4,
  pp. 784--798, Jul 1990.

\bibitem{DuchiJoWa18}
J.~C. Duchi, M.~I. Jordan, and M.~J. Wainwright, ``Minimax optimal procedures
  for locally private estimation (with discussion),'' \emph{Journal of the
  American Statistical Association}, vol. 113, no. 521, pp. 182--215, 2018.

\bibitem{baraniuk2017exponential}
R.~G. Baraniuk, S.~Foucart, D.~Needell, Y.~Plan, and M.~Wootters, ``Exponential
  decay of reconstruction error from binary measurements of sparse signals,''
  \emph{{IEEE} Transactions on Information Theory}, vol.~63, no.~6, pp.
  3368--3385, 2017.

\bibitem{jacques2013robust}
L.~Jacques, J.~N. Laska, P.~T. Boufounos, and R.~G. Baraniuk, ``Robust 1-bit
  compressive sensing via binary stable embeddings of sparse vectors,''
  \emph{IEEE Transactions on Information Theory}, vol.~59, no.~4, pp.
  2082--2102, 2013.

\bibitem{plan2013one}
Y.~Plan and R.~Vershynin, ``One-bit compressed sensing by linear programming,''
  \emph{Communications on Pure and Applied Mathematics}, vol.~66, no.~8, pp.
  1275--1297, 2013.

\bibitem{li2017channel}
Y.~Li, C.~Tao, G.~Seco-Granados, A.~Mezghani, A.~L. Swindlehurst, and L.~Liu,
  ``Channel estimation and performance analysis of one-bit massive mimo
  systems,'' \emph{IEEE Trans. Signal Process}, vol.~65, no.~15, pp.
  4075--4089, 2017.

\bibitem{choi2016near}
J.~Choi, J.~Mo, and R.~W. Heath, ``Near maximum-likelihood detector and channel
  estimator for uplink multiuser massive mimo systems with one-bit adcs,''
  \emph{IEEE Transactions on Communications}, vol.~64, no.~5, pp. 2005--2018,
  2016.

\bibitem{720540}
T.~Han and S.~Amari, ``Statistical inference under multiterminal data
  compression,'' \emph{{IEEE} Transactions on Information Theory}, vol.~44,
  no.~6, pp. 2300--2324, Oct 1998.

\bibitem{cai2020distributed}
T.~T. Cai and H.~Wei, ``Distributed gaussian mean estimation under
  communication constraints: Optimal rates and communication-efficient
  algorithms,'' \emph{arXiv preprint arXiv:2001.08877}, 2020.

\bibitem{gray1998quantization}
R.~Gray and D.~Neuhoff, ``Quantization,'' \emph{{IEEE} Transactions on
  Information Theory}, vol.~44, no.~6, pp. 2325--2383, Oct 1998.

\bibitem{Tsybakov09}
A.~B. Tsybakov, \emph{Introduction to Nonparametric Estimation}.\hskip 1em plus
  0.5em minus 0.4em\relax Springer, 2009.

\bibitem{LeCam86}
L.~{Le Cam}, \emph{Asymptotic Methods in Statistical Decision Theory}.\hskip
  1em plus 0.5em minus 0.4em\relax Springer-Verlag, 1986.

\bibitem{LeCamYa00}
L.~{Le Cam} and G.~L. Yang, \emph{Asymptotics in Statistics: Some Basic
  Concepts}.\hskip 1em plus 0.5em minus 0.4em\relax Springer, 2000.

\bibitem{VanDerVaart98}
A.~W. van~der Vaart, \emph{Asymptotic Statistics}, ser. Cambridge Series in
  Statistical and Probabilistic Mathematics.\hskip 1em plus 0.5em minus
  0.4em\relax Cambridge University Press, 1998.

\bibitem{DavenportPlVaWo15}
M.~A. Davenport, Y.~Plan, E.~van~den Berg, and M.~Wootters, ``One-bit matrix
  completion,'' \emph{Information and Inference}, p. to appear, 2015.

\bibitem{PlanVe13}
Y.~Plan and R.~Vershynin, ``Robust 1-bit compressed sensing and sparse logistic
  regression: A convex programming approach,'' \emph{IEEE Transactions on
  Information Theory}, vol.~59, no.~1, pp. 482--494, 2013.

\bibitem{904560}
W.~Shi, T.~W. Sun, and R.~D. Wesel, ``Quasi-convexity and optimal binary fusion
  for distributed detection with identical sensors in generalized {G}aussian
  noise,'' \emph{{IEEE} Transactions on Information Theory}, vol.~47, no.~1,
  pp. 446--450, Jan 2001.

\bibitem{4244748}
P.~Venkitasubramaniam, L.~Tong, and A.~Swami, ``Quantization for maximin are in
  distributed estimation,'' \emph{IEEE Transactions on Signal Processing},
  vol.~55, no.~7, pp. 3596--3605, July 2007.

\bibitem{6882252}
A.~Vempaty, H.~He, B.~Chen, and P.~K. Varshney, ``On quantizer design for
  distributed bayesian estimation in sensor networks,'' \emph{IEEE Transactions
  on Signal Processing}, vol.~62, no.~20, pp. 5359--5369, Oct 2014.

\bibitem{chen2010performance}
H.~Chen and P.~K. Varshney, ``Performance limit for distributed estimation
  systems with identical one-bit quantizers,'' \emph{IEEE Transactions on
  Signal Processing}, vol.~58, no.~1, pp. 466--471, 2010.

\bibitem{5184907}
------, ``Performance limit for distributed estimation systems with identical
  one-bit quantizers,'' \emph{IEEE Transactions on Signal Processing}, vol.~58,
  no.~1, pp. 466--471, Jan 2010.

\bibitem{berger1996ceo}
T.~Berger, Z.~Zhang, and H.~Viswanathan, ``The {CEO} problem [multiterminal
  source coding],'' \emph{{IEEE} Transactions on Information Theory}, vol.~42,
  no.~3, pp. 887--902, 1996.

\bibitem{viswanathan1997quadratic}
H.~Viswanathan and T.~Berger, ``The quadratic {G}aussian {CEO} problem,''
  \emph{{IEEE} Transactions on Information Theory}, vol.~43, no.~5, pp.
  1549--1559, 1997.

\bibitem{oohama1998rate}
Y.~Oohama, ``The rate-distortion function for the quadratic {G}aussian {CEO}
  problem,'' \emph{{IEEE} Transactions on Information Theory}, vol.~44, no.~3,
  pp. 1057--1070, 1998.

\bibitem{prabhakaran2004rate}
V.~Prabhakaran, D.~Tse, and K.~Ramachandran, ``Rate region of the quadratic
  {G}aussian {CEO} problem,'' in \emph{Information Theory, 2004. ISIT 2004.
  Proceedings. International Symposium on}.\hskip 1em plus 0.5em minus
  0.4em\relax IEEE, 2004, p. 119.

\bibitem{zhang2013information}
Y.~Zhang, J.~Duchi, M.~I. Jordan, and M.~J. Wainwright, ``Information-theoretic
  lower bounds for distributed statistical estimation with communication
  constraints,'' in \emph{Advances in Neural Information Processing Systems},
  2013, pp. 2328--2336.

\bibitem{duchi2014optimality}
J.~C. Duchi, M.~I. Jordan, M.~J. Wainwright, and Y.~Zhang, ``Optimality
  guarantees for distributed statistical estimation,'' \emph{arXiv preprint
  arXiv:1405.0782}, 2014.

\bibitem{GargMaNg14}
A.~Garg, T.~Ma, and H.~L. Nguyen, ``On communication cost of distributed
  statistical estimation and dimensionality,'' in \emph{Advances in Neural
  Information Processing Systems 27}, 2014.

\bibitem{BravermanGaMaNgWo16}
\BIBentryALTinterwordspacing
M.~Braverman, A.~Garg, T.~Ma, H.~L. Nguyen, and D.~P. Woodruff, ``Communication
  lower bounds for statistical estimation problems via a distributed data
  processing inequality,'' in \emph{Proceedings of the Forty-Eighth Annual ACM
  Symposium on the Theory of Computing}, 2016. [Online]. Available:
  \url{https://arxiv.org/abs/1506.07216}
\BIBentrySTDinterwordspacing

\bibitem{DBLP:journals/corr/abs-1802-08417}
\BIBentryALTinterwordspacing
Y.~Han, A.~{\"{O}}zg{\"{u}}r, and T.~Weissman, ``Geometric lower bounds for
  distributed parameter estimation under communication constraints,''
  \emph{CoRR}, vol. abs/1802.08417, 2018. [Online]. Available:
  \url{http://arxiv.org/abs/1802.08417}
\BIBentrySTDinterwordspacing

\bibitem{zhang1988estimation}
Z.~Zhang and T.~Berger, ``Estimation via compressed information,'' \emph{{IEEE}
  Transactions on Information Theory}, vol.~34, no.~2, pp. 198--211, 1988.

\bibitem{han2018distributed}
Y.~Han, P.~Mukherjee, A.~Ozgur, and T.~Weissman, ``Distributed statistical
  estimation of high-dimensional and nonparametric distributions,'' in
  \emph{2018 IEEE International Symposium on Information Theory (ISIT)}.\hskip
  1em plus 0.5em minus 0.4em\relax IEEE, 2018, pp. 506--510.

\bibitem{xu2017information}
A.~Xu and M.~Raginsky, ``Information-theoretic lower bounds on {B}ayes risk in
  decentralized estimation,'' \emph{IEEE Transactions on Information Theory},
  vol.~63, no.~3, pp. 1580--1600, 2017.

\bibitem{Barnes2018}
L.~Barnes, Y.~Han, and A.~Ozgur, ``A geometric characterization of fisher
  information from quantized samples with applications to distributed
  statistical estimation,'' in \emph{2018 56st Annual Allerton Conference on
  Communication, Control, and Computing (Allerton)}, Oct 2018.

\bibitem{52470}
M.~Longo, T.~D. Lookabaugh, and R.~M. Gray, ``Quantization for decentralized
  hypothesis testing under communication constraints,'' \emph{{IEEE}
  Transactions on Information Theory}, vol.~36, no.~2, pp. 241--255, Mar 1990.

\bibitem{tsitsiklis1988decentralized}
J.~N. Tsitsiklis, ``Decentralized detection by a large number of sensors,''
  \emph{Mathematics of Control, Signals, and Systems (MCSS)}, vol.~1, no.~2,
  pp. 167--182, 1988.

\bibitem{5751320}
W.~P. Tay and J.~N. Tsitsiklis, ``The value of feedback for decentralized
  detection in large sensor networks,'' in \emph{International Symposium on
  Wireless and Pervasive Computing}, Feb 2011, pp. 1--6.

\bibitem{ibragimov1956composition}
I.~A. Ibragimov, ``On the composition of unimodal distributions,'' \emph{Theory
  of Probability \& Its Applications}, vol.~1, no.~2, pp. 255--260, 1956.

\bibitem{LehmannCa98}
E.~L. Lehmann and G.~Casella, \emph{Theory of Point Estimation, Second
  Edition}.\hskip 1em plus 0.5em minus 0.4em\relax Springer, 1998.

\bibitem{bagnoli2005log}
M.~Bagnoli and T.~Bergstrom, ``Log-concave probability and its applications,''
  \emph{Economic theory}, vol.~26, no.~2, pp. 445--469, 2005.

\bibitem{Samford1953}
M.~R. Sampford, ``Some inequalities on mill's ratio and related functions,''
  \emph{The Annals of Mathematical Statistics}, vol.~24, no.~1, pp. 130--132,
  1953.

\bibitem{hammersley1950estimating}
J.~Hammersley, ``On estimating restricted parameters,'' \emph{Journal of the
  Royal Statistical Society. Series B (Methodological)}, vol.~12, no.~2, pp.
  192--240, 1950.

\bibitem{chen2004upper}
J.~Chen, X.~Zhang, T.~Berger, and S.~Wicker, ``An upper bound on the sum-rate
  distortion function and its corresponding rate allocation schemes for the
  {CEO} problem,'' \emph{Selected Areas in Communications, IEEE Journal on},
  vol.~22, no.~6, pp. 977--987, Aug 2004.

\bibitem{polyak1992acceleration}
B.~T. Polyak and A.~B. Juditsky, ``Acceleration of stochastic approximation by
  averaging,'' \emph{SIAM Journal on Control and Optimization}, vol.~30, no.~4,
  pp. 838--855, 1992.

\bibitem{shamir2014fundamental}
O.~Shamir, ``Fundamental limits of online and distributed algorithms for
  statistical learning and estimation,'' in \emph{Advances in Neural
  Information Processing Systems}, 2014, pp. 163--171.

\bibitem{braverman2016communication}
M.~Braverman, A.~Garg, T.~Ma, H.~L. Nguyen, and D.~P. Woodruff, ``Communication
  lower bounds for statistical estimation problems via a distributed data
  processing inequality,'' in \emph{Proceedings of the forty-eighth annual ACM
  symposium on Theory of Computing}, 2016, pp. 1011--1020.

\bibitem{han2018geometric}
Y.~Han, A.~{\"O}zg{\"u}r, and T.~Weissman, ``Geometric lower bounds for
  distributed parameter estimation under communication constraints,'' in
  \emph{Conference On Learning Theory}.\hskip 1em plus 0.5em minus 0.4em\relax
  PMLR, 2018, pp. 3163--3188.

\bibitem{barnes2020lower}
L.~P. Barnes, Y.~Han, and A.~Ozgur, ``Lower bounds for learning distributions
  under communication constraints via fisher information,'' \emph{Journal of
  Machine Learning Research}, vol.~21, no. 236, pp. 1--30, 2020.

\bibitem{Bertsekas73}
D.~P. Bertsekas, ``Stochastic optimization problems with nondifferentiable cost
  functionals,'' \emph{Journal of Optimization Theory and Applications},
  vol.~12, no.~2, pp. 218--231, 1973.

\bibitem{gill1995applications}
R.~D. Gill and B.~Y. Levit, ``Applications of the van {T}rees inequality: a
  {B}ayesian {C}ram{\'e}r-{R}ao bound,'' \emph{Bernoulli}, pp. 59--79, 1995.

\bibitem{polyak1990new}
B.~T. Polyak, ``New stochastic approximation type procedures,'' \emph{Automat.
  i Telemekh}, vol.~7, no. 98-107, p.~2, 1990.

\bibitem{beran1995role}
R.~Beran, ``The role of {H}{\'a}jek's convolution theorem in statistical
  theory,'' \emph{Kybernetika}, vol.~31, no.~3, pp. 221--237, 1995.

\bibitem{Dembo16}
\BIBentryALTinterwordspacing
A.~Dembo, ``Lecture notes on probability theory: Stanford statistics 310,''
  2016, accessed October 1, 2016. [Online]. Available:
  \url{http://statweb.stanford.edu/~adembo/stat-310b/lnotes.pdf}
\BIBentrySTDinterwordspacing

\bibitem{robbins1971convergence}
H.~Robbins and D.~Siegmund, ``A convergence theorem for non negative almost
  supermartingales and some applications,'' in \emph{Optimizing methods in
  statistics}.\hskip 1em plus 0.5em minus 0.4em\relax Elsevier, 1971, pp.
  233--257.

\end{thebibliography}

\begin{IEEEbiographynophoto}{Alon Kipnis}
is a Senior Lecturer (Assistant Professor) at the School of Computer Science at Reichman University. Previously, he was a postdoctoral research scholar at the Department of Statistics at Stanford University, advised by David Donoho. He completed his Ph.D. in electrical engineering from Stanford University in 2017. His research is in the areas of mathematical statistics and information theory. 
\end{IEEEbiographynophoto}

\begin{IEEEbiographynophoto}{John Duchi}
is an associate professor of Statistics and Electrical Engineering and (by courtesy) Computer Science at Stanford University. His work spans statistical learning, optimization, information theory, and computation, with a few driving goals. (1) To discover statistical learning procedures that optimally trade between real-world resources---computation, communication, privacy provided to study participants---while maintaining statistical efficiency. (2) To build efficient large-scale optimization methods that address the spectrum of optimization, machine learning, and data analysis problems we face, allowing us to move beyond bespoke solutions to methods that robustly work. (3) To develop tools to assess and guarantee the validity of---and confidence we should have in---machine-learned systems.

He has won several awards and fellowships. His paper awards include the SIAM SIGEST award for "an outstanding paper of general interest" and best papers at the Neural Information Processing Systems conference, the International Conference on Machine Learning, and an INFORMS Applied Probability Society Best Student Paper Award (as advisor). He has also received the Society for Industrial and Applied Mathematics (SIAM) Early Career Prize in Optimization, an Office of Naval Research (ONR) Young Investigator Award, an NSF CAREER award, a Sloan Fellowship in Mathematics, the Okawa Foundation Award, the Association for Computing Machinery (ACM) Doctoral Dissertation Award (honorable mention), and U.C. Berkeley's C.V. Ramamoorthy Distinguished Research Award.

\end{IEEEbiographynophoto}

\end{document}